\newcommand\Cb{\mathbb{C}}
\newcommand\Zb{\mathbb{Z}}
\newcommand\Pb{\mathbb{P}}
\newcommand{\cN}{{\mathcal{N}}}
\newcommand{\bs}{\boldsymbol}
\newcommand{\la}{\langle} \newcommand{\ra}{\rangle}
\def\mf{\mathfrak}
\numberwithin{equation}{section}
\newtheorem{Theorem}{Theorem}[section]
\newtheorem{Lemma}[Theorem]{Lemma}
\newtheorem{Proposition}[Theorem]{Proposition}
 { \theoremstyle{definition}

\newtheorem{Remark}[Theorem]{Remark}
 }
\begin{document}
\allowdisplaybreaks

\newcommand{\arXivNumber}{1806.08650}

\renewcommand{\thefootnote}{}

\renewcommand{\PaperNumber}{123}

\FirstPageHeading

\ShortArticleName{On Solutions of the Fuji--Suzuki--Tsuda System}

\ArticleName{On Solutions of the Fuji--Suzuki--Tsuda System\footnote{This paper is a~contribution to the Special Issue on Painlev\'e Equations and Applications in Memory of Andrei Kapaev. The full collection is available at \href{https://www.emis.de/journals/SIGMA/Kapaev.html}{https://www.emis.de/journals/SIGMA/Kapaev.html}}}

\Author{Pavlo GAVRYLENKO~$^{\dag^1\dag^2\dag^3}$, Nikolai IORGOV~$^{\dag^1\dag^4}$ and Oleg LISOVYY~$^{\dag^5}$}

\AuthorNameForHeading{P.~Gavrylenko, N.~Iorgov and O.~Lisovyy}

\Address{$^{\dag^1}$~Bogolyubov Institute for Theoretical Physics, 03143 Kyiv, Ukraine}
\EmailDD{\href{mailto:pasha145@gmail.com}{pasha145@gmail.com}, \href{mailto:iorgov@bitp.kiev.ua}{iorgov@bitp.kiev.ua}}

\Address{$^{\dag^2}$~Center for Advanced Studies, Skolkovo Institute of Science and Technology,\\
\hphantom{$^{\dag^2}$}~143026 Moscow, Russia}

\Address{$^{\dag^3}$~National Research University Higher School of Economics, International Laboratory\\
\hphantom{$^{\dag^3}$}~of Representation Theory and Mathematical Physics, Moscow, Russia}

\Address{$^{\dag^4}$~Kyiv Academic University, 36 Vernadsky Ave., 03142 Kyiv, Ukraine}

\Address{$^{\dag^5}$~Institut Denis-Poisson, Universit\'e de Tours, Parc de Grandmont, 37200 Tours, France}
\EmailDD{\href{mailto:lisovyi@lmpt.univ-tours.fr}{lisovyi@lmpt.univ-tours.fr}}

\ArticleDates{Received June 22, 2018, in final form October 30, 2018; Published online November 11, 2018}

\Abstract{We derive Fredholm determinant and series representation of the tau function of the Fuji--Suzuki--Tsuda system and its multivariate extension, thereby generalizing to higher rank the results obtained for Painlev\'e~VI and the Garnier system. A special case of our construction gives a higher rank analog of the continuous hypergeometric kernel of Borodin and Olshanski. We also initiate the study of algebraic braid group dynamics of semi-degenerate monodromy, and obtain as a byproduct a direct isomonodromic proof of the AGT-W relation for ${c=N-1}$.}

\Keywords{isomonodromic deformations; Painlev\'e equations; Fredholm determinants}

\Classification{33E17; 34M55; 34M56}

\renewcommand{\thefootnote}{\arabic{footnote}}
\setcounter{footnote}{0}

\section{Introduction}

The purpose of this note is to initiate a systematic study of rank $N$ Fuji--Suzuki--Tsuda system, abbreviated below as FST$_N$. This Hamiltonian system of nonlinear non-autonomous ODEs first appeared as a particular reduction of the Drinfeld--Sokolov hierarchy \cite{FS,Suzuki1}, and independently in \cite{Tsuda0} as a reduction of the universal character hierarchy. Its fundamental significance comes from the isomonodromic theory~\cite{FIKN}, where it describes deformations of rank $N$ Fuchsian systems with 4 regular singular points, 2 of which have special spectral type $(N-1,1)$ \cite{Tsuda0,Tsuda}. Following~\cite{GIL18}, we refer to such linear systems as semi-degenerate.

For $N=2$, the spectral profile remains unconstrained and the corresponding FST$_2$ system is equivalent to the sixth Painlev\'e equation (PVI). For general $N$, the dimension of the phase space of FST$_N$ is $2(N-1)$, which is to be compared with the dimension $2(N-1)^2$ of the system of deformation equations for generic 4-point Fuchsian case. The FST$_N$ system is thus the closest relative of PVI in higher rank. It generalizes PVI in the direction different from the much studied Garnier system, which corresponds to increasing the number of singular points while keeping fixed the rank $N=2$ of the associated linear problem.
Loosely speaking, going from PVI to FST$_N$ is a nonlinear counterpart of the generalization of the Gauss $_2F_1$ to Clausen--Thomae $_N F_{N-1}$ hypergeometric function; the Garnier system would correspond to multivariate functions of Lauricella type. This analogy is exhibited already at the level of special function solutions but in fact it goes much further: we will see that the general Fredholm determinant solution of FST$_N$ can be constructed from the fundamental solutions of two auxiliary linear $_N F_{N-1}$-systems.

Besides serving as a model example for isomonodromic deformations in higher rank, the FST$_N$ system appears in a number of applications. Its tau function is given by a Fourier transform of 4-point semi-degenerate conformal blocks of the Toda CFT with central charge \smash{$c=N-1$}~\cite{GIL18}. By the AGT-W correspondence \cite{AGT,FL3,MM,Wyl}, it thus coincides with the dual instanton partition of $\mathcal N=2$ $\mathrm{U}(N)$ gauge theory with $N$ fundamental and $N$ anti-fundamental matter hypermultiplets in the self-dual $\Omega$-background. Quantized FST$_N$ system also appears in this context~\cite{Yamada}.

On the other hand, little is known about the actual solutions of FST$_N$. The present paper begins to explore an agenda of related issues, to a large extent determined by the Painlev\'e~VI state-of-the-art. Our main tool is the Riemann--Hilbert correspondence which translates the questions on solutions of the nonlinear isomonodromic system in terms of monodromy of the associated linear problem. For example, all known PVI solutions expressible in terms of elementary or classical special functions can be divided into 3 (overlapping) categories:
\begin{itemize}\itemsep=0pt
\item \textit{Riccati solutions}. They correspond to monodromy representations generated from the reducible ones by Painlev\'e VI affine symmetry transformations \cite{Okamoto1}. The Riccati tau functions are expressed as finite determinants with hypergeometric $_2F_1$-entries. Their FST$_N$ analogs were investigated in \cite{ManoTsuda, Suzuki2,Tsuda2} and similarly involve $_NF_{N-1}$.
\item \textit{Picard solutions}. These solutions \cite{DIKZ,KK,Mazzocco} are expressed in terms of elliptic functions, with the relevant monodromy matrices given by quasi-permutations. Their higher rank gene\-ra\-lization was introduced in \cite{Korotkin} without imposing an extra condition of semi-degenerate monodromy. There is, however, a simple instance of quasi-permutation monodromy given by permutation of only two basis vectors. Its spectrum $\{1,\ldots,1,-1\}$ satisfies the semi-degeneracy condition. An interesting class of algebro-geometric FST$_N$ solutions is thus associated to monodromy corresponding to 2 elementary permutations and 2 arbitrary quasi-permutations.
\item \textit{Algebraic solutions}. This class corresponds to finite orbits of an action of the pure braid group~$\mathcal P_3$ on the PVI monodromy manifold~\cite{DM}. The classification of such orbits was accomplished in~\cite{LT}. In Section~\ref{sec_BG} of the present work, we describe the algebraic braid group dynamics on FST$_N$ monodromy data and set up a classification problem to be solved.
\end{itemize}

The solution of Painlev\'e VI corresponding to monodromy in general position was expressed as an explicit Fredholm determinant in~\cite{GL16}; see~\cite{CGL} for substantially simplified proof. The relevant integral operator acts on $\mathbb C^2\otimes L^2\big(S^1\big)$ and its integrable kernel is written in terms of solutions of two hypergeometric systems. The determinant expansion further yields a series representations for the PVI tau function. The present work extends both determinant and series representations to the case of the FST$_N$ system (Sections~\ref{secdet} and~\ref{secseries}).

When one of the auxiliary systems has reducible monodromy and one of the relevant monodromy matrices generates a nilpotent subgroup of $\mathrm{SL}(2,\Cb)$, the block integral kernel of \cite[Theorem~A]{GL16} can be reduced to a scalar continuous hypergeometric kernel on an interval. The corresponding PVI solution first appeared in \cite{BD}; it generalizes the well-known sine- and Airy-kernel Painlev\'e~V and~II transcendents. We obtain an FST$_N$ analog of this solution in Section~\ref{sec_fred}. It would be interesting to understand whether, similarly to the $N=2$ case \cite{BO}, it plays a role in the harmonic analysis on infinite groups such as $\mathrm{U}(\infty)$.

The last section of this manuscript is devoted to a multivariate generalization of the FST$_N$ system, denoted by $\mathcal G_{N,n-3}$ in \cite{Tsuda}. The corresponding Fuchsian system has $n$ regular singular points, of which all but 2 have semi-degenerate spectral type $(N-1,1)$. We present the Fredholm determinant and series representation of the corresponding tau function and identify the coefficients of the latter with Nekrasov functions \cite{Nekrasov} for $\mathcal N=2$ $\mathrm{U}(N)$ linear quiver gauge theory. In combination with the results of \cite{GIL18}, this gives a direct proof of the AGT-W correspondence for $c=N-1$ (Appendix~\ref{appendixA}).

\section[Semi-degenerate Fuchsian system and FST$_N$]{Semi-degenerate Fuchsian system and FST$\boldsymbol{{}_N}$}
This section explains the relation of FST$_N$ system and monodromy preserving deformations. The relevant results are essentially extracted from \cite{Tsuda}.

Let $G=\mathrm{GL}(N,\Cb)$. Consider a Fuchsian system with 4 regular singular points on $\Cb\Pb^1$,
\begin{gather}\label{lis}
\partial_z\Phi=\Phi A(z),\qquad A(z)=\sum_{\nu=0,1,t}\frac{A_{\nu}}{z-\nu}, \qquad \text{with} \quad A_{0,1,t}\in \operatorname{Mat}_{N\times N}(\Cb).
\end{gather}
Denote $A_{\infty}:=-\sum\limits_{\nu=0,1,t}A_{\nu}$. Any locally defined fundamental matrix solution $\Phi(z)$ can be analytically continued to any simply connected domain in $\mathbb C\backslash \{0,1,t \}$. Its global analytic properties are encoded in the monodromy representation $m\in \operatorname{Hom}(\pi_1(\mathbb C\backslash \{0,1,t \}),G)$ associated to~\eqref{lis}. Choosing the generators $\gamma_{0,1,t,\infty}$ of the fundamental group as shown in Fig.~\ref{figg1}, the monodromy of~$\Phi(z)$ is described by their 4 images $M_{\nu}=m(\gamma_{\nu})$ which satisfy $M_0M_tM_1M_{\infty}=\mathds 1$. The freedom to choose an arbitrary basis of solutions means that the linear system~\eqref{lis} only defines the conjugacy class of monodromy.
\begin{figure}[h!]\centering\vspace{-9mm}
\begin{tikzpicture}[baseline,yshift=-0.5cm,scale=0.45]
\draw (-4,0) .. controls (-1,0.5) and (-0.5,1) .. (0,1)
.. controls +(0.5,0) and +(0,1) .. (1,0);
\draw[->,>=latex] (-4,0) .. controls (-1,-0.5) and (-0.5,-1) .. (0.2,-1);
\draw (0,-1)
.. controls +(0.5,0) and +(0,-1) .. (1,0);
\draw[fill] (0,0) circle (0.06);
\draw[fill] (2.5,0) circle (0.06);
\draw[fill] (5.5,0) circle (0.06);
\draw[->,>=latex] (-4,0) .. controls (0,1.5) and (1.5,2.5) .. (1.5,0) .. controls (1.5,-0.5) and (2,-1) .. (2.7,-1);
\draw (2.5,-1) .. controls (3,-1) and (3.5,-0.5) .. (3.5,0);
\draw (-4,0) .. controls (0,2.5) and (3.5,2.5) .. (3.5,0);
\draw[->,>=latex] (-4,0) .. controls (0,3) and (4.5,3) .. (4.5,0) .. controls (4.5,-0.5) and (5,-1) .. (5.7,-1);
\draw (5.5,-1) .. controls (6,-1) and (6.5,-0.5) .. (6.5,0);
\draw (-4,0) .. controls (0,3.5) and (6.5,4.5) .. (6.5,0);
\draw[->,>=latex] (-4,0) .. controls (0,4.5) and (8,4.5) .. (8,0);
\draw (-4,0) .. controls (0,-4.5) and (8,-4.5) .. (8,0);
\draw (0.2,-1.1) node[below] {$\gamma_0$};
\draw (2.7,-1.1) node[below] {$\gamma_t$};
\draw (5.7,-1.1) node[below] {$\gamma_1$};
\draw (8,0) node[right] {$\gamma_\infty$};
\draw (0.1,0.1) node[below left] {$0$};
\draw (2.6,0.1) node[below left] {$t$};
\draw (5.6,0.1) node[below left] {$1$};
\end{tikzpicture}\vspace{-7mm}
\caption{Generators $ \{\gamma_{\nu} \}$.}\label{figg1}
\end{figure}
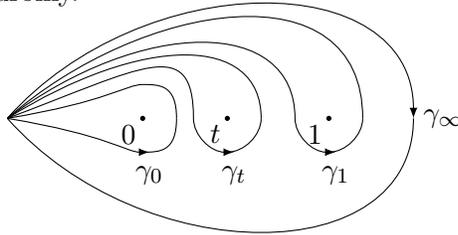

If the eigenvalues of each $A_{\nu}$ ($\nu=0,1,t,\infty$) do not differ by non-zero integers, the monodromy preserving deformation of \eqref{lis} is described by the equations
\begin{gather}
\frac{{\rm d}A_0}{{\rm d}t}=\frac{[A_0,A_t]}{t}+[A_0,K],\qquad \frac{{\rm d}A_1}{{\rm d}t}=\frac{[A_1,A_t]}{t-1}+[A_1,K],\nonumber\\ \frac{{\rm d}A_t}{{\rm d}t}=-\frac{[A_0,A_t]}{t}-\frac{[A_1,A_t]}{t-1}+[A_t,K],\qquad K=L^{-1}\frac{{\rm d}L}{{\rm d}t}.\label{deformeq}
\end{gather}
Here $L(t)\in\mathrm{GL}(N,\Cb)$ is an arbitrary invertible matrix function of $t$ corresponding to the choice of the gauge of $\Phi(z)$ at $z=\infty$. It can be absorbed by setting $A_{\nu}=L^{-1}\tilde A_{\nu} L$. Although one may be tempted to set $L(t)=\operatorname{const}$, $K=0$ from the very beginning, below we will also use a~more subtle time-dependent gauge. The deformation equations~\eqref{deformeq} appear as the compatibility conditions of the Fuchsian system with
\begin{gather*}
\partial_t\Phi=\Phi B(z),\qquad B(z)= K-\frac{A_t}{z-t}.
\end{gather*}

The spectra $\operatorname{\mathrm{Spec}}(A_{\nu})$ give exponents of local monodromy around $z=\nu$ and are therefore conserved under isomonodromic evolution. The main object of interest for us will be the Jimbo--Miwa--Ueno tau function \cite{JMU} defined by{\samepage
\begin{gather}\label{jmutau}
\frac{{\rm d}}{{\rm d}t}\ln\tau_{\mathrm{JMU}}(t):=\frac12\operatorname{Res}_{z=t}\operatorname{Tr} A^2(z)=\frac{\operatorname{Tr}A_0A_t}{t}+\frac{\operatorname{Tr}A_1A_t}{t-1}.
\end{gather}
The tau function is clearly $L$-independent.}

We are now going to parameterize the matrices $A_{\nu}$ and rewrite the deformation equations~\eqref{deformeq} more explicitly in the semi-degenerate case. Let the singular points $z=1,t$ be of spectral type $(N-1,1)$. Employing if necessary a scalar gauge transformation of the form $\Phi(z) \mapsto (z-t)^a(z-1)^b \tilde{\Phi}(z)$, it can be assumed that the eigenvalue of $A_t$ and $A_1$ with multiplicity $N-1$ is equal to~$0$; the non-degenerate eigenvalues will be denoted by~$\Lambda_t$ and~$\Lambda_1$. Assume in addition that $A_0$ and $A_{\infty}$ remain sufficiently generic so that they can be brought to a lower and upper triangular form:
\begin{gather*}
A_0=\left(\begin{matrix}
\theta_0^{(1)} & 0 & \cdots & 0 \\
* & \theta_0^{(2)} & \cdots & \cdot \\
\vdots & \vdots & \ddots & 0 \\
* & \cdot & * & \theta_0^{(N)}
\end{matrix}\right),\qquad
A_{\infty}=\left(\begin{matrix}
\theta_{\infty}^{(1)} & * & \cdots & * \\
0 & \theta_{\infty}^{(2)} & \cdots & \cdot \\
\vdots & \vdots & \ddots & * \\
0 & \cdot & 0 & \theta_{\infty}^{(N)}
\end{matrix}\right).
\end{gather*}
Write $A_{t,1}$ in this basis as $A_t=-q^T\otimes p$, $A_1=-\bar q^T\otimes\bar p$. Here $q$, $\bar q$, $p$, $\bar p$ are row vectors. The notation for their individual entries is fixed so that, e.g., $q=\big(
q^{(1)} \,\ldots \, q^{(N)}\big)$. We have the constraints
\begin{gather}\label{bcconstraints}
\Lambda_t=-\sum_{k=1}^N p^{(k)}q^{(k)},\qquad \Lambda_1=-\sum_{k=1}^N \bar p^{(k)}\bar q^{(k)},\qquad
\Lambda_t+\Lambda_1+\sum_{k=1}^N\big(\theta_0^{(k)}+ \theta_{\infty}^{(k)}\big)=0.
\end{gather}
The former two conditions come from the computation of $\operatorname{Tr}A_{t,1}$ and the 3rd is the Fuchs relation. The triangular form of $A_{0,\infty}$ and the relation $\sum\limits_{\nu=0,1,t,\infty}A_{\nu}=0$ imply that
\begin{gather*}
(A_0)_{lk} = p^{(k)}q^{(l)}+\bar p^{(k)} \bar q^{(l)},\qquad k<l,\\
(A_{\infty})_{lk}= p^{(k)}q^{(l)}+\bar p^{(k)} \bar q^{(l)},\qquad k>l.
\end{gather*}
If all $\bar q^{(k)}\ne0$, the remaining freedom of diagonal gauge transformations can be used to set $\bar q^{(k)}=1$. Considering the diagonal of the same matrix relation $\sum\limits_{\nu=0,1,t,\infty}A_{\nu}=0$, we then obtain
\begin{gather*}
\bar p^{(k)}=-p^{(k)}q^{(k)}+\theta_{\infty}^{(k)}+\theta_0^{(k)},\qquad k=1,\ldots, N.
\end{gather*}
We can further set $q^{(1)}=1$ and express $p^{(1)}$ using the 1st of relations~\eqref{bcconstraints}, so that $A(z)$ is completely parameterized by $2(N-1)$ quantities $p^{(k)}$, $q^{(k)}$ with $k=2,\ldots,N$.

To rewrite the equations \eqref{deformeq} in these coordinates, we also need the expression for $K$. Note that since $A_{\infty}$ is upper triangular, so are $H$ and $K$. On the other hand, the mat\-rix $B(z=0)=K+\frac{A_t}{t}=G_0^{-1}\partial_tG_0$, where $G_0$ characterizes the gauge at $z=0$: namely, $A_0=G_0^{-1}\operatorname{diag}\bigl\{\theta_0^{(1)},\ldots,\theta_0^{(N)}\bigr\}G_0$. Therefore, $B(z=0)$ is lower triangular, which immediately yields the off-diagonal elements of~$K$:
\begin{gather*}
K_{kl}=\begin{cases}
\displaystyle \frac{q^{(k)} p^{(l)}}{t}, &k<l,\\
 0, & k>l.
\end{cases}
\end{gather*}
In order to determine the diagonal, one can similarly use that $B(z=1)=K+\frac{A_t}{t-1}=G_1^{-1}\partial_t G_1$, where $A_1=G_1^{-1}\operatorname{diag} \{\Lambda_1,0,\ldots,0 \}
G_1$. In the gauge where all $\bar q^{(k)}=1$, we can set $\big(G_1^{-1}\big)_{k1}=1$ for $k=1,\ldots, N$, so that $\sum\limits_{k=1}^N\big(K+\frac{A_t}{t-1}\big)_{lk}=0$. It follows that
\begin{gather*}
K_{ll}=-\frac{q^{(l)}}{t}\sum_{k=l+1}^N p^{(k)}+\frac{q^{(l)}}{t-1} \sum_{k=1}^N p^{(k)},\qquad l=1,\ldots,N.
\end{gather*}

The deformation equations \eqref{deformeq} can now be rewritten as follows. For $k=2,\ldots, N-2$, we have
\begin{gather}
\frac{{\rm d}q^{(k)} }{{\rm d}t}= \sum_{l=1}^N(f\delta_{kl}-g_{kl}) q^{(l)} ,\nonumber\\
\frac{{\rm d}p^{(k)} }{{\rm d}t}= \sum_{l=1}^N p^{(l)}(g_{lk}-f\delta_{kl}) ,\label{fss}
\end{gather}
where $g=\frac{A_0}{t}+\frac{A_1}{t-1}+K$ and
\begin{gather*}
f=\sum_{m=1}^N g_{1m}q^{(m)}=\frac{\theta_0^{(1)}-\Lambda_t}{t}+\frac{\sum\limits_{m=1}^N p^{(m)}}{t(t-1)}-\frac{\sum\limits_{m=1}^N\bar p^{(m)}q^{(m)}}{t-1}.
\end{gather*}
For $k=1$, the equation \eqref{fss} is satisfied automatically (recall that $q^{(1)}=1$ and $p^{(1)}=-\Lambda_t-\sum\limits_{k=2}^Np^{(k)}q^{(k)}$).
More explicitly, the isomonodromic evolution \eqref{fss} of the local coordina\-tes~$p^{(k)}$,~$q^{(k)}$ is described by
\begin{gather*}
\frac{{\rm d}q^{(k)} }{{\rm d}t}= \left( f +\frac{\Lambda_t -\theta_0^{(k)}}{t}\right) q^{(k)}+\frac{\sum\limits_{m=1}^N\bar p^{(m)}q^{(m)}}{t-1}\\
\hphantom{\frac{{\rm d}q^{(k)} }{{\rm d}t}=}{} -\frac{\sum\limits_{m<k}\bar p^{(m)}q^{(m)}}{t}-\left[\frac{\sum\limits_{m< k}p^{(m)}}t+\frac{\sum\limits_{m=1}^N p^{(m)}}{t(t-1)}\right]\big(q^{(k)}\big)^2, \\
\frac{{\rm d}p^{(k)} }{{\rm d}t}=\left(\frac{\theta_0^{(k)}-\Lambda_t }{t}-f \right) p^{(k)}-\frac{\sum\limits_{m=1}^Np^{(m)}}{t-1} \bar p^{(k)}\\
\hphantom{\frac{{\rm d}p^{(k)} }{{\rm d}t}=}{} +\frac{\sum\limits_{m>k}p^{(m)}}{t} \bar p^{(k)}
+\left[\frac{\sum\limits_{m< k}p^{(m)}}t+\frac{\sum\limits_{m=1}^N p^{(m)}}{t(t-1)}\right]p^{(k)}q^{(k)}.
\end{gather*}
The latter system can also be rewritten in the (non-autonomous) Hamiltonian form
\begin{gather*}
s\frac{{\rm d}q^{(k)} }{{\rm d}s}= \frac{\partial H}{\partial p^{(k)} } ,\qquad s\frac{{\rm d}p^{(k)} }{{\rm d}s}=- \frac{\partial H}{\partial q^{(k)} } , \qquad k=2,\ldots,N,
\end{gather*}
where the Hamiltonian function is given by
\begin{gather*}
H=\sum_{m=1}^N \theta_0^{(m)}p^{(m)}q^{(m)}+\sum_{1\le m<n\le N} \big(\bar p^{(m)}+p^{(m)}q^{(n)}\big) p^{(n)}q^{(m)}+\frac{1}{s-1}\sum_{m,n=1}^N\bar p^{(m)}p^{(n)}q^{(m)}.
\end{gather*}
This is the Fuji--Suzuki--Tsuda system FST$_N$. Its tau function $\tau_{\mathrm{FST}}(t)$, obtained by the appropriate specialization of the Jimbo--Miwa--Ueno tau differential~\eqref{jmutau}, is directly related to the Hamiltonian by $s\frac{{\rm d}}{{\rm d}s}\ln\tau_{\mathrm{FST}}\big(s^{-1}\big)=H$.

In the following sections, we focus our attention on the computation of $\tau_{\mathrm{FST}}(t)$. This function is fundamental from the point of view of applications. The local coordinates on the FST$_N$ phase space, such as $p^{(k)}$, $q^{(k)}$, can be expressed in terms of tau functions~\cite{Tsuda0,Tsuda} similarly to the $N=2$ case which corresponds to Painlev\'e VI.

\section{Braid group dynamics of semi-degenerate monodromy\label{sec_BG}}
The Riemann--Hilbert correspondence assigns to each linear system \eqref{lis} its monodromy~-- a~point in the space~$\mathcal M$ of conjugacy classes of triples of $G$-matrices:
\begin{gather*}
\mathcal M=\operatorname{Hom}(\pi_1(\mathbb C\backslash \{0,1,t \}),G)/G\cong \big\{ [( M_0,M_t,M_1)], M_{0,t,1}\in G\big\}.
\end{gather*}
Local solutions of the deformation equations~\eqref{deformeq} continue to analytic functions of $t$ on the universal covering of $\mathbb C\backslash \{0,1 \}$. This analytic continuation may be described~\cite{DM} in terms of an action of~$\mathcal B_3$, the braid group on 3~strands, on~$\mathcal M$. Its origin may be explained as follows: the center $\mathcal Z\subset \mathcal B_3$ in fact acts trivially and
$\mathcal B_3/\mathcal Z\cong \Gamma=\mathrm{PSL}(2,\mathbb Z)$ is the mapping class group of $\mathbb C\backslash \{0,1,t \}$, related to the outer automorphisms of the corresponding $\pi_1$ by Dehn--Nielsen theorem.

To describe the braid/modular group action explicitly in a compact way, it is convenient \cite{LT} to first extend it to an action of $\bar\Gamma=\mathrm{PGL}(2,\mathbb Z)$ and restrict the latter to the level 2 congruence subgroup $\bar\Lambda=\bar\Gamma(2)$. The principal gain is an extremely simple presentation of $\bar\Lambda$, which is freely generated by 3 involutions:
\begin{gather*}
\bar{\Lambda}=\big \langle x,y,z\,|\,x^2=y^2=z^2=1\big\rangle.
\end{gather*}
The action of these generators on the representative triples is as follows:
\begin{gather}
x\colon \ \big[(M_0,M_t,M_1)\big]\mapsto\big[\big( M_0^{-1},M_t^{-1},M_t M_1^{-1}M_t^{-1}\big)\big],\nonumber\\
y\colon \ \big[(M_0,M_t,M_1)\big]\mapsto\big[\big( M_1 M_0^{-1}M_1^{-1},M_t^{-1},M_1^{-1}\big)\big],\nonumber\\
z\colon \ \big[(M_0,M_t,M_1)\big]\mapsto\big[\big( M_0^{-1}, M_0 M_t^{-1}M_0^{-1},M_1^{-1}\big)\big] .\label{xyzd}
\end{gather}
The words of even length in $x$, $y$, $z$ form the subgroup $\Lambda\subset\bar\Lambda$. It is isomorphic to free group on 2~generators (e.g., $xy$ and~$yz$) and can be seen as a quotient $\mathcal P_3/\mathcal Z\cong \Lambda$ of pure braids.

Solutions of the deformation equations~\eqref{deformeq} with finite branching (in particular, algebraic ones) correspond to finite $\bar\Lambda$-orbits on $\mathcal M$. Of course, many examples of such orbits can be easily produced by monodromies generating finite subgroups of $\mathrm{GL}(N,\Cb)$. However, their classification is an open problem except for $N=2$~\cite{LT}. In the latter case, assuming that $M_{\nu}\in\mathrm{SL}(2,\Cb)$ for $\nu=0,t,1,\infty$, the local conjugacy classes $ [M_{\nu} ]$ are preserved by the action of $\bar\Lambda$. The classification problem then reduces to the description of periodic orbits generated by 3 polynomial involutive automorphisms of an affine cubic in $\Cb^3$ representing the $\mathrm{SL}(2,\Cb)$-character variety of the 4-punctured sphere.

We are now going to consider semi-degenerate monodromy and describe its dynamics in a~more explicit form. The procedure is somewhat reminiscent of what was done on the other side of the Riemann--Hilbert correspondence in the previous subsection. Assume that $M_0$ and $M_{\infty}$ are diagonalizable, and $M_{1}-\mathds 1$ and $M_{t}-\mathds 1$ have rank~1. Write
\begin{gather}\label{Mt1}
M_t=\mathds 1+u^T\otimes v,\qquad M_1=\mathds 1+\bar u^T\otimes \bar v ,
\end{gather}
and also denote
\begin{gather*}
 [M_{\nu} ]= [\exp(2\pi{\rm i} \Theta_{\nu}) ],\qquad \nu=0,1,t,\infty,\qquad
\Theta_{\nu}=\begin{cases}\operatorname{diag}\big(\theta_{\nu}^{(1)},\ldots, \theta_{\nu}^{(N)}\big), &\nu=0,\infty,\\
\operatorname{diag}(\Lambda_{\nu},0,\ldots,0), & \nu=t,1.
\end{cases}
\end{gather*}
In this notation, we have $v\cdot u^T={\rm e}^{2\pi {\rm i} \Lambda_t}-1$ and $\bar v\cdot \bar u^T={\rm e}^{2\pi{\rm i} \Lambda_1}-1$, which implies that $M_t^{-1}=\mathds 1-{\rm e}^{-2\pi{\rm i}\Lambda_t}u^T\otimes v$ and $M_1^{-1}=\mathds 1-{\rm e}^{-2\pi{\rm i}\Lambda_1}\bar u^T\otimes \bar v$. Observe that the only effect of the $\bar \Lambda$-action on the local monodromy exponents is the sign flip of all $\Theta_{\nu}$ for the words of odd length in $x$, $y$, $z$.

Let us make a further assumption that $M_0M_t=( M_1 M_{\infty})^{-1} $ has distinct eigenvalues, different from those of $M_0$ and $M_{\infty}^{-1}$. Pick a representative in the conjugacy class of monodromy such that
\begin{gather}
M_0M_t=\exp(2\pi{\rm i}\mathfrak S), \qquad \mathfrak S=\operatorname{diag}\big\{\sigma^{(1)},\ldots,\sigma^{(N)}\big\},\nonumber\\
\operatorname{Tr}\mathfrak S=\operatorname{Tr}\Theta_0+\Lambda_t=-\operatorname{Tr}\Theta_{\infty}-\Lambda_1.\label{M0Mt}
\end{gather}
\begin{Lemma}\label{lem31} For $k=1,\ldots,N$, we have
\begin{subequations}\label{uvrels1}
\begin{gather}\label{uvrels1a}
u^{(k)}v^{(k)}= {\rm e}^{2\pi{\rm i}(\Lambda_t-\sigma^{(k)})}\frac{\prod\limits_{l=1}^N\big({\rm e}^{2\pi{\rm i} \sigma^{(k)}}-{\rm e}^{2\pi{\rm i} \theta_0^{(l)}}\big)}{\prod\limits_{l\ne k}\big({\rm e}^{2\pi{\rm i} \sigma^{(k)}}-{\rm e}^{2\pi{\rm i} \sigma^{(l)}}\big)},\\
\label{uvrels1b}\bar u^{(k)}\bar v^{(k)}= -{\rm e}^{-2\pi{\rm i} \sigma^{(k)}}\frac{\prod\limits_{l=1}^N\big({\rm e}^{2\pi{\rm i} \sigma^{(k)}}-{\rm e}^{-2\pi{\rm i} \theta^{(l)}_{\infty}}\big)}{\prod\limits_{l\ne k}\big({\rm e}^{2\pi{\rm i} \sigma^{(k)}}-{\rm e}^{2\pi{\rm i} \sigma^{(l)}}\big)}.
\end{gather}
\end{subequations}
\end{Lemma}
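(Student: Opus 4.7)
The idea is to exploit the fact that, in the chosen basis, $M_0M_t$ is diagonal, so that $M_0$ and $M_\infty$ appear as rank-one perturbations of diagonal matrices. The two formulas then follow from a single application of the matrix determinant lemma.

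First I would derive \eqref{uvrels1a}. From $M_0M_t=\exp(2\pi\mathrm{i}\mathfrak S)$ together with $M_t^{-1}=\mathds 1-\mathrm{e}^{-2\pi\mathrm{i}\Lambda_t}u^T\otimes v$, one obtains
\begin{gather*}
M_0=\exp(2\pi\mathrm{i}\mathfrak S)-\mathrm{e}^{-2\pi\mathrm{i}\Lambda_t}\exp(2\pi\mathrm{i}\mathfrak S)\,u^T\otimes v.
\end{gather*}
Hence $\lambda\mathds 1-M_0$ is a rank-one perturbation of the diagonal matrix $\lambda\mathds 1-\exp(2\pi\mathrm{i}\mathfrak S)$, and the matrix determinant lemma yields
\begin{gather*}
\det(\lambda\mathds 1-M_0)=\prod_{l=1}^{N}\big(\lambda-\mathrm{e}^{2\pi\mathrm{i}\sigma^{(l)}}\big)+\mathrm{e}^{-2\pi\mathrm{i}\Lambda_t}\sum_{k=1}^{N}\mathrm{e}^{2\pi\mathrm{i}\sigma^{(k)}}u^{(k)}v^{(k)}\prod_{l\ne k}\big(\lambda-\mathrm{e}^{2\pi\mathrm{i}\sigma^{(l)}}\big).
\end{gather*}
Because $[M_0]=[\exp(2\pi\mathrm{i}\Theta_0)]$, this polynomial must coincide with $\prod_{l}\bigl(\lambda-\mathrm{e}^{2\pi\mathrm{i}\theta_0^{(l)}}\bigr)$. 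Evaluating the resulting identity at $\lambda=\mathrm{e}^{2\pi\mathrm{i}\sigma^{(k)}}$ kills every term of the sum except the $k$-th, and solving for $u^{(k)}v^{(k)}$ produces \eqref{uvrels1a} immediately.

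For \eqref{uvrels1b} I would proceed dually, starting from the relation $M_1M_\infty=(M_0M_t)^{-1}=\exp(-2\pi\mathrm{i}\mathfrak S)$, which gives
\begin{gather*}
M_\infty=\exp(-2\pi\mathrm{i}\mathfrak S)-\mathrm{e}^{-2\pi\mathrm{i}\Lambda_1}\bar u^T\otimes \bar v\,\exp(-2\pi\mathrm{i}\mathfrak S).
\end{gather*}
Exactly the same determinant computation, compared this time with the characteristic polynomial $\prod_l(\lambda-\mathrm{e}^{2\pi\mathrm{i}\theta_\infty^{(l)}})$ of $M_\infty$ and evaluated at $\lambda=\mathrm{e}^{-2\pi\mathrm{i}\sigma^{(k)}}$, leads to
\begin{gather*}
\mathrm{e}^{-2\pi\mathrm{i}(\Lambda_1+\sigma^{(k)})}\bar u^{(k)}\bar v^{(k)}\prod_{l\ne k}\big(\mathrm{e}^{-2\pi\mathrm{i}\sigma^{(k)}}-\mathrm{e}^{-2\pi\mathrm{i}\sigma^{(l)}}\big)=\prod_{l=1}^{N}\big(\mathrm{e}^{-2\pi\mathrm{i}\sigma^{(k)}}-\mathrm{e}^{2\pi\mathrm{i}\theta_\infty^{(l)}}\big).
\end{gather*}

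The only remaining step, and probably the one most prone to sign/prefactor mistakes, is to rewrite the two products on this line so that they match the form of \eqref{uvrels1b}. Pulling out the factors $\mathrm{e}^{-2\pi\mathrm{i}\sigma^{(k)}}\mathrm{e}^{2\pi\mathrm{i}\theta_\infty^{(l)}}$ and $\mathrm{e}^{-2\pi\mathrm{i}\sigma^{(k)}}\mathrm{e}^{-2\pi\mathrm{i}\sigma^{(l)}}$ flips the signs of the binomials and produces a global prefactor proportional to $\mathrm{e}^{2\pi\mathrm{i}(\operatorname{Tr}\Theta_\infty+\operatorname{Tr}\mathfrak S)}$; the trace identity \eqref{M0Mt}, namely $\operatorname{Tr}\mathfrak S+\operatorname{Tr}\Theta_\infty=-\Lambda_1$, collapses this prefactor to $-\mathrm{e}^{-2\pi\mathrm{i}\Lambda_1}\mathrm{e}^{-4\pi\mathrm{i}\sigma^{(k)}}$. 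After cancellation with the factor already present in front of $\bar u^{(k)}\bar v^{(k)}$, one recovers \eqref{uvrels1b} exactly. The genericity assumptions made just before the lemma (distinctness of the $\sigma^{(l)}$, and their exponentials being different from those of $\theta_0^{(l)}$ and $-\theta_\infty^{(l)}$) ensure that all denominators are nonzero, so no degenerate cases need separate treatment.
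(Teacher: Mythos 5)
Your proof is correct and follows essentially the same route as the paper: both rest on writing $M_0$ (respectively the fourth monodromy) as a rank-one perturbation of the diagonalized composite monodromy ${\rm e}^{\pm2\pi{\rm i}\mathfrak S}$, applying the matrix determinant lemma, and evaluating the characteristic-polynomial identity at the eigenvalues of $\mathfrak S$. The only minor difference is in \eqref{uvrels1b}: the paper works with $M_\infty^{-1}=M_0M_tM_1={\rm e}^{2\pi{\rm i}\mathfrak S}\big(\mathds 1+\bar u^T\otimes\bar v\big)$ and evaluates at $\lambda={\rm e}^{2\pi{\rm i}\sigma^{(k)}}$, which yields the stated formula immediately, whereas you perturb $M_\infty$ itself, evaluate at $\lambda={\rm e}^{-2\pi{\rm i}\sigma^{(k)}}$, and then need the (correctly executed) rewriting of the products together with the trace relation from \eqref{M0Mt} -- an equally valid but slightly longer final step.
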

\begin{proof}Note that
\begin{gather*}
\det ( \lambda-M_0)=\det \big(\lambda-M_0 M_t+{\rm e}^{-2\pi{\rm i}\Lambda_t} M_0 M_t u^{T}\otimes v\big)\nonumber\\
\hphantom{\det ( \lambda-M_0)}{}=
\det (\lambda-M_0 M_t)\left[1+{\rm e}^{-2\pi{\rm i} \Lambda_t}v\frac{M_0M_t}{\lambda - M_0M_t}u^T\right].
\end{gather*}
Taking the limit as $\lambda\to {\rm e}^{2\pi{\rm i}\sigma^{(k)}}$, we obtain the first of relations~\eqref{uvrels1}. The second is obtained by similar considerations from $\det \big( \lambda-M_{\infty}^{-1}\big)=\det (\lambda-M_0M_tM_1)$.
\end{proof}

The remaining freedom of diagonal conjugation allows for rescalings
\begin{gather*}
\big(u^{(k)},v^{(k)},\bar u^{(k)},\bar v^{(k)}\big)\mapsto \big(\alpha_k u^{(k)},\alpha_k^{-1}v^{(k)},\alpha_k\bar u^{(k)},\alpha_k^{-1}\bar v^{(k)}\big),
\end{gather*}
with arbitrary $\alpha_k\in\Cb^*$. It can be used, for example, to fix all $\bar u^{(k)}$ to be $1$ and express all $\bar v^{(k)}$ from~\eqref{uvrels1b}. The only ambiguity which remains afterwards is the overall rescaling $(u,v)\mapsto \big(\alpha u,\alpha^{-1} v\big)$ with $\alpha\in\Cb^*$. This suggests to introduce the notation
\begin{gather}\label{skldef}
\sigma_{k,l}:=\sigma^{(k)}-\sigma^{(l)},\qquad s_{k,l}:=\frac{u^{(k)}\bar u^{(l)}}{\bar u^{(k)}u^{(l)}},\qquad k,l=1,\ldots,N.
\end{gather}
The local coordinates on the moduli space of semi-degenerate monodromy can be chosen as any $N-1$ independent $\sigma_{k,l}$ and $N-1$ independent $s_{k,l}$; one admissible option is to take $l=k+1$ with $k=1,\ldots,N-1$. These coordinates unambiguously fix the conjugacy class of $(M_t,M_1,M_0M_t)$.
\begin{Lemma}The involutions $x$, $z$ from \eqref{xyzd} act as
\begin{gather}\label{xzaction}
x\colon \ (\mathfrak S,s_{kl})\mapsto (-\mathfrak S,s_{kl}),\qquad z\colon \ (\mathfrak S,s_{kl}) \mapsto \bigl({-}\mathfrak S, {\rm e}^{2\pi{\rm i} \sigma_{kl}}s_{kl}\bigr).
\end{gather}
\end{Lemma}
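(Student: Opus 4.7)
My plan is, for each of the involutions $x$ and $z$, to pick a representative of the image conjugacy class in which the new product $M_0'M_t'$ remains diagonal, read off the rank-one pieces of the new $M_t'$ and $M_1'$ using~\eqref{Mt1}, and compare with~\eqref{skldef}.

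\textbf{The $x$-action.} From~\eqref{xyzd} the image triple is $\bigl(M_0^{-1},\,M_t^{-1},\,M_tM_1^{-1}M_t^{-1}\bigr)$, whose product of the first two entries, $(M_tM_0)^{-1}$, is conjugate to $(M_0M_t)^{-1}$ via $M_0^{-1}$. I therefore conjugate the whole triple by $M_0^{-1}$; the resulting representative has its first two matrices multiplying to $(M_0M_t)^{-1}=\exp(-2\pi{\rm i}\mathfrak S)$, giving $\mathfrak S\mapsto -\mathfrak S$ at once. Using $M_t^{-1}=\mathds 1-{\rm e}^{-2\pi{\rm i}\Lambda_t}u^T\otimes v$ and $M_1^{-1}=\mathds 1-{\rm e}^{-2\pi{\rm i}\Lambda_1}\bar u^T\otimes\bar v$, the new transvections become $M_0M_t^{-1}M_0^{-1}$ and $M_0M_tM_1^{-1}M_t^{-1}M_0^{-1}$, from which I can read off, up to the scalar rescalings that drop out of~\eqref{skldef},
\[
(u^x)^T\propto M_0\,u^T,\qquad (\bar u^x)^T\propto M_0M_t\,\bar u^T.
\]

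\textbf{Componentwise evaluation.} In the basis in which $M_0M_t$ is diagonal, $M_0=\exp(2\pi{\rm i}\mathfrak S)M_t^{-1}$. Using the rank-one identity $vu^T={\rm e}^{2\pi{\rm i}\Lambda_t}-1$, I expect
\[
(M_0u^T)^{(k)}={\rm e}^{2\pi{\rm i}(\sigma^{(k)}-\Lambda_t)}u^{(k)},\qquad (M_0M_t\,\bar u^T)^{(k)}={\rm e}^{2\pi{\rm i}\sigma^{(k)}}\bar u^{(k)}.
\]
Substituting into the defining ratio $s_{kl}=u^{(k)}\bar u^{(l)}/(\bar u^{(k)}u^{(l)})$, the factors ${\rm e}^{2\pi{\rm i}\sigma^{(k)}}/{\rm e}^{2\pi{\rm i}\sigma^{(l)}}$ cancel between numerator and denominator (as does the common ${\rm e}^{-2\pi{\rm i}\Lambda_t}$), leaving $s_{kl}^x=s_{kl}$.

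\textbf{The $z$-action.} Here $M_0^zM_t^z=M_0^{-1}\cdot M_0M_t^{-1}M_0^{-1}=(M_0M_t)^{-1}$ is already diagonal, so no further conjugation is required and $\mathfrak S\mapsto -\mathfrak S$ is immediate. The new $u$-vector is extracted from $M_0M_t^{-1}M_0^{-1}$ exactly as above, giving $(u^z)^{(k)}\propto{\rm e}^{2\pi{\rm i}(\sigma^{(k)}-\Lambda_t)}u^{(k)}$; but because $M_1^z=M_1^{-1}$ is untouched by conjugation, $(\bar u^z)^{(k)}\propto\bar u^{(k)}$. In~\eqref{skldef} the $\Lambda_t$-factors still cancel, but the $\sigma^{(k)}/\sigma^{(l)}$ prefactor no longer does, producing precisely the extra factor ${\rm e}^{2\pi{\rm i}\sigma_{kl}}$.

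The only delicate point I anticipate is tracking the scalar rescaling freedom of $(u,v,\bar u,\bar v)$: the argument works because the $k$-dependent prefactors computed above are intrinsic, while any residual overall scalar ambiguity drops out of the cross-ratio $s_{kl}$. No deeper difficulty is expected—the whole statement reduces to a direct matrix calculation.
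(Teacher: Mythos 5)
Your proof is correct and follows essentially the same route as the paper: choose a representative of the image class in which the product of the first two monodromies is ${\rm e}^{-2\pi{\rm i}\mathfrak S}$, read off the transformed $u$, $\bar u$ from the rank-one transvections (using $v\cdot u^T={\rm e}^{2\pi{\rm i}\Lambda_t}-1$), and note that all scalar and diagonal ambiguities drop out of the cross-ratio $s_{kl}$. The only cosmetic difference is in the $x$-case, where the paper conjugates by $M_t$ so that the transvections remain $M_t^{-1}$, $M_1^{-1}$ and the invariance of $s_{kl}$ is immediate, whereas you conjugate by $M_0$ (your text says ``by $M_0^{-1}$'', but the displayed matrices $M_0M_t^{-1}M_0^{-1}$, $M_0M_tM_1^{-1}M_t^{-1}M_0^{-1}$ correspond to $M\mapsto M_0MM_0^{-1}$) and cancel the ${\rm e}^{2\pi{\rm i}\sigma^{(k)}}$ factors by hand --- an equivalent choice, since the two representatives differ by conjugation with the diagonal matrix ${\rm e}^{2\pi{\rm i}\mathfrak S}$, under which $s_{kl}$ is invariant.
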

\begin{proof} We have
\begin{gather*} x( [(M_0,M_t,M_1) ])=\big[\big(M_t^{-1}M_0^{-1}M_t,M_t^{-1},M_1^{-1}\big)\big].\end{gather*} Since the product $M_t^{-1}M_0^{-1}M_t M_t^{-1}={\rm e}^{-2\pi{\rm i}\mathfrak S}$ is a diagonal matrix, it follows that $x(\mathfrak S)=-\mathfrak S$. On the other hand, the transformation of the vectors $u$, $v$, $\bar u$, $\bar v$ can then be written as
\begin{gather*}
x\colon \ ( u,v,\bar u,\bar v) \mapsto \big(u,-{\rm e}^{-2\pi{\rm i} \Lambda_t}v,\bar u,-{\rm e}^{-2\pi{\rm i} \Lambda_1}\bar v\big).
\end{gather*}
Applying the same reasoning to $z( [( M_0,M_t,M_1) ])=\big[\big( M_0^{-1}, M_0 M_t^{-1}M_0^{-1},M_1^{-1}\big)\big] $, we see that
\begin{gather*}
z\colon \ ( u,v,\bar u,\bar v )\mapsto \big(u {\rm e}^{2\pi{\rm i} \mathfrak S},-{\rm e}^{-2\pi{\rm i} \Lambda_t}v {\rm e}^{-2\pi{\rm i}\mathfrak S},\bar u, -{\rm e}^{-2\pi{\rm i} \Lambda_1}\bar v\big),
\end{gather*}
which yields the second equation in \eqref{xzaction}.
\end{proof}
\begin{Remark}
If the $\bar \Lambda$-orbit is finite, there exists an $n\in\mathbb N$ such that the action of $(xz)^n$ leaves invariant the conjugacy class of monodromy. Under the above genericity assumptions, \eqref{xzaction}~then implies that $\sigma_{kl}\in \mathbb Q$. This provides a systematic way to look for examples of finite $\bar \Lambda$-orbits not bounded to those coming from the finite monodromy groups. Specifically, one can set $ [M_0M_t ]=\big[{\rm e}^{2\pi{\rm i}\mathfrak S}\big]$, $ [M_0M_1 ]=\big[{\rm e}^{2\pi{\rm i}\bar{\mathfrak S}}\big]$ and perform an exhaustive computer search of orbits whose all points are characterized by $\sigma_{kl},\bar{\sigma}_{kl}\in\mathbb Q$ with sufficiently small denominators. In the $N=2$ case, this procedure very quickly gives the list of all exceptional finite orbits; the proof of its completeness is quite tedious, though.
\end{Remark}

It will become clear in the next subsections that the rather simple form of the $x$- and $z$-transformation reflects the fact that $xz$ corresponds to analytic continuation of solution of the FST$_N$ system around the branch point $t=0$, and the coordinates $( \{\sigma_{kl} \}, \{s_{kl} \})$ on $\mathcal M$ are well-adapted for the description of the corresponding local behavior. The form of $y$-transformation is more involved. In order to describe it more explicitly, let us first formulate an auxiliary lemma.
\begin{Lemma}\label{lemrig}Let $a=\big\{a^{(k)}\big\}_{k=1,\ldots, N}$ and $b=\big\{b^{(k)}\big\}_{k=1,\ldots, N}$ such that $a\cap b=\varnothing$ and $a^{(i)}\ne a^{(j)}$, $b^{(i)}\ne b^{(j)}$ for $i\ne j$,
\begin{gather*}
M_{a}=\operatorname{diag}\big\{a^{(1)},\ldots,a^{(N)}\big\},\qquad M_{b}=\operatorname{diag}\big\{b^{(1)},\ldots,b^{(N)}\big\},
\end{gather*}
such that and $a^{(i)}\ne a^{(j)}$,
$b^{(i)}\ne b^{(j)}$ for $i\ne j$. Define $ W_{M_a\to M_b}\in\operatorname{Mat}_{N\times N}(\Cb)$ by
\begin{gather}\label{Wdef}
\big( W_{M_a\to M_b}\big)_{ij}=\prod_{k\ne i}\frac{a^{(j)}-b^{(k)}}{b^{(i)}-b^{(k)}}, \qquad i,j=1,\ldots,N.
\end{gather}
We have
\begin{gather}
M_{b}^{-1}W_{M_a\to M_b}M_{a} \bigl( W_{M_a\to M_b}\bigr)^{-1}=\mathds 1+u_{M_a,M_b}^T\otimes v_{M_a,M_b},\nonumber\\
u_{M_a,M_b}^{(i)}=\bigg[b^{(i)}\prod_{k\ne i}\big(b^{(i)}-b^{(k)}\big)\bigg]^{-1},\qquad v_{M_a,M_b}^{(i)}= -\prod_{k}\big(b^{(i)}-a^{(k)}\big).\label{Wab}
\end{gather}
Moreover, $\big( W_{M_a\to M_b}\big)^{-1}=W_{M_b\to M_a}$.
\end{Lemma}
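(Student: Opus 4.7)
The plan is to reformulate both claims in terms of Lagrange interpolation. Writing
\[
L_i^{(b)}(x) := \prod_{k \neq i} \frac{x - b^{(k)}}{b^{(i)} - b^{(k)}}
\]
for the Lagrange basis polynomial associated with the nodes $b^{(1)}, \ldots, b^{(N)}$, and similarly $L_k^{(a)}$ for the $a$-nodes, the definition \eqref{Wdef} simply reads $(W_{M_a \to M_b})_{ij} = L_i^{(b)}(a^{(j)})$ and $(W_{M_b \to M_a})_{kj} = L_k^{(a)}(b^{(j)})$. Once this is observed, both assertions of the lemma become consequences of the fact that a polynomial of degree at most $N-1$ is uniquely reconstructed from its values at $N$ distinct points.

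For the identity $W_{M_a \to M_b}^{-1} = W_{M_b \to M_a}$, I would expand
\[
(W_{M_a \to M_b} W_{M_b \to M_a})_{ij} = \sum_k L_i^{(b)}\big(a^{(k)}\big)\, L_k^{(a)}\big(b^{(j)}\big)
\]
and recognize the right-hand side as the value at $x = b^{(j)}$ of the Lagrange interpolant, in the $a$-basis, of the degree-$(N-1)$ polynomial $L_i^{(b)}$. That interpolant coincides with $L_i^{(b)}$ itself, so the sum collapses to $L_i^{(b)}(b^{(j)}) = \delta_{ij}$.

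For the rank-one statement, I would rewrite $M_b^{-1} W M_a W^{-1} = \mathds 1 + u^T \otimes v$ (with $W := W_{M_a \to M_b}$) as the matrix equation $W M_a = M_b W + M_b u^T \otimes (vW)$ and compare $(i,j)$-entries, obtaining
\[
\big(a^{(j)} - b^{(i)}\big)\, W_{ij} = b^{(i)} u^{(i)} \sum_l v^{(l)} W_{lj}.
\]
A short calculation simplifies the left-hand side to $\prod_k (a^{(j)} - b^{(k)}) / \prod_{m \neq i}(b^{(i)} - b^{(m)})$, and by \eqref{Wab} we have $b^{(i)} u^{(i)} = [\prod_{m \neq i}(b^{(i)} - b^{(m)})]^{-1}$. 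The equation therefore reduces, uniformly in $i$, to the scalar identity
\[
\sum_l v^{(l)} W_{lj} = \prod_k \big(a^{(j)} - b^{(k)}\big).
\]

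To verify the latter, I would use the factorizations $v^{(l)} = -\prod_k(b^{(l)} - a^{(k)})$ and $W_{lj} = \prod_k(a^{(j)} - b^{(k)}) / [(a^{(j)} - b^{(l)}) \prod_{m \neq l}(b^{(l)} - b^{(m)})]$, pull out the common factor $\prod_k(a^{(j)} - b^{(k)})$, and reduce to
\[
\sum_l \frac{\prod_k (b^{(l)} - a^{(k)})}{(b^{(l)} - a^{(j)}) \prod_{m \neq l}(b^{(l)} - b^{(m)})} = 1.
\]
Since $a^{(j)}$ is one of the roots of $\prod_k(x - a^{(k)})$, the rational function $\phi(x) := \prod_k(x - a^{(k)})/(x - a^{(j)}) = \prod_{k \neq j}(x - a^{(k)})$ is a monic polynomial of degree $N-1$, and the left-hand side above is exactly the leading coefficient of its Lagrange interpolant at the nodes $b^{(1)}, \ldots, b^{(N)}$, which equals the leading coefficient of $\phi$ itself, namely $1$. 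The whole argument rests on the single observation that \eqref{Wdef} is a Lagrange matrix; once this is in hand, both statements follow by standard interpolation-theoretic identities and there is no real obstacle left.
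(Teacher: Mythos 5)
Your proof is correct and follows essentially the same route as the paper, whose proof consists of the single remark that the statements ``can be verified directly using Lagrange interpolation''; you have simply supplied the interpolation-theoretic details (recognizing $W_{M_a\to M_b}$ as a matrix of Lagrange basis values, the interpolant-reproduction argument for $W_{M_a\to M_b}^{-1}=W_{M_b\to M_a}$, and the leading-coefficient identity for the rank-one relation). No gaps.
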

\begin{proof} The statements can be verified directly using Lagrange interpolation. \end{proof}

\begin{Remark}We have seen previously that, if two square matrices $M$, $M'$ have disjoint simple spectra and $M^{-1}M'-\mathds 1$ is of rank~1, then the conjugacy class $ [( M,M')]$ is uniquely determined by the eigenvalues of $M$ and $M'$. Lemma~\ref{lemrig} determines, up to diagonal factors, the explicit form of the matrix~$W$ relating the eigenbases of~$M$ and~$M'$. In particular,the relations~\eqref{uvrels1a} and~\eqref{uvrels1b} in Lemma~\ref{lem31} can be obtained as respective corollaries of \eqref{Wab} with $(M_{a},M_{b})=\big({\rm e}^{2\pi{\rm i}\Theta_0},{\rm e}^{2\pi{\rm i}\mathfrak S}\big)$ and $(M_{a},M_{b}) =\big({\rm e}^{-2\pi{\rm i}\Theta_{\infty}},{\rm e}^{2\pi{\rm i}\mathfrak S}\big)$.
\end{Remark}

More importantly, in the previous basis where $M_0M_t$ is diagonal, the matrix $M_0^{-1}$ can be represented as
\begin{gather*}
M_0^{-1}=D_{{\rm e}^{2\pi{\rm i}\Theta_0}\to {\rm e}^{2\pi{\rm i} \mathfrak S}}W_{{\rm e}^{2\pi{\rm i}\Theta_0}\to {\rm e}^{2\pi{\rm i} \mathfrak S}}{\rm e}^{-2\pi{\rm i} \Theta_0}W_{{\rm e}^{2\pi{\rm i}\Theta_0}\to {\rm e}^{2\pi{\rm i} \mathfrak S}}^{-1}D_{{\rm e}^{2\pi{\rm i}\Theta_0}\to {\rm e}^{2\pi{\rm i} \mathfrak S}}^{-1},
\end{gather*}
where $W_{{\rm e}^{2\pi{\rm i}\Theta_0}\to {\rm e}^{2\pi{\rm i} \mathfrak S}}$ is defined by \eqref{Wdef} and $D_{{\rm e}^{2\pi{\rm i}\Theta_0}\to {\rm e}^{2\pi{\rm i} \mathfrak S}}$ is a diagonal matrix whose non-zero elements can be found from the relation $M_t=\mathds 1+u^T\otimes v$ up to irrelevant overall scaling. One may fix them, e.g., by choosing
\begin{gather*}
D_{{\rm e}^{2\pi{\rm i}\Theta_0}\to {\rm e}^{2\pi{\rm i} \mathfrak S}}u_{{\rm e}^{2\pi{\rm i}\Theta_0},{\rm e}^{2\pi{\rm i} \mathfrak S}}^T= u^T.
\end{gather*}
Let us now consider the matrix
\begin{gather*}
\mathfrak P:=y\big({\rm e}^{2\pi{\rm i} \mathfrak S}\big),\qquad \mathfrak P=\operatorname{diag}\big\{\rho^{(1)},\ldots,\rho^{(N)}\big\},
\end{gather*}
defined up to permutation of its diagonal entries.

\begin{Lemma}Suppose that $\operatorname{Spec}\big({\rm e}^{-2\pi{\rm i} \Theta_0}\big)\cap \operatorname{Spec}\big( M_1 M_0^{-1}M_1^{-1}M_t^{-1}\big)=\varnothing $. The coefficients of the characteristic polynomial $\det (\rho-\mathfrak P)$ can be explicitly written as rational functions in $\big\{{\rm e}^{2\pi{\rm i} \sigma_{kl}}\big\}$, $ \{s_{kl} \}$.
\end{Lemma}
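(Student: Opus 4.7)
First I would observe that $\mathfrak P$ is, by its very definition, similar to $\tilde X := M_1 M_0^{-1} M_1^{-1} M_t^{-1}$ (the $y$-image of $M_0 M_t$), so the task reduces to exhibiting the coefficients of $\det(\rho-\tilde X)$ as rational functions of $\{e^{2\pi i\sigma_{kl}}\}$ and $\{s_{kl}\}$ (with coefficients depending on the fixed local monodromy data). I would then work in the basis of Lemma~\ref{lem31} where $M_0 M_t = e^{2\pi i \mathfrak S}$ is diagonal and substitute the explicit representations $M_t=\mathds 1+u^T\otimes v$, $M_1=\mathds 1+\bar u^T\otimes\bar v$, $M_t^{-1}=\mathds 1-e^{-2\pi i\Lambda_t}u^T\otimes v$, $M_1^{-1}=\mathds 1-e^{-2\pi i\Lambda_1}\bar u^T\otimes\bar v$ together with $M_0^{-1}=M_t e^{-2\pi i\mathfrak S}$. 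Expanding $\tilde X$ in this way presents it as $e^{-2\pi i\mathfrak S}$ plus a perturbation of rank at most four, whose columns lie in the span of $u^T$, $\bar u^T$, $e^{-2\pi i\mathfrak S}u^T$, $e^{-2\pi i\mathfrak S}\bar u^T$. The Weinstein--Aronszajn identity then rewrites $\det(\rho-\tilde X)$ as $\det(\rho-e^{-2\pi i\mathfrak S})$ times an at-most-$4\times 4$ determinant, whose entries are combinations (with coefficients rational in $\{e^{2\pi i\sigma^{(k)}}\}$ and $e^{\pm 2\pi i\Lambda_{t,1}}$) of the four families of bilinear sums
\[
\sum_{k=1}^N\frac{\alpha^{(k)}\beta^{(k)}}{\rho-e^{-2\pi i\sigma^{(k)}}},\qquad (\alpha,\beta)\in\{(u,v),\,(\bar u,\bar v),\,(u,\bar v),\,(\bar u,v)\}.
\]

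To make these rational in the advertised variables, I would next use the diagonal gauge freedom to set $\bar u^{(k)}=1$ for every $k$ and the residual overall rescaling to set $u^{(1)}=1$. With these choices $u^{(k)}=s_{k,1}$ by \eqref{skldef}; the products $u^{(k)}v^{(k)}$ and $\bar u^{(k)}\bar v^{(k)}$ are the explicit rational functions supplied by \eqref{uvrels1a} and \eqref{uvrels1b}; and consequently $v^{(k)}=(u^{(k)}v^{(k)})/s_{k,1}$ and $\bar v^{(k)}=\bar u^{(k)}\bar v^{(k)}$ are themselves rational in $\{s_{k,1}\}$, $\{e^{2\pi i\sigma^{(k)}}\}$ and the exponentials of the local monodromy exponents. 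Each of the four families of sums above then becomes a rational function in these variables, and so do the entries of the Weinstein--Aronszajn block and the coefficients of $\det(\rho-\tilde X)$ themselves. To finish, one observes that $s_{kl}=s_{k,1}/s_{l,1}$, so rationality in $\{s_{k,1}\}$ coincides with rationality in $\{s_{kl}\}$; and since $\sum_k\sigma^{(k)}=\operatorname{Tr}\Theta_0+\Lambda_t$ is fixed by the local monodromy data (cf.~\eqref{M0Mt}), rationality in $\{e^{2\pi i\sigma^{(k)}}\}$ coincides with rationality in $\{e^{2\pi i\sigma_{kl}}\}$.

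The main obstacle is bookkeeping: the rank-four expansion of $\tilde X$ produces many terms whose careful regrouping is required to obtain a clean $4\times 4$ Weinstein--Aronszajn block, and one has to check that the apparent poles at $\rho=e^{-2\pi i\sigma^{(k)}}$ in the four sums cancel against the zeros of $\det(\rho-e^{-2\pi i\mathfrak S})$ to leave a polynomial in~$\rho$. The disjoint-spectrum hypothesis $\operatorname{Spec}(e^{-2\pi i\Theta_0})\cap\operatorname{Spec}(\tilde X)=\varnothing$ enters as a genericity condition that guarantees the Lagrange-interpolation denominators arising when Lemma~\ref{lem31} is applied to the $y$-transformed data (with $e^{2\pi i\mathfrak S}$ replaced by $\mathfrak P$) are nonvanishing, so that $\mathfrak P$ is indeed well-defined and its eigenvalues may be consistently read off from $\det(\rho-\tilde X)$.
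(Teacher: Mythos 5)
Your argument is correct, and it reaches the conclusion by a genuinely different decomposition than the paper. The paper conjugates $M_0^{-1}$ into the basis where $M_0M_t={\rm e}^{2\pi{\rm i}\mathfrak S}$ is diagonal using the explicit rigid connection matrix $W_{{\rm e}^{2\pi{\rm i}\Theta_0}\to{\rm e}^{2\pi{\rm i}\mathfrak S}}$ and the diagonal factor $D$ of Lemma~\ref{lemrig}, so that $\tilde M=M_1M_0^{-1}M_1^{-1}M_t^{-1}$ becomes a rank-\emph{one} update of a conjugate of ${\rm e}^{-2\pi{\rm i}\Theta_0}$: writing $\det(\rho-\tilde M)=0$ as $\det(\mathds 1+L)=0$ with $L=\big(\rho{\rm e}^{2\pi{\rm i}\Theta_0}-\mathds 1\big)^{-1}\big[\mathds 1-\tilde W^{-1}M_t^{-1}\tilde W\big]$ of rank one gives the single scalar secular equation $1+\operatorname{Tr}L=0$, whose ingredients are the products $u^{(k)}v^{(k)}$ and $\bar u^{(j)}v^{(j)}u^{(k)}\bar v^{(k)}$, rationalized via \eqref{uvrels1a}, \eqref{uvrels1b} and \eqref{skldef}; the disjoint-spectrum hypothesis is used there precisely to make $\rho{\rm e}^{2\pi{\rm i}\Theta_0}-\mathds 1$ invertible on the spectrum of $\tilde M$, not (as you suggest) to control interpolation denominators for the $y$-transformed data. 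You instead substitute $M_0^{-1}=M_t{\rm e}^{-2\pi{\rm i}\mathfrak S}$ and treat $\tilde M$ as a rank-$\le 4$ update of ${\rm e}^{-2\pi{\rm i}\mathfrak S}$, so Weinstein--Aronszajn yields a $4\times4$ determinant whose entries are Cauchy-type sums in the four families $u^{(k)}v^{(k)}$, $\bar u^{(k)}\bar v^{(k)}$, $u^{(k)}\bar v^{(k)}$, $\bar u^{(k)}v^{(k)}$, again made rational by the gauge $\bar u^{(k)}=1$, $u^{(k)}=s_{k,1}$ together with Lemma~\ref{lem31}. What each route buys: the paper's rank-one reduction produces a one-line secular equation and sets up the matrices $\tilde W$, $D$ that are reused right afterwards to compute the $y$-action on $\{s_{kl}\}$, whereas your version avoids Lemma~\ref{lemrig} entirely and in fact does not need the stated hypothesis for this lemma (the Weinstein--Aronszajn factorization is an identity of rational functions, and the apparent poles at $\rho={\rm e}^{-2\pi{\rm i}\sigma^{(k)}}$ cancel automatically because the product equals the polynomial $\det(\rho-\tilde M)$, so no separate check is required), at the price of $4\times4$ rather than scalar bookkeeping. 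One caveat, which your last paragraph shares with the paper's own formulation rather than fixes: the building blocks \eqref{uvrels1a}--\eqref{uvrels1b} are rational in the individual ${\rm e}^{2\pi{\rm i}\sigma^{(k)}}$, which are recovered from $\big\{{\rm e}^{2\pi{\rm i}\sigma_{kl}}\big\}$ and the fixed value of $\operatorname{Tr}\mathfrak S$ only up to a common $N$-th root of unity, so ``rational in $\big\{{\rm e}^{2\pi{\rm i}\sigma_{kl}}\big\}$'' holds with the same mild looseness in both proofs; your assertion that the two notions of rationality coincide is an overstatement, but not one that distinguishes your argument from the paper's.
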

\begin{proof} The spectrum of $\mathfrak P$ is determined by the equation $\det \big(\rho-M_1 M_0^{-1}M_1^{-1}M_t^{-1}\big)=0 $. Under genericity assumptions of the lemma, this can be rewritten as $\det (\mathds 1+L)=0$, where
\begin{gather*}
L=\big(\rho {\rm e}^{2\pi{\rm i} \Theta_0}-1\big)^{-1}\big[ \mathds 1-\tilde W^{-1}M_t^{-1}\tilde W\big],\qquad \tilde W=M_1 D_{{\rm e}^{2\pi{\rm i}\Theta_0}\to {\rm e}^{2\pi{\rm i} \mathfrak S}}W_{{\rm e}^{2\pi{\rm i}\Theta_0}\to {\rm e}^{2\pi{\rm i} \mathfrak S}}.
\end{gather*}
The matrix $L$ has rank 1, which means that $\det (\mathds 1+L)=1+\operatorname{Tr} L$. The equation on the spectrum thus becomes
\begin{gather*}
v\tilde W \big(\rho {\rm e}^{2\pi{\rm i} \Theta_0}-1\big)^{-1} \tilde W^{-1} u^T =-{\rm e}^{2\pi{\rm i} \Lambda_t}.
\end{gather*}
The left hand side of this relation contains
\begin{gather*} vM_1 D_{{\rm e}^{2\pi{\rm i}\Theta_0}\to {\rm e}^{2\pi{\rm i} \mathfrak S}}=v D_{{\rm e}^{2\pi{\rm i}\Theta_0}\to {\rm e}^{2\pi{\rm i} \mathfrak S}} +
\bar v D_{{\rm e}^{2\pi{\rm i}\Theta_0}\to {\rm e}^{2\pi{\rm i} \mathfrak S}} \big(v\bar u^T\big) .\end{gather*}
 The components of the row vector $v D_{{\rm e}^{2\pi{\rm i}\Theta_0}\to {\rm e}^{2\pi{\rm i} \mathfrak S}}$ involve only the products $u^{(k)}v^{(k)}$, which are given by~\eqref{uvrels1a}. The components of the second summand are linear combinations of
$\bar u^{(j)} v^{(j)}u^{(k)}\bar v^{(k)}$, which can be expressed in terms of $\{s_{kl}\}$ by \eqref{skldef} and \eqref{uvrels1}. Similar reasoning can be repeated for
the column vector $ D_{{\rm e}^{2\pi{\rm i}\Theta_0}\to {\rm e}^{2\pi{\rm i} \mathfrak S}} ^{-1}M_1^{-1}u^T$. \end{proof}

At last, let us explain how to compute the effect of $y$-transformation on $\{s_{kl}\}$. Denoting $\tilde M:= M_1 M_0^{-1}M_1^{-1}M_t^{-1}$, one can write ${\rm e}^{2\pi{\rm i}\Theta_0} \tilde W^{-1} \tilde{M}
\tilde W=\tilde W^{-1} M_t^{-1}\tilde W$. Applying again Lemma~\ref{lemrig} with $(M_{\alpha},M_{\beta})=\big( {\rm e}^{-2\pi{\rm i} \Theta_0},\mathfrak P\big)$, we obtain
\begin{gather}\label{TildeM}
\tilde M=\tilde W\tilde D W_{\mathfrak P\to {\rm e}^{-2\pi{\rm i} \Theta_0}}\mathfrak P \bigl(\tilde W\tilde D W_{\mathfrak P\to {\rm e}^{-2\pi{\rm i} \Theta_0}}\bigr)^{-1},
\end{gather}
where $\tilde D$ is the diagonal matrix determined by
$ \tilde D u^T_{\mathfrak P, {\rm e}^{-2\pi{\rm i} \Theta_0}}=\tilde{W}^{-1}u^T$. Now from~\eqref{xyzd} and~\eqref{TildeM} it follows that we can set
\begin{gather*}
y\big(u^T\big)= W_{ {\rm e}^{-2\pi{\rm i} \Theta_0}\to \mathfrak P}u^T,\qquad y\big(\bar u^T\big)=\bigl(\tilde W\tilde D W_{\mathfrak P\to {\rm e}^{-2\pi{\rm i} \Theta_0}}\bigr)^{-1}\bar u^T,
\end{gather*}
which is sufficient to compute $ \{s_{kl} \}$ from~\eqref{skldef}. They can therefore be expressed as rational functions of $\big\{{\rm e}^{2\pi{\rm i} \sigma_{kl}}\big\}$, $ \{s_{kl} \}$ and $\big\{\rho^{(k)}\bigl/\rho^{(l)}\big\}$. Recall, however, that $\mathfrak P$, $\mathfrak S$ and $ \{s_{kl} \}$ are not independent.

The above results can be used for experimental search for the finite $\bar\Lambda$-orbits on $\mathcal M$ with the help of computer algebra. However, a somewhat complicated form of the $y$-transformation is a signal that the local coordinates such as $\{\sigma_{kl}\}$, $\{s_{kl}\}$ are not adapted to the problem of complete classification of such orbits. In addition, various special cases ruled out by our genericity assumptions should be treated separately. The most efficient approach to classification would be to find a ``linearization'' of the mappings~\eqref{xyzd}, i.e., to interpret them as triples of reflections in some auxiliary (possibly infinite-dimensional?) linear space. A similar idea was successfully implemented in \cite{DM} for PVI with special local monodromy.

\section[Fredholm determinant representation of FST$_N$ tau function]{Fredholm determinant representation of FST$\boldsymbol{_N}$ tau function}\label{secdet}
Let us now turn to the evaluation of the FST$_N$ tau function in terms of monodromy data. The representations of $\tau_{\mathrm{FST}}(t)$ we are dealing with in the next sections are of two types: Fredholm determinants and series over $N$-tuples of partitions. In the algebraic case, they can be expected to facilitate the reconstruction of the explicit algebraic solution curves.

Our starting point is a topological decomposition of the isomonodromic tau function, valid for any (not necessarily semi-degenerate) 4-point Fuchsian system, see \cite[equation~(2.35b)]{CGL}:
\begin{gather}
\tau_{\mathrm{JMU}}\left(
\vcenter{\hbox{
\includegraphics[height=7.2ex]{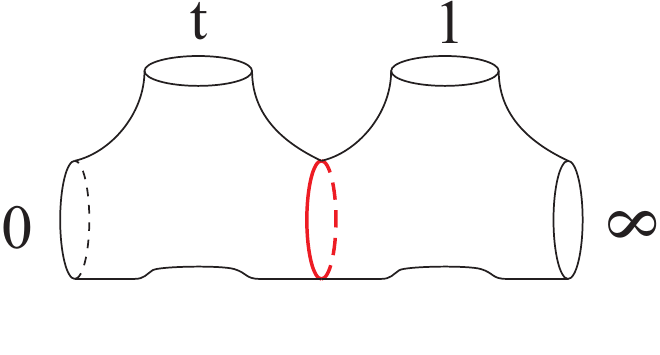}}}\right)=
\tau_{\mathrm{JMU}}\left(
\vcenter{\hbox{
\includegraphics[height=7.2ex]{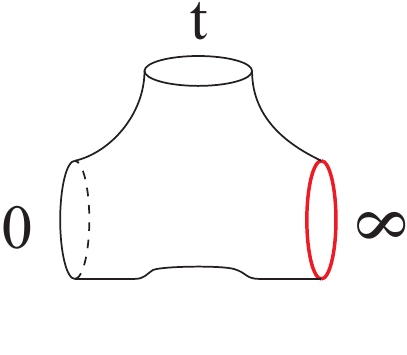}}}\right)
\tau_{\mathrm{JMU}}\left(
\vcenter{\hbox{
\includegraphics[height=7.2ex]{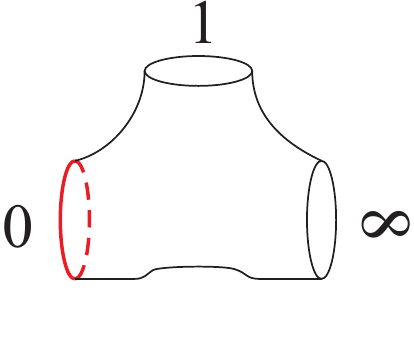}}}\right) \nonumber \\
\hphantom{\tau_{\mathrm{JMU}}\left(
\vcenter{\hbox{
\includegraphics[height=7.2ex]{FSTFigPants}}}\right)=}{} \times
\det \left(\begin{matrix}
\mathds 1 & \mathsf a_{+-}\left(\vcenter{\hbox{
\includegraphics[height=6.4ex]{FSTFigPantsR}}}\right)\\
\mathsf a_{-+}\left(\vcenter{\hbox{
\includegraphics[height=6.4ex]{FSTFigPantsL}}}\right)
& \mathds 1
\end{matrix}\right).\label{topodec}
\end{gather}
The quantities which appear on the right are associated to two auxiliary $3$-point Fuchsian systems with regular singularities at $0$, $t$, $\infty$ and $0$, $1$, $\infty$, whose solutions will be denoted by~$\Phi_-(z)$ and~$\Phi_+(z)$. The monodromy of $\Phi_{\pm}$ is determined by matching the monodromy of the original 4-point system. If we assume for simplicity that $|t|<1$, then the matrix function $\Phi_+^{-1}\Phi$ (resp.~$\Phi_-^{-1}\Phi$) is holomorphic and invertible inside the disk $|z|>t$ (resp. $|z|<1$) on~$\mathbb P^1$. In particular, the monodromy of $\Phi_-$ around $\infty$ and of $\Phi_+$ around $0$ is nothing but the composite monodromy $M_0M_t={\rm e}^{2\pi{\rm i}\mathfrak S}$ of $\Phi$.

The operators $\mathsf a_{\pm\mp}$ are constructed in terms of $\Phi_{\pm}(z)$. Consider the circle $\mathcal C= \{ z\in\Cb\colon $ $ |z|=R,R\in (|t|,1) \}$ and let $H=L^2\big(\mathcal C,\Cb^N\big)$ be the space of vector-valued functions on~$\mathcal C$ seen as Laurent series. Decompose this space as $H=H_+\oplus H_-$, where the subscripts $\pm$ correspond to functions with only positive/negative modes. The operators $\mathsf a_{\pm\mp}\colon H\to H_{\pm}$ are given by
\begin{gather}\label{aops}
(\mathsf a_{\pm\mp}f) (z)=\frac1{2\pi{\rm i}}\oint_{\mathcal C}\mathsf a_{\pm\mp}(z,z') f(z') {\rm d}z',\qquad\mathsf a_{\pm\mp}(z,z')=\pm\frac{\mathds 1-\Psi_{\pm}(z) \Psi_{\pm}^{-1}(z')}{z-z'},
\end{gather}
where $\Psi_{\pm}(z)=(-z)^{-\mathfrak S}\Phi_{\pm}(z)$. Notice that $H_{\pm}\subseteq \operatorname{ker}\mathsf a_{\pm\mp}$, so that $\mathsf a_{\pm\mp}$ can be considered as acting on $H_{\mp}$ and the determinant in \eqref{topodec} is computed on~$H$. The auxiliary tau functions in~\eqref{topodec} can be easily found to be $t^{\frac12\operatorname{Tr}\big(\mathfrak S^2-\Theta_0^2-\Theta_t^2\big)}$ and $1$ (here we ignore constant factors that can be chosen arbitrarily).

Consequently, whenever the inverse monodromy problem for the auxiliary $3$-point Fuchsian systems can be solved, the $4$-point tau function admits an explicit Fredholm determinant representation. The crucial point for us is that, when 1 of 3 singular points is semi-degenerate (which in the FST$_N$ case is true for both $\Phi_+$ and $\Phi_-$), the inverse monodromy problem does have an explicit solution in terms of generalized hypergeometric functions $_NF_{N-1}$. Here is its construction.

Consider a semi-degenerate Fuchsian system with $3$ singular points $z_0=0$, $z_1=1$, $z_2=\infty$. With a slight abuse of notation, we write the corresponding connection as $\partial_z-A(z)$ and denote by $ A_0$, $ A_1$, $ A_\infty=- A_0- A_1$ its residues at the poles. Assume that $ A_0$, $ A_1$, $ A_\infty$ can be represented as
\begin{gather*} 
A_{\nu} = G_{\nu}^{-1} \Theta_{\nu} G_{\nu},\qquad \nu=0,1,\infty,
\end{gather*}
with $G_0,G_1\in \mathrm{GL}(N,\mathbb{C})$, $G_{\infty}=\mathds 1$ and diagonal $\Theta_0$, $\Theta_1$, $\Theta_\infty$:
\begin{gather*}
\Theta_{\nu}=\operatorname{diag}\big(\theta^{(1)}_{\nu},\ldots,\theta^{(N)}_{\nu}\big),\qquad \nu=0,\infty,\qquad \Theta_1=\Lambda \cdot \operatorname{diag}(1,0,\ldots,0).
\end{gather*}
It is also assumed that all the eigenvalues of $\Theta_0$ and $\Theta_\infty$ are distinct. Such data correspond to a rigid local system and the matrix elements of $ A_1$ are determined by $\Theta_0$, $\Theta_{\infty}$ almost uniquely:
\begin{gather}\label{ME_Ajm}
\big(R^{-1} A_1 R\big)_{jm}=-\prod_{k}\big(\theta^{(j)}_\infty+\theta^{(k)}_0\big) \prod_{k\ne m}\big(\theta^{(m)}_\infty-\theta^{(k)}_\infty\big)^{-1}.
\end{gather}
Here $R=\operatorname{diag}\big(r^{(1)},\ldots,r^{(N)}\big)$ with all $r^{(k)}\in\Cb^*$ appears due to the freedom of conjugation of~$ A_{0,1,\infty}$ by a matrix preserving the diagonal gauge at~$\infty$. The matrix~$A_1$ has rank 1 and can therefore be written as $ A_1=-Ru^T\otimes vR^{-1}$, where the components of the row vectors~$u$ and~$v$ are given by
\begin{gather}\label{uvdeff}
u^{(j)}= \prod_{k}\big(\theta^{(j)}_\infty+\theta^{(k)}_0\big), \qquad v^{(m)} = \prod_{k\ne m}\big( \theta^{(m)}_\infty-\theta^{(k)}_\infty\big)^{-1}.
\end{gather}
In what follows, we will need an explicit expression for the diagonalizing transformation $G_0=\bar R G_{\Theta_\infty,\Theta_0} R^{-1}$, where $\bar R=\operatorname{diag}\big(\bar r^{(1)},\ldots,\bar r^{(N)}\big)$ and all $\bar r^{(k)}\in\Cb^*$ can be fixed arbitrarily, together with some related formulas, cf.\ Lemma~\ref{lemrig}:
\begin{gather}
( G_{\Theta_\infty,\Theta_0})_{kl}= \prod_{s\ne l}\frac{\theta^{(s)}_\infty+\theta^{(k)}_0}{\theta^{(s)}_\infty-\theta^{(l)}_\infty} , \qquad
\big( G_{\Theta_\infty,\Theta_0}^{-1}\big)_{kl}= \prod_{s\ne l}\frac{\theta^{(s)}_0+\theta^{(k)}_\infty}{\theta^{(s)}_0-\theta^{(l)}_0} ,\nonumber\\
\big(G_{\Theta_\infty,\Theta_0}u^T\big)_j=(-1)^{N-1} \prod_{k}\big(\theta^{(j)}_0+\theta^{(k)}_\infty\big), \qquad
\big(v G_{\Theta_\infty,\Theta_0}^{-1}\big)_m = \prod_{k\ne m}\big( \theta^{(k)}_0-\theta^{(m)}_0\big)^{-1}.\label{exprG0}
\end{gather}

\begin{Lemma}\label{lemmahyp}Let
\begin{gather}\label{3pt}
\partial_z\Phi(z) = \Phi(z) A(z),\qquad A(z) =\frac{ A_0}{z}+\frac{A_1}{z-1},
\end{gather}
be a Fuchsian system with $ A_1$ given by \eqref{ME_Ajm} and $A_0+ A_1+\Theta_{\infty}=0$.
\begin{enumerate}\itemsep=0pt
\item[$1.$] The system \eqref{3pt} has a unique solution $\Phi^{(\infty)}(z)$ with the asymptotics $\Phi^{(\infty)}(z\to\infty)\simeq(-z)^{-\Theta_\infty}\big(\mathds 1+O\big(z^{-1}\big)\big)$. It is explicitly given by
\begin{gather*}
\Phi^{(\infty)}(z)=R(-z)^{-\Theta_\infty} \Psi_{\Theta_{\infty},\Theta_0}(1/z) R^{-1},
\end{gather*}
with
\begin{gather}
\big(\Psi_{\Theta_{\infty},\Theta_0}(1/z)\big)_{jm}\nonumber\\
\qquad{} = N^{(\infty)}_{jm} z^{\delta_{jm}-1}
{}_NF_{N-1}\left(\genfrac{}{}{0pt}{0}{\bigl\{1-\delta_{jm}+\theta^{(k)}_0+\theta^{(j)}_\infty\bigr\}_{k=\overline{1,N}} }
{\bigl\{1+\theta^{(j)}_\infty-\theta^{(k)}_\infty
+\delta_{mk}-\delta_{jm}\bigr\}_{k=\overline{1,N};k\ne j}}\Bigg|\,\frac{1}{z}\right),\label{hyperg}
\end{gather}
where $_NF_{N-1}$ is the generalized hypergeometric function, $j,m=1,\ldots,N$, and
\begin{gather*}
N^{(\infty)}_{jm}=\begin{cases}\displaystyle\frac{\prod\limits_{k}\big(\theta^{(j)}_\infty+\theta^{(k)}_0\big) \prod\limits_{k\ne m}\big(\theta^{(m)}_\infty-\theta^{(k)}_\infty\big)^{-1}}{1+\theta^{(j)}_\infty-\theta^{(m)}_\infty},& j\ne m,\\
1, & j=m.
\end{cases}
\end{gather*}
\item[$2.$] Similarly, there exists a unique solution $\Phi^{(0)}(z)$ of \eqref{3pt} with the normalized asymptotics $\Phi^{(0)}(z\to 0) \simeq (-z)^{\Theta_0}(\mathds 1+O(z)) G_0$. It can be written as
\begin{gather*}
\Phi^{(0)}(z) = \bar R(-z)^{\Theta_0}\Psi_{\Theta_0,\Theta_{\infty}}(z) G_{\Theta_\infty,\Theta_0} {R}^{-1},
\end{gather*}
where $\Psi_{\Theta_0,\Theta_{\infty}}(z)$ is obtained from the previous formulas by the exchange $\Theta_0\leftrightarrow\Theta_{\infty}$, \smash{$z\leftrightarrow z^{-1}$}, just as the notation suggests.
\item[$3.$] The solutions $\Phi^{(\infty)}(z)$ and $\Phi^{(0)}(z)$ are related by $\Phi^{(0)}(z)=\bar R S_{\Theta_0,\Theta_\infty} R^{-1}\Phi^{(\infty)}(z)$, where the matrix elements of $S_{\Theta_0,\Theta_\infty}$ are given by
\begin{gather*}
\big( S_{\Theta_0,\Theta_\infty}\big)_{lj}=
\prod_{k\ne l} \frac{\Gamma\big( 1+\theta^{(l)}_0-\theta^{(k)}_0\big)}{\Gamma\big(1 - \theta^{(j)}_\infty -\theta^{(k)}_0 \big)}
\prod_{k\ne j} \frac{\Gamma\big( \theta^{(k)}_\infty-\theta^{(j)}_\infty \big)}{\Gamma\big(\theta^{(k)}_\infty+\theta^{(l)}_0 \big)}.
\end{gather*}
The inverse transformation is obtained from $(S_{\Theta_0,\Theta_\infty})^{-1}=S_{\Theta_\infty,\Theta_0}$.
\item[$4.$] The inverse matrix $\Psi_{\Theta_{\infty},\Theta_0}(z)$ can be written as
\begin{gather}\label{invps}
\Psi_{\Theta_{\infty},\Theta_0}(z)^{-1}=D_{\Theta_{\infty},\Theta_0}\Psi_{-\Theta_{\infty},-\Theta_0}(z)^T D_{\Theta_{\infty},\Theta_0}^{-1},
\end{gather}
with $D_{\Theta_{\infty},\Theta_0}=\operatorname{diag}\big\{\frac{u^{(1)}}{v^{(1)}},\ldots,\frac{u^{(N)}}{v^{(N)}}\big\}$.
\end{enumerate}
\end{Lemma}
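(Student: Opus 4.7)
The plan is to reduce all four parts of the lemma to classical facts about the generalized hypergeometric system ${}_N F_{N-1}$. The key observation is that the rank-one form of $A_1$ given by \eqref{ME_Ajm} is precisely the semi-degeneration that makes the 3-point system \eqref{3pt} rigid and essentially scalar: after eliminating off-diagonal unknowns, each matrix entry satisfies a scalar ODE of ${}_N F_{N-1}$ type with shifted parameters. For part 1, existence and uniqueness of a holomorphic germ with the stated asymptotics at $z=\infty$ follow from standard Frobenius theory at the regular singular point, granted the (implicit) genericity condition that no two eigenvalues of $\Theta_\infty$ differ by a nonzero integer. To obtain the explicit form, I would make the gauge substitution $\Phi(z)=R(-z)^{-\Theta_\infty}\Psi(1/z)R^{-1}$, rewrite \eqref{3pt} as an equation for $\Psi(w)$ in the variable $w=1/z$ normalized by $\Psi(0)=\mathds 1$, and then verify by induction on the Taylor coefficients that the $(j,m)$-entry of $\Psi$ is the ${}_N F_{N-1}$ series displayed in \eqref{hyperg}. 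The prefactor $N^{(\infty)}_{jm}$ is forced by combining \eqref{ME_Ajm} and \eqref{uvdeff} to read off the first-order coefficient of the series from $A_1$.

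Part 2 then follows from part 1 applied to the pulled-back system under the change of variable $z\leftrightarrow z^{-1}$, which sends \eqref{3pt} to an isomorphic 3-point Fuchsian system with the roles of $0$ and $\infty$ (hence of $\Theta_0$ and $\Theta_\infty$) swapped. The diagonalizing transformation $G_{\Theta_\infty,\Theta_0}$ in \eqref{exprG0}, together with the freely chosen diagonal matrix $\bar R$, is exactly the gauge change required to align the normalization at $z=0$ with the statement, so no further work is needed beyond tracking prefactors.

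For part 3, since $\Phi^{(0)}$ and $\Phi^{(\infty)}$ are two fundamental solutions of the same system they must differ by a constant matrix in $\mathrm{GL}(N,\Cb)$, which is by definition the connection matrix $\bar R S_{\Theta_0,\Theta_\infty} R^{-1}$. Its entries can be computed by specializing both sides to a common regime: either via Mellin--Barnes integral representations of ${}_N F_{N-1}$, or equivalently via the classical N\"orlund--Slater connection formula between $z=0$ and $z=\infty$, which expresses each matrix element as a ratio of products of Gamma functions. Assembling these scalar coefficients together with the normalization factors from \eqref{uvdeff} and \eqref{exprG0} produces the displayed formula. The identity $(S_{\Theta_0,\Theta_\infty})^{-1}=S_{\Theta_\infty,\Theta_0}$ is then immediate from the symmetry of the construction under $\Theta_0\leftrightarrow \Theta_\infty$.

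Part 4 uses the duality $\Phi\mapsto (\Phi^{-1})^T$: differentiating $\Phi\Phi^{-1}=\mathds 1$ gives $(\Phi^{-1})'=-A\Phi^{-1}$, so $(\Phi^{-1})^T$ solves the Fuchsian system whose residues are $-A_0^T,-A_1^T,-\Theta_\infty$; the rank-one residue is obtained from $A_1$ by swapping the vectors $u$ and $v$ of \eqref{uvdeff}, which is precisely the replacement $(\Theta_0,\Theta_\infty)\mapsto (-\Theta_0,-\Theta_\infty)$. Matching normalizations at $\infty$ forces the diagonal conjugation $D_{\Theta_\infty,\Theta_0}=\operatorname{diag}\{u^{(k)}/v^{(k)}\}$, and \eqref{invps} then follows from the uniqueness clause of part 1 applied to this dual system. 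The main obstacle in the whole proof is the careful bookkeeping in part 1: verifying that the scalar ODE satisfied by each entry $(\Psi_{\Theta_\infty,\Theta_0})_{jm}$ is \emph{exactly} the ${}_N F_{N-1}$ equation with the parameters displayed in \eqref{hyperg}, and that the coefficient $N^{(\infty)}_{jm}$ comes out correctly. This step is routine but requires a patient matching of Pochhammer symbols against the explicit rank-one form \eqref{ME_Ajm} of $A_1$, and it is the only place where the semi-degeneracy is used in an essential way.
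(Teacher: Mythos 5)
Your proposal is sound, but note that the paper itself offers no proof of this lemma: it is stated as the (classical) explicit solution of a rigid semi-degenerate $3$-point system, with only the auxiliary formulas \eqref{ME_Ajm}, \eqref{uvdeff}, \eqref{exprG0} supplied beforehand, so there is no argument of the authors to compare against. Your sketch is a correct reconstruction of the standard verification: Frobenius theory plus the non-resonance of $\Theta_\infty$ (which you rightly flag as implicit; the paper only imposes it later, in Theorem~\ref{theofstfr}) gives existence and uniqueness in part~1, and checking the series \eqref{hyperg} against the recursion for the Taylor coefficients of $\Psi(w)$, $w=1/z$, is exactly the routine Pochhammer bookkeeping you describe. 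For part~2 your phrase ``no further work beyond tracking prefactors'' is accurate precisely because of \eqref{exprG0}: pulling back under $z\mapsto 1/z$ puts the residue $-A_0-A_1=\Theta_\infty$ at the new origin and $A_0$ at the new infinity, and conjugating by $G_0=\bar R\,G_{\Theta_\infty,\Theta_0}R^{-1}$ maps $A_1$ to the canonical rank-one form for the swapped parameters with $R$ replaced by $\bar R$ (the signs $(-1)^{N-1}$ from $G_{\Theta_\infty,\Theta_0}u^T$ and $vG_{\Theta_\infty,\Theta_0}^{-1}$ cancel), so part~1 applies verbatim. In part~3 the connection coefficients do come from the classical $0\leftrightarrow\infty$ connection (Mellin--Barnes) formula for ${}_NF_{N-1}$; the only point where ``immediate from symmetry'' hides a small check is the identity $(S_{\Theta_0,\Theta_\infty})^{-1}=S_{\Theta_\infty,\Theta_0}$, which needs the same $R\leftrightarrow\bar R$ bookkeeping as part~2 to see that no extra diagonal factors survive. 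Part~4 is argued correctly: $\big(\Phi^{-1}\big)^T$ solves the system with residues $-A_0^T,-A_1^T$, and since $-A_1^T=R^{-1}\,v^T\otimes u\,R$ while $v^T\otimes u=D^{-1}\big(u^T\otimes v\big)D$ with $D=\operatorname{diag}\big\{u^{(k)}/v^{(k)}\big\}$, the dual system is the canonical one for $(-\Theta_0,-\Theta_\infty)$ up to exactly the conjugation $D_{\Theta_\infty,\Theta_0}$, and uniqueness from part~1 yields \eqref{invps}.
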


We are now prepared to formulate the main result of this section, expressing the generic FST$_N$ tau function in terms of semi-degenerate monodromy of the associated Fuchsian system. The monodromy will be parameterized in the same as above in \eqref{Mt1}--\eqref{M0Mt}, \eqref{skldef} and will satisfy the same genericity conditions. In particular, $M_0$, $M_{\infty}$ and $M_0 M_t$ are assumed diagonalizable.
\begin{Theorem} \label{theofstfr}Let the pairs $\Theta_0$, $-\mathfrak S$ and $\mathfrak S$, $\Theta_\infty$ have disjoint simple non-resonant spectra. The corresponding FST$_N$ tau function admits a Fredholm determinant representation,
\begin{gather}\label{taufstfr}
\tau_{\mathrm{FST}}(t)=\operatorname{const}\cdot t^{\frac12\operatorname{Tr}\big(\mathfrak S^2-\Theta_0^2-\Theta_t^2\big)}
\det \left(\begin{matrix}
\mathds 1 & \mathsf a_{+-}\\
\mathsf a_{-+}
& \mathds 1
\end{matrix}\right),
\end{gather}
where $\mathsf a_{\pm\mp}$ are integral operators defined by \eqref{aops}, with
\begin{gather}\label{psipmfst}
\Psi_+(z)=\Psi_{\mathfrak S,\Theta_\infty}(z),\qquad \Psi_-(z)=D^{-1} t^{-\mathfrak S}\Psi_{-\mathfrak S,\Theta_0}\big(\tfrac tz\big)
\end{gather}
and $D=\operatorname{diag}\big(\exp\tilde\beta^{(1)},\ldots,\exp\tilde\beta^{(N)}\big)$ with $\sum_k\tilde\beta^{(k)}=0$. The conjugacy class of monodromy contains the following representative, expressed in terms of $\mathfrak S$, $D$:
\begin{gather*}
 M_0=D^{-1}S_{-\mathfrak S,\Theta_0}{\rm e}^{2\pi{\rm i} \Theta_0}S_{\Theta_0,-\mathfrak S}D,\qquad M_0M_t=( M_1 M_\infty)^{-1}={\rm e}^{2\pi{\rm i} \mathfrak S},\\
 M_{\infty}=S_{\mathfrak S,\Theta_\infty}{\rm e}^{2\pi{\rm i} \Theta_\infty}S_{\Theta_\infty,\mathfrak S}.
\end{gather*}
\end{Theorem}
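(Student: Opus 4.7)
The starting point is the topological decomposition \eqref{topodec}, which reduces the computation of the $4$-point tau function to that of two auxiliary $3$-point tau functions together with the Fredholm determinant built from the normalized solutions $\Psi_{\pm}(z)=(-z)^{-\mathfrak S}\Phi_{\pm}(z)$. Under the hypotheses of the theorem, both auxiliary Fuchsian systems are semi-degenerate: $\Phi_+$ lives on $\mathbb P^1\setminus\{0,1,\infty\}$ with local exponents $\mathfrak S$ at $0$, $\Theta_\infty$ at $\infty$, and the rank-one matrix $A_1$ at $z=1$; $\Phi_-$ lives on $\mathbb P^1\setminus\{0,t,\infty\}$ with exponents $\Theta_0$ at $0$, $-\mathfrak S$ at $\infty$, and a rank-one residue at $z=t$. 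Both are therefore rigid and admit an explicit solution via Lemma~\ref{lemmahyp}.

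\textbf{Construction of }$\Psi_{\pm}$. Applying Lemma~\ref{lemmahyp} directly to $\Phi_+$ with $(\Theta_0,\Theta_\infty)\leadsto(\mathfrak S,\Theta_\infty)$ yields the solution normalized at infinity: $(-z)^{-\mathfrak S}\Phi_+(z)=\Psi_{\mathfrak S,\Theta_\infty}(z)$, which is precisely $\Psi_+$ in \eqref{psipmfst}. For $\Phi_-$ the pole is at $z=t$ rather than $z=1$, so first I change variable $w=z/t$ to bring the system to standard form; this produces an overall factor $(-t)^{-\mathfrak S}$ between the two normalizations. Applying the Lemma in its second form (solution normalized at $z=0$) with $(\Theta_0,\Theta_\infty)\leadsto(\Theta_0,-\mathfrak S)$ yields $\Phi_-(z)\propto(-z)^{\Theta_0}\Psi_{\Theta_0,-\mathfrak S}(z/t)\cdot(\text{constant})$; multiplication by $(-z)^{-\mathfrak S}$ and use of the symmetry statement~(4) of Lemma~\ref{lemmahyp} (to pass to $\Psi_{-\mathfrak S,\Theta_0}(t/z)$) produces the claimed formula for $\Psi_-$, up to a remaining diagonal gauge matrix $D$. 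The latter encodes the $N-1$ free parameters in the rescaling of $(u,v)$ that survive after fixing $\bar u^{(k)}=1$, and is precisely what is left once an overall scale has been absorbed in the definition of the tau function.

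\textbf{Matching the monodromy.} Around the loops $\gamma_0$ and $\gamma_\infty$ the solutions $\Phi_-$ and $\Phi_+$ have local monodromy $\exp(2\pi i\Theta_0)$ and $\exp(2\pi i\Theta_\infty)$, respectively, conjugated by the corresponding connection matrices $S_{\Theta_0,-\mathfrak S}$ and $S_{\mathfrak S,\Theta_\infty}$ furnished by statement~(3) of Lemma~\ref{lemmahyp}. The common loop around the intermediate circle $\mathcal C$ carries monodromy $e^{2\pi i\mathfrak S}$ for both solutions by construction, so $M_0M_t=e^{2\pi i\mathfrak S}$ follows from the definition of $\Psi_\pm$. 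The explicit formulas for $M_0$ and $M_\infty$ in the theorem are then read off from the two connection problems, with the gauge $D$ implementing the diagonal conjugation compatible with Lemma~\ref{lem31}. This simultaneously verifies that the Fredholm-determinant piece of \eqref{topodec} is well-defined, since $\mathsf a_{\pm\mp}$ act on $H_\mp$ as trace-class operators on $\mathcal C$ (the integrable-kernel form \eqref{aops} and the decay $|\Psi_\pm(z)^{\pm 1}-\mathds 1|=O(z^{\mp 1})$ give Hilbert--Schmidt bounds on both off-diagonal blocks, whose product is trace class).

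\textbf{The prefactor and the final identification.} The $3$-point tau function associated with $\Phi_+$ is trivially $1$ (no moving singular point), and the one associated with $\Phi_-$ is a power of $t$: applying \eqref{jmutau} and using the fact that, for a rigid system whose only moving singularity at $z=t$ has spectrum $(\Lambda_t,0,\ldots,0)$, the residue $\tfrac12\operatorname{Res}_{z=t}\operatorname{Tr} A^2(z)$ reduces to $\tfrac1{2t}\operatorname{Tr}(\mathfrak S^2-\Theta_0^2-\Theta_t^2)$ by the Fuchs relation together with conservation of local exponents. Integrating yields the prefactor $t^{\frac12\operatorname{Tr}(\mathfrak S^2-\Theta_0^2-\Theta_t^2)}$ in \eqref{taufstfr}. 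Combining the three factors gives the announced formula.

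\textbf{Main obstacle.} The genuinely delicate step is bookkeeping the normalization: one must track simultaneously (i) the constant connection matrix $\bar R G_{\Theta_\infty,\Theta_0}R^{-1}$ that appears in the $0$-normalized solution of Lemma~\ref{lemmahyp}, (ii) the rescaling $z\mapsto z/t$ for $\Phi_-$, (iii) the transpose/inverse identity \eqref{invps}, and (iv) the residual diagonal freedom $D$. Getting the signs, the factor $(-t)^{\pm\mathfrak S}$, and the placement of $D$ correct is what makes the monodromy assemble into a genuine conjugacy class satisfying $M_0M_tM_1M_\infty=\mathds 1$. Once this matching is performed, the remaining content of the theorem is essentially a restatement of \eqref{topodec} evaluated by Lemma~\ref{lemmahyp}.
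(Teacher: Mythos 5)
Your overall route is the paper's own: combine the pants decomposition \eqref{topodec} with the explicit solutions of the two rigid $3$-point problems from Lemma~\ref{lemmahyp}, read off the auxiliary tau functions $t^{\frac12\operatorname{Tr}(\mathfrak S^2-\Theta_0^2-\Theta_t^2)}$ and $1$, and let the residual diagonal freedom become the matrix $D$; the monodromy representative then comes from the connection matrices of statement~(3). So the plan is correct and essentially identical to the paper.

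There is, however, one step that would fail as written: your construction of $\Psi_-$. You start from the $0$-normalized solution $\Phi^{(0)}(z)\propto(-z)^{\Theta_0}\Psi_{\Theta_0,-\mathfrak S}(z/t)\cdot(\mathrm{const})$ and invoke statement~(4) of Lemma~\ref{lemmahyp} ``to pass to $\Psi_{-\mathfrak S,\Theta_0}(t/z)$''. Statement~(4), i.e., \eqref{invps}, is the transpose--inverse identity $\Psi_{\Theta_\infty,\Theta_0}(z)^{-1}=D_{\Theta_\infty,\Theta_0}\Psi_{-\Theta_\infty,-\Theta_0}(z)^TD_{\Theta_\infty,\Theta_0}^{-1}$: it keeps the argument fixed and negates both subscripts under transposition, so it cannot turn the $0$-normalized data $\Psi_{\Theta_0,-\mathfrak S}(z/t)$ into the $\infty$-normalized data $\Psi_{-\mathfrak S,\Theta_0}(t/z)$; changing the normalization point costs the non-diagonal Gamma-function connection matrix $S_{\Theta_0,-\mathfrak S}$ of statement~(3). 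Moreover the choice of normalization is forced, not optional: $\Psi_-(z)=(-z)^{-\mathfrak S}\Phi_-(z)$ must be single-valued, holomorphic and invertible on $|z|>t$ up to $z=\infty$ for the kernel $\mathsf a_{-+}$ in \eqref{aops} to annihilate $H_-$ and map into $H_-$. This requires $\Phi_-$ to have \emph{diagonal} left monodromy $\mathrm e^{2\pi\mathrm i\mathfrak S}$ around the annulus, i.e., to be the solution normalized at infinity, which is statement~(1) with $\Theta_\infty\mapsto-\mathfrak S$, $\Theta_0\mapsto\Theta_0$ after the rescaling $z\mapsto z/t$; with your $0$-normalized choice the monodromy around the annulus is a non-diagonal conjugate of $\mathrm e^{2\pi\mathrm i\mathfrak S}$ (by statement~(3) the two solutions differ by a constant left factor $\bar R S_{\Theta_0,-\mathfrak S}R^{-1}$), which $(-z)^{-\mathfrak S}$ cannot cancel. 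A milder instance of the same bookkeeping slip occurs for $\Psi_+$: $\Psi_{\mathfrak S,\Theta_\infty}(z)$ is attached to the solution of the outer problem normalized at $z=0$ (where the exponent is $\mathfrak S$), not ``at infinity'' --- that is exactly what makes $\Psi_+$ holomorphic and invertible on $|z|<1$. Once these identifications are routed through the correct parts of Lemma~\ref{lemmahyp}, the rest of your argument (monodromy via statement~(3), the $t$-power from \eqref{jmutau}, the role of $D$, trace-class considerations) is sound and coincides with the paper's proof.
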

\begin{Remark}
The last theorem extends Theorem~A of \cite{GL16} to higher rank $N\ge2$. Note, however, that in \cite{GL16} all monodromies are assumed to have unit determinant, whereas here we place ourselves in the gauge where the rank of $M_{t,1}-\mathds 1$ is~1. This leads to an additional elementary factor of the form $(1-t)^{\alpha}$ (called the $\mathrm{U}(1)$-factor in the gauge theory context) which should be taken into account when comparing both results for $N=2$.
\end{Remark}

\section{Series representation\label{secseries}}
The series representation of $\tau_{\mathrm{FST}}(t)$ is obtained by expanding the determinant \eqref{taufstfr} into a sum of principal minors. The basic building blocks of this construction are given by the coefficients of the Fourier expansion of the kernel
\begin{gather}\check{\mathsf a}(z,z') \equiv \check{\mathsf a}( z,z'\,|\,\Theta_0,\Theta_\infty)=
\frac{\Psi_{\Theta_{\infty},\Theta_0}\big(z^{-1}\big)\Psi_{\Theta_{\infty},\Theta_0}\big( z'^{-1}\big)^{-1}-\mathds 1}{z-z'}\nonumber\\
\hphantom{\check{\mathsf a}(z,z')}{} =\sum_{p,q\in\mathbb Z'_+}\check{\mathsf a}_{-q,p} z^{-q-\frac12}z'^{-p-\frac12},\label{apq1}
\end{gather}
where $\mathbb Z'=\mathbb Z+\frac12$ and $\mathbb Z'_{\pm}=\mathbb Z'_{\gtrless 0}$. The Fourier modes $\check{\mathsf a}_{-q,p}=\check{\mathsf a}_{-q,p}(\Theta_0,\Theta_\infty)$ are themselves $N\times N$ matrices whose entries will be denoted by $\check{\mathsf a}_{-q,k;p,l}$, with $k,l=1,\ldots, N$.
\begin{Lemma}\label{lemacheck}We have
\begin{subequations}
\begin{gather}\label{lemapq1}
\check{\mathsf a}_{-q,k;p,l}= \frac{\varphi_{-q}^{(k)} \bar\varphi_{p}^{(l)}}{p+q+\theta_{\infty}^{(k)}-\theta_{\infty}^{(l)}},
\end{gather}
where $\varphi_{-q}^{(k)}=\varphi_{-q}^{(k)}(\Theta_0,\Theta_\infty)$ and $ \bar\varphi_{p}^{(l)}= \bar\varphi_{p}^{(l)}(\Theta_0,\Theta_\infty)$ are given by
\begin{gather}
\varphi_{-q}^{(k)}=\frac{\prod\limits_{m=1}^N\big(\theta_{\infty}^{(k)}+\theta_0^{(m)}\big)_{q+\frac12}}{\big(q-\frac12\big)! \prod\limits_{m\ne k}^N\big(1+\theta_{\infty}^{(k)}-\theta_{\infty}^{(m)}\big)_{q-\frac12}},\nonumber\\
\bar\varphi_{p}^{(l)}=\frac{(-1)^{N} \prod\limits_{m=1}^N\big(1-\theta_{\infty}^{(l)}-\theta_0^{(m)}\big)_{p-\frac12}}{\big(p-\frac12\big)! \prod\limits_{m\ne l}^N\big(\theta_{\infty}^{(m)}-\theta_{\infty}^{(l)}\big)_{p+\frac12}}.\label{lemapq2}
\end{gather}
\end{subequations}
\end{Lemma}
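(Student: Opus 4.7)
My plan is to read off the Fourier coefficients directly from the explicit hypergeometric series of Lemma~\ref{lemmahyp}. The first step is to use part~4 of that lemma, formula~\eqref{invps}, to express the numerator $\Psi(1/z)\Psi^{-1}(1/z')-\mathds 1$ entirely in terms of the matrix $\Psi_{\Theta_\infty,\Theta_0}$ and its sign-flipped counterpart. Plugging in the hypergeometric series~\eqref{hyperg}, the $(k,l)$-entry becomes a finite sum over an intermediate index $j=1,\ldots,N$ of products of two $_NF_{N-1}$ series, one in $z^{-1}$ and one in $z'^{-1}$. A direct inspection shows that the Pochhammer quotients $\varphi^{(k)}_{-q}$ and $\bar\varphi^{(l)}_p$ of~\eqref{lemapq2} are exactly the row/column factors that appear, the sign flip $\Theta_{0,\infty}\mapsto -\Theta_{0,\infty}$ being what produces the factor $(1-\theta_\infty^{(l)}-\theta_0^{(m)})_{p-1/2}$ in $\bar\varphi^{(l)}_p$.

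Next I deal with the $(z-z')^{-1}$ factor by Fourier matching. Since the numerator vanishes at $z=z'$, the ratio $\check{\mathsf a}(z,z')$ is regular near $(z,z')=(\infty,\infty)$ and admits a convergent double expansion $\check{\mathsf a}(z,z')_{kl}=\sum_{a,b\ge 1}c^{(k,l)}_{a,b}z^{-a}z'^{-b}$ with $c^{(k,l)}_{a,b}=\check{\mathsf a}_{-a+1/2,k;b-1/2,l}$. Multiplying~\eqref{apq1} by $z-z'$ and matching coefficients of $z^{-A}z'^{-B}$ yields the boundary data
\[
c^{(k,l)}_{1,b}=[z'^{-b}]\Psi^{-1}(1/z')_{kl},\qquad c^{(k,l)}_{a,1}=-[z^{-a}]\Psi(1/z)_{kl},
\]
which are consistent at $(a,b)=(1,1)$ because $\Psi(1/z)\Psi^{-1}(1/z)=\mathds 1$ forces the two leading Laurent coefficients to be opposite, together with the recursion
\[
c^{(k,l)}_{a+1,b}-c^{(k,l)}_{a,b+1}=\sum_{j=1}^{N}[z^{-a}]\Psi(1/z)_{kj}\cdot[z'^{-b}]\Psi^{-1}(1/z')_{jl},\qquad a,b\ge 1.
\]
The boundary values reproduce~\eqref{lemapq1} in the specializations $q=1/2$ or $p=1/2$, where only the $n=0$ term of one Pochhammer product survives and a short manipulation based on $(x)_{n+1}=x(x+1)_n$ collapses the formula to the corresponding entry of~\eqref{hyperg}.

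The main obstacle is to verify the recursion for generic $a,b\ge 1$. Substituting~\eqref{lemapq1}, the left-hand side assumes the compact form $\bigl(a+b+\theta_\infty^{(k)}-\theta_\infty^{(l)}\bigr)^{-1}\bigl(\varphi^{(k)}_{-a-1/2}\bar\varphi^{(l)}_{b-1/2}-\varphi^{(k)}_{-a+1/2}\bar\varphi^{(l)}_{b+1/2}\bigr)$, and after clearing this denominator the claim becomes a rational-function identity in the monodromy exponents. The essential ingredient is the Lagrange partial-fraction identity
\[
\sum_{j=1}^{N}\prod_{m\ne j}\frac{1}{\theta_\infty^{(j)}-\theta_\infty^{(m)}}\cdot\frac{1}{\xi-\theta_\infty^{(j)}}=\frac{1}{\prod_{m=1}^{N}(\xi-\theta_\infty^{(m)})},
\]
which collapses the sum over $j$ on the right-hand side of the recursion and produces the denominator $a+b+\theta_\infty^{(k)}-\theta_\infty^{(l)}$. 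The remaining manipulations are routine Pochhammer bookkeeping, closely parallel to the rigidity computation underlying the connection-coefficient formula in part~3 of Lemma~\ref{lemmahyp}.
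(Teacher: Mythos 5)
Your skeleton is legitimate: multiplying \eqref{apq1} by $z-z'$ gives boundary data $\check{\mathsf a}_{-q,k;\frac12,l}$, $\check{\mathsf a}_{-\frac12,k;p,l}$ plus a two-index recursion whose solution is unique, and your boundary identifications do match \eqref{lemapq1}--\eqref{lemapq2} (e.g.\ for $k\ne l$ one checks $(1+\theta_\infty^{(k)}-\theta_\infty^{(l)})(2+\theta_\infty^{(k)}-\theta_\infty^{(l)})_{a-1}=(1+\theta_\infty^{(k)}-\theta_\infty^{(l)})_{a-1}(a+\theta_\infty^{(k)}-\theta_\infty^{(l)})$ and the claimed Cauchy form reproduces $-[z^{-a}]\Psi_{kl}$). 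The genuine gap is the recursion for generic $a,b\ge1$, which is where all the content of the lemma sits and which you only assert. Writing out the coefficients, for $j\ne k$ one finds $[z^{-a}]\bigl(\Psi_{\Theta_\infty,\Theta_0}(1/z)\bigr)_{kj}=\varphi^{(k)}_{-(a-\frac12)}\bigl[(a+\theta_\infty^{(k)}-\theta_\infty^{(j)})\prod_{m\ne j}(\theta_\infty^{(j)}-\theta_\infty^{(m)})\bigr]^{-1}$, whereas the $j=k$ coefficient involves $\prod_{m\ne k}(1+\theta_\infty^{(k)}-\theta_\infty^{(m)})_a$ and does not fit this pattern (similarly $j=l$ on the $\Psi^{-1}$ side). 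So the $j$-sum on the right of your recursion is not of the simple Lagrange form you quote: each generic summand carries \emph{two} $j$-dependent shifted factors, $(a+\theta_\infty^{(k)}-\theta_\infty^{(j)})^{-1}$ and $(b-\theta_\infty^{(l)}+\theta_\infty^{(j)})^{-1}$, a Vandermonde factor, and degree-$N$ numerator polynomials in $\theta_\infty^{(j)}$, and the exceptional terms $j=k$, $j=l$ have a different structure. The one-variable partial-fraction identity you name produces the overall factor $(a+b+\theta_\infty^{(k)}-\theta_\infty^{(l)})^{-1}$ but cannot by itself collapse the sum; what remains is a nontrivial bilinear contiguity identity for $_NF_{N-1}$ coefficients (residue-at-infinity bookkeeping plus the exceptional terms), which is essentially equivalent to the statement being proved and is not ``routine Pochhammer bookkeeping''.

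The paper avoids this identity altogether, and you could too by a small change of tactic: instead of multiplying \eqref{apq1} by $z-z'$, apply the first-order operator $z\partial_z+z'\partial_{z'}+1$. By the ODE \eqref{apq2} satisfied by $\Psi_{\Theta_\infty,\Theta_0}(1/z)$ and the rank-one residue $R^{-1}A_1R=-u^T\otimes v$ at $z=1$, the factor $z-z'$ cancels and the result is the factorized rank-one expression $\varphi(z)\otimes\bar\varphi(z')$ of \eqref{apq3}, with no sum over an intermediate index; in Fourier modes this reads $(p+q)\check{\mathsf a}_{-q,p}+[\Theta_\infty,\check{\mathsf a}_{-q,p}]=\varphi_{-q}\otimes\bar\varphi_p$ and yields \eqref{lemapq1} algebraically, without any recursion. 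The explicit modes \eqref{lemapq2} then come from the diagonal of \eqref{apq2} (which gives $\bigl(z\partial_z\Psi\bigr)_{kk}=-\varphi^{(k)}(z)v^{(k)}$) and from the analogous equation for $\Psi^{-1}$ via the inversion formula \eqref{invps} -- exactly the ``diagonal'' computation your plan already gets right at the boundary. If you insist on your route, you must actually prove the bilinear coefficient identity; as written, the proposal leaves its central step unestablished.
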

\begin{proof}
From the differential equation \eqref{3pt} for $\Phi^{(\infty)}(z)$ it follows that $\Psi_{\Theta_{\infty},\Theta_{0}}\big(z^{-1}\big)$ in its turn satisfies
\begin{gather}\label{apq2}
z\partial_z \Psi_{\Theta_{\infty},\Theta_{0}}\big(z^{-1}\big)=\big[\Theta_{\infty},\Psi_{\Theta_{\infty},\Theta_{0}}\big(z^{-1}\big)\big]+\Psi_{\Theta_{\infty},\Theta_{0}}\big(z^{-1}\big) \frac{R^{-1} A_1 R}{z-1}.
\end{gather}
Applying to the decomposition \eqref{apq1} the operator $z\partial_z+z'\partial_{z'}+1$ and using \eqref{apq2}, we get
\begin{gather}\label{apq3}
\sum_{p,q\in\mathbb Z'_+}\big((p+q)\check{\mathsf a}_{-q,p}+\big[\Theta_\infty,\check{\mathsf a}_{-q,p}\big]\big) z^{-q-\frac12}z'^{-p-\frac12}=\varphi(z)\otimes \bar\varphi(z'),
\end{gather}
where $u$, $v$ are defined by \eqref{uvdeff} and
\begin{gather*}
\varphi(z)=\frac{\Psi_{\Theta_{\infty},\Theta_{0}}\big(z^{-1}\big)}{z-1} u^T,\qquad \bar\varphi(z)=-v \frac{\Psi_{\Theta_{\infty},\Theta_{0}}\big(z^{-1}\big)^{-1}}{z-1}.
\end{gather*}
The factorization of the right hand side of \eqref{apq3} implies that the integral operator with the kernel $\check{\mathsf a}(z,z')$ becomes a Cauchy type matrix in the Fourier basis. Namely, the decompositions
\begin{gather*}
\varphi(z)=\sum_{q\in\mathbb Z'_+}\varphi_{-q} z^{-q-\frac12}, \qquad \bar\varphi(z)=\sum_{p\in\mathbb Z'_+}\bar\varphi_{p} z^{-p-\frac12},
\end{gather*}
yield the structure \eqref{lemapq1}. For the computation of Fourier components of $\varphi(z)$, $\bar\varphi(z)$ observe that, taking the diagonal elements of the relation \eqref{apq2}, one obtains
\begin{gather*}
\big( z\partial_z \Psi_{\Theta_{\infty},\Theta_{0}}\big(z^{-1}\big)\big)_{kk}=-\varphi^{(k)}(z) v^{(k)}.
\end{gather*}
Substituting therein the hypergeometric expressions~\eqref{hyperg}, we immediately get the first of relations~\eqref{lemapq2}. The explicit form of $\bar\varphi_p$ may be extracted in a similar manner from the diagonal of a counterpart of the equation~\eqref{apq2} satisfied by $\Psi_{\Theta_{\infty},\Theta_{0}}\big(z^{-1}\big)^{-1}$. Recall that this inverse can be expressed by means of~\eqref{invps}.
\end{proof}

The integral kernels $\mathsf a_{\pm\mp}(z,z')$ of the operators appearing in the Fredholm determinant \eqref{taufstfr} have the structure analogous to \eqref{apq1}. Decomposing them as
\begin{gather*}
\mathsf a_{\pm\mp}(z,z')=\sum_{p,q\in\mathbb Z'_+} \mathsf a_{\pm q,\mp p} z^{\pm q-\frac12}z'^{\pm p-\frac12},
\end{gather*}
it follows from Theorem~\ref{theofstfr} (cf.\ equation~\eqref{psipmfst}) that
\begin{gather*}
\mathsf a_{-q,k;p,l}= {\rm e}^{\tilde\beta^{(l)}-\tilde\beta^{(k)}}t^{\sigma^{(l)}-\sigma^{(k)}+p+q}\check{\mathsf a}_{-q,k;p,l}(\Theta_0,-\mathfrak S),\\
\mathsf a_{p,l;-q,k}= \check{\mathsf a}_{-p,l;q,k}(\Theta_{\infty},\mathfrak S),
\end{gather*}
where $\check{\mathsf a}$'s are explicitly given by Lemma~\ref{lemacheck}.

The principal minor expansion of the determinant \eqref{taufstfr} can now be calculated following the scheme outlined in \cite[Section~2.2.1]{CGL}. We obtain
\begin{gather}
\tau_{\mathrm{FST}}(t)=t^{\frac12\big(\boldsymbol{\sigma}^2-\boldsymbol{\theta}_0^2-\boldsymbol{\theta}_t^2\big)}\sum_{\boldsymbol{\mathsf p},\boldsymbol{\mathsf q}\colon |\boldsymbol{\mathsf p}|=|\boldsymbol{\mathsf q}|}{\rm e}^{\boldsymbol{\tilde\beta}\cdot\boldsymbol{w}}t^{\boldsymbol{\sigma}\cdot\boldsymbol{w}+\sum\limits_{p\in\boldsymbol{\mathsf p}}p+\sum\limits_{q\in\boldsymbol{\mathsf q}}q}\nonumber \\
\hphantom{\tau_{\mathrm{FST}}(t)=}{} \times (-1)^{|\boldsymbol{\mathsf p}|} \det \check{\mathsf a}_{-\boldsymbol{\mathsf q},\boldsymbol{\mathsf p}}(\Theta_0,-\mathfrak S)
\det \check{\mathsf a}_{-\boldsymbol{\mathsf p},\boldsymbol{\mathsf q}}(\Theta_{\infty},\mathfrak S).\label{sertaufst}
\end{gather}
The notation is as follows:
\begin{itemize}\itemsep=0pt
\item $\boldsymbol{\mathsf p}=\mathsf p^{(1)}\sqcup\cdots\sqcup \mathsf p^{(N)}$, $\boldsymbol{\mathsf q}=\mathsf q^{(1)}\sqcup\cdots \sqcup\mathsf q^{(N)}$, where each of $\mathsf p^{(k)}$, $\mathsf q^{(k)}$ is a finite subset of $\mathbb Z'_+$. The elements of $\mathsf p^{(k)}$ and $-\mathsf q^{(k)}$ may be considered as positions of particles and holes in a~Maya diagram of color $k\in \{1,\ldots, N \}$. The summation in~\eqref{sertaufst} is carried over $N$-tuples of Maya diagrams satisfying the condition of global balance: $\sharp(\mathrm{particles})=\sharp(\mathrm{holes})$.
\item $\boldsymbol{w}=\big( w^{(1)},\ldots,w^{(N)}\big)$, where $w^{(k)}=\big|\mathsf p^{(k)}\big|-\big|\mathsf q^{(k)}\big|$ is the charge of the appropriate Maya diagram. Obviously, we have $\sum\limits_{k=1}^N w^{(k)}=0$, i.e., $\boldsymbol{w}$ belongs to the $\mathfrak{sl}_N$ root lattice, to be denoted by $\mathfrak Q_{N-1}$. We denote $\boldsymbol{\sigma}= \big(\sigma^{(1)},\ldots,\sigma^{(N)}\big)$, $\boldsymbol{\tilde\beta}=\big(\tilde\beta^{(1)},\ldots,\tilde\beta^{(N)}\big)$ etc., so that, for instance, $\boldsymbol{\sigma}\cdot\boldsymbol{w}=\sum\limits_{k=1}^N w^{(k)}\sigma^{(k)}$ and $\operatorname{Tr}\mathfrak S^2=\boldsymbol{\sigma}^2$.
\item $\check{\mathsf a}_{-\boldsymbol{\mathsf q},\boldsymbol{\mathsf p}}$ are $|\boldsymbol{\mathsf p}|\times |\boldsymbol{\mathsf p}|$ matrices obtained by restriction of $\check{\mathsf a}$ written in the Fourier basis to rows and columns labeled by $-\boldsymbol{\mathsf q}$ and $\boldsymbol{\mathsf p}$.
\item Using the correspondence between Maya diagrams and charged Young diagrams, the sum in~\eqref{sertaufst} can be alternatively rewritten as $\sum\limits_{\boldsymbol{w}\in\mathfrak Q_{N-1}}\sum\limits_{\bs Y\in\mathbb Y^N}$, where $\mathbb Y$ denotes the set of all Young diagrams. The reader is referred to~\cite{CGL,GL16,GL17} for the details.
\end{itemize}
Let us stress that the structure of \eqref{sertaufst} is not specific for semi-degenerate monodromy. The spectral constraints played a role above only in finding the explicit form of the 3-point solu\-tions~$\Psi_{\pm}(z)$ and matrix elements $\check{\mathsf a}_{-q,p}$ in Lemma~\ref{lemacheck}. An additional bonus is the Cauchy structure~\eqref{lemapq1} of these matrix elements which enables one to calculate the determinants in~\eqref{sertaufst} in a factorized form. In the generic situation, one has a sum of $N-1$ Cauchy matrices instead of just~1.

\section[Scalar $_N F_{N-1}$ kernel]{Scalar $\boldsymbol{_N F_{N-1}}$ kernel} \label{sec_fred}

\looseness=-1 It was shown in~\cite{GL16} that, for special monodromy data, the tau function of the Painlev\'e~VI equation (FST$_2$ system) can be expressed as a \textit{scalar} Fredholm determinant on the interval $\mathcal B:=[0,t]\subset \mathbb R$. The relevant integrable ${}_2F_1$ kernel is related to the so-called ZW-measures~\cite{BO}. Here we generalize the former result to $N>2$ by constructing a class of FST$_N$ tau functions which can be represented by scalar Fredholm determinants with an integrable ${}_NF_{N-1}$ kernel on~$\mathcal B$.

We denote $M_0M_t={\rm e}^{2\pi{\rm i} \mathfrak S}$, $\mathfrak S=\operatorname{diag}\big\{\sigma^{(1)},\ldots,\sigma^{(N)}\big\}$ as before, and consider a special monodromy for the inner auxiliary 3-point problem with singular points $0$, $t$, $\infty$:
\begin{gather}
M_t=1+{\rm e}^{-{\rm i}\pi {\mf S}} u^T \otimes v {\rm e}^{{\rm i}\pi {\mf S}} .\label{eq:monodromy:nilpotent}
\end{gather}
Here the row vectors $u$ and $v$ are defined by
\begin{gather}\label{spec_uv}
u^T=2\pi{\rm i}\sum_{i\in I}s_i e_i,\qquad v=\sum_{j\in J}s_j^{-1} e^j ,
\end{gather}
and $I$, $J$ are two non-empty non-intersecting subsets of all indices:
\begin{gather}\label{IJ}
I, J\subset \{1,\ldots,N\} ,\qquad I\cap J=\varnothing.
\end{gather}
The parameters $s_i$ are non-zero complex numbers and $e_i$ are the standard basis vectors in~${\mathbb C}^N$. Due to~(\ref{spec_uv}) and~(\ref{IJ}), one has $v\cdot u^T=0$. From the explicit expression for $M_t$ it follows that the spectra of $M_0$ and $M_{0}M_t$ coincide.

The monodromy matrix $M_t$ is non-diagonalizable. Therefore, Theorem~\ref{theofstfr} together with the previous parameterization of 3-point Fuchsian systems are not directly applicable. Nevertheless it is not difficult to guess the form of a 3-point Fuchsian system having the prescribed monodromy~(\ref{eq:monodromy:nilpotent}):
\begin{gather}
\frac{{\rm d}\Phi_-(z)}{{\rm d}z}=\Phi_-(z)\left[\frac{\mathfrak S}{z}+ \frac{u^T\otimes v }{2\pi{\rm i}(z-t)}\right].\label{FscF}
\end{gather}
Rewriting this equation in terms of a new function $\chi(z)=\Phi_-(tz) (-z)^{-\mf S}$, one obtains
\begin{gather*}
\frac{{\rm d}\chi(z)}{{\rm d}z}=\chi(z)\frac{ (-z)^{\mf S} u^T\otimes v (-z)^{-\mf S}}{2\pi{\rm i}(z-1)} .
\end{gather*}

Its solution (normalized at $z=\infty$) can be written as
\begin{gather*}
\chi(z)=\mathds 1+\sum_{i\in I,\, j\in J}\frac{ (-z)^{\sigma^{(i)}-\sigma^{(j)}} l_{\sigma^{(j)}-\sigma^{(i)}}(1/z)}{\sigma^{(i)}-\sigma^{(j)}} s_i s_j^{-1} e_i\otimes e^j ,
\end{gather*}
where
\begin{gather}\label{lfunc}
l_\sigma(z)={}_2F_1\left(\left.\begin{matrix}
1,\sigma \\ 1+\sigma\end{matrix}\right|z\right).
\end{gather}
This function is naturally defined on $\Cb\backslash\mathbb R_{\ge1}$. We are going to use a jump property of the rescaled function $l_\sigma\big(\frac tz\big)$ which can be derived from~(\ref{lfunc}):
\begin{gather}\label{l_jump}
l_\sigma\left(\frac{t}{x+{\rm i}0}\right) - l_\sigma\left(\frac{t}{x-{\rm i}0}\right)=-2\pi{\rm i} \sigma \left(\frac{x}{t}\right) ^\sigma,\qquad 0<x<t .
\end{gather}
The structure of analytic continuation of $\chi(z)$ follows from the analytic properties of $l_\sigma(z)$. For example, the continuation of $\chi(z)$ along the curve $\gamma_1$ encircling $z=1$ in the anti-clockwise direction gives
\begin{gather*}
\chi(\gamma_1.z)=\chi(z)+{\rm e}^{-{\rm i}\pi {\mf S}} u^T \otimes v {\rm e}^{{\rm i}\pi {\mf S}} =\big(\mathds 1+{\rm e}^{-{\rm i}\pi {\mf S}} u^T \otimes v {\rm e}^{{\rm i}\pi {\mf S}}\big)\chi(z),
\end{gather*}
where we used the jump relation (\ref{l_jump}) for $t=1$ and the nilpotency properties. Therefore, the monodromy of $\Phi_-(z)=\chi\big(\frac zt\big) \big({-}\frac zt\big)^{\mf S}$ is given by
\begin{gather*}
\Phi_-(\gamma_t.z)=\big(\mathds 1+{\rm e}^{-{\rm i}\pi {\mf S}} u^T \otimes v {\rm e}^{{\rm i}\pi {\mf S}} \big)\Phi_-(z),\qquad \Phi_-\big(\gamma_\infty^{-1}.z\big)={\rm e}^{2\pi{\rm i}\mf S}\Phi_-(z).
\end{gather*}
Thus $\Phi_-(z)$ solves the Fuchsian system \eqref{FscF} and has prescribed monodromy. Below we also need the explicit expression for $\Psi_-(z)=(-z)^{-\mf S}\Phi_-(z)$:
\begin{gather}\label{Psim_series}
\Psi_-(z)=t^{-\mf S}\left({\mathds 1}+\sum_{{i\in I,\, j\in J}}\frac{l_{\sigma^{(j)}-\sigma^{(i)}}\big(\frac tz\big)}{\sigma^{(i)}-\sigma^{(j)}} s_is_j^{-1} e_i\otimes e^j\right).
\end{gather}

Let us investigate the integral kernels of the operators ${\sf a_{-+}}$ and ${\sf a_{+-}}$ appearing in the Fredholm determinant representation of the tau function, $\tau(t)=\det(\mathds{1}-{\sf a_{+-}} {\sf a_{-+}})$,
and given by~\eqref{aops}. In the initial setting, ${\sf a_{\pm\mp}}$ acted on functions on a circle $\mathcal C$ centered at $z=0$ and having radius $R\in(t,1)$. Let us now shrink the contour $\mathcal C$ to the branch cut
$\mathcal B$ of $\Psi_-(z)$, and in this way transform $L^2\big(S^1,\mathbb{C}^N\big)\leadsto \mathcal{W}=L^2\big(\mathcal B,\mathbb{C}^N\big)_{\mathrm{up}}\oplus L^2\big(\mathcal B,\mathbb{C}^N\big)_{\mathrm{down}}=\mathcal{W}_{s}\oplus \mathcal{W}_{a}$, where $s$ and $a$ stand for symmetric and antisymmetric functions with respect to exchange of their boundary values on the two sides of $\mathcal{B}$:
\begin{gather*}
\mathcal W_{s}= \{f\in\mathcal W\colon f(x+{\rm i}0)=+ f(x-{\rm i}0),\,x\in\mathcal B \} ,\\
\mathcal W_{a}= \{f\in\mathcal W\colon f(x+{\rm i}0)=- f(x-{\rm i}0),\,x\in\mathcal B \} .
\end{gather*}
Since $\Psi_+(z)$ has no jump on $\mathcal B$ symmetric functions contain the image and belong to the kernel of $\sf a_{+-}$:
\begin{gather*}
({\sf a}_{+-})_{ss}=({\sf a}_{+-})_{as}=({\sf a}_{+-})_{aa}=0.
\end{gather*}
Consequently, the only part of $\sf a_{-+}$ important for the determinant computation is $({\sf a_{-+}})_{as}$.

Using the evaluation (\ref{Psim_series}) and the jump property~(\ref{l_jump}) of $l_\sigma\big(\frac tz\big)$, it is straightforward to obtain formulas for the jumps of $\Psi_-(z)$ and $\Psi_-^{-1}(z)$ on~$\mathcal B$. For $x,y\in(0,t)$, one has
\begin{gather}
\Psi_-(x+{\rm i}0)-\Psi_-(x-{\rm i}0)= x^{-\mf{S}} u^T \otimes v x^{\mf{S}} t^{-\mf{S}},\nonumber\\
\label{Psiinv_jump}
\Psi_-^{-1}(y+{\rm i}0)-\Psi_-^{-1}(y-{\rm i}0)= - t^{\mf{S}} y^{-\mf{S}} u^T \otimes v y^{\mf{S}} .
\end{gather}

\begin{Lemma}\label{lemmaF21}Let $\max_{i,j}\big|\Re \big(\sigma^{(i)}-\sigma^{(j)}\big)\big|<1$. For $n\in\mathbb Z_{\ge0}$, define $X_n:=\operatorname{Tr}(\mathsf a_{+-}\mathsf a_{-+})^n$. We have the equality
\begin{gather}\label{Kn}
X_n=\operatorname{Tr} K^n,
\end{gather}
where $ K$ is an integral operator on $L^2(\mathcal B)$ with the kernel
\begin{gather}\label{kernelKF}
K(x,y) =-\sum_{i\in I,\,j\in J} s_j^{-1} x^{\sigma^{(j)}} \frac{\big(\Psi_+(x)\Psi_+(y)^{-1}\big)_{ji}}{x-y} y^{-\sigma^{(i)}} s_i .
\end{gather}
\end{Lemma}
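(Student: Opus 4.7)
The plan is to collapse every contour integral appearing in $X_n$ onto the branch cut $\mathcal B$, exploiting the rank-one structure of the $\Psi_-(z)$-jumps together with the disjointness $I\cap J=\varnothing$, and then to use cyclicity to reduce the $\mathbb C^N$-valued trace on $H=L^2(\mathcal C,\mathbb C^N)$ to a scalar trace on $L^2(\mathcal B)$.

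First I would record the algebraic consequences of $I\cap J=\varnothing$. Writing $\Psi_-(z)=t^{-\mathfrak S}(\mathds 1+L(z))$ with $L(z)=\sum_{i\in I,\,j\in J}\frac{l_{\sigma^{(j)}-\sigma^{(i)}}(t/z)}{\sigma^{(i)}-\sigma^{(j)}}s_is_j^{-1}e_i\otimes e^j$ as in \eqref{Psim_series}, the image of $L(z)$ lies in $\operatorname{span}\{e_i:i\in I\}$, which is annihilated by $L(z')$; hence $L(z)L(z')\equiv 0$. Consequently $\Psi_-^{-1}(z)=(\mathds 1-L(z))t^{\mathfrak S}$, and one has the crucial identity $v\,D\,u^T=0$ for every diagonal matrix $D$, to be used throughout.

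Next I would simplify a single composition $\mathsf a_{+-}\mathsf a_{-+}$ by shrinking its intermediate contour onto $\mathcal B$. Since $\Psi_+$ is analytic across $\mathcal B$, only the rank-one jump of $\mathsf a_{-+}$ in its first variable contributes, producing
\[
(\mathsf a_{+-}\mathsf a_{-+})(z,z'')=-\frac{1}{2\pi\mathrm i}\int_0^t \alpha(z,x)\,\beta(x,z'')\,dx,
\]
with column $\alpha(z,x):=\mathsf a_{+-}(z,x)\,x^{-\mathfrak S}u^T$ and row $\beta(x,z''):=(x-z'')^{-1}\,v\,x^{\mathfrak S}t^{-\mathfrak S}\Psi_-^{-1}(z'')$. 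The heart of the proof is the evaluation of the scalar $\oint_{\mathcal C}\beta(x,w)\alpha(w,y)\,dw$: the cut contribution coming from the jump of $\Psi_-^{-1}(w)$ is proportional to $v\,x^{\mathfrak S}w^{-\mathfrak S}u^T$ and therefore vanishes by $v\,D\,u^T=0$; at the pole $w=x$ the simplifications $v\,x^{\mathfrak S}t^{-\mathfrak S}\Psi_-^{-1}(x)=v\,x^{\mathfrak S}$ (again by $v\,D\,u^T=0$ applied to the nilpotent part of $\Psi_-^{-1}$) and $v\,x^{\mathfrak S}y^{-\mathfrak S}u^T=0$ kill the $\mathds 1$ piece of $\mathsf a_{+-}(x,y)=\frac{\mathds 1-\Psi_+(x)\Psi_+^{-1}(y)}{x-y}$, and repackaging $v\,x^{\mathfrak S}$ and $y^{-\mathfrak S}u^T$ in the bases labeled by $J$ and $I$ yields
\[
\oint_{\mathcal C}\beta(x,w)\alpha(w,y)\,dw=-(2\pi\mathrm i)^2\,K(x,y),
\]
with $K$ as in \eqref{kernelKF}.

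Iterating this identity through the $n-1$ intermediate contours in the $n$-fold composition gives the factored form
\[
(\mathsf a_{+-}\mathsf a_{-+})^n(z,z')=-\frac{1}{2\pi\mathrm i}\int_{\mathcal B^n}\alpha(z,x_1)K(x_1,x_2)\cdots K(x_{n-1},x_n)\,\beta(x_n,z')\,dx_1\cdots dx_n,
\]
and then taking the matrix trace together with the outermost contour integral $\frac{1}{2\pi\mathrm i}\oint\operatorname{tr}[\cdots](z,z)\,dz$ closes the chain via one last instance of the same identity: the matrix trace produces $\operatorname{tr}[\alpha(z,x_1)\beta(x_n,z)]=\beta(x_n,z)\alpha(z,x_1)$, and the remaining $z$-loop generates the missing factor $K(x_n,x_1)$. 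This yields $X_n=\int_{\mathcal B^n}K(x_1,x_2)\cdots K(x_n,x_1)\,dx=\operatorname{Tr}K^n$. The main technical point will be the uniform justification of all $n$ contour deformations: the simple poles $w_k=x_k$ sit on the cut and must be handled by the standard indentation, and the spectral hypothesis $\max_{i,j}|\Re(\sigma^{(i)}-\sigma^{(j)})|<1$ rules out any unwanted endpoint contribution at $z=0$ and $z=t$.
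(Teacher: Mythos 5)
Your argument is correct and arrives at \eqref{kernelKF} with all constants matching, but it is organized differently from the paper's proof. The paper performs a single contour collapse $L^2\big(\mathcal C,\Cb^N\big)\leadsto\mathcal W_s\oplus\mathcal W_a$ and then argues at the operator level: since $\Psi_+$ has no jump on $\mathcal B$, the operator $\mathsf a_{+-}$ maps into $\mathcal W_s$ and annihilates it, so only $(\mathsf a_{-+})_{as}$ matters; a Sokhotski--Plemelj computation shows that $\Pi_a\circ\mathsf a_{-+}$ restricted to $\mathcal W_s$ is multiplication by $-\tfrac12\,x^{-\mathfrak S}u^T\otimes v\,x^{\mathfrak S}$, after which a single application of $\mathsf a_{+-}\circ\Pi_a\circ\mathsf a_{-+}$ together with cyclicity of the matrix trace produces $K$. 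You instead stay at the kernel level and collapse each of the $n$ intermediate circles separately, invoking $v\,(\mathrm{diagonal})\,u^T=0$ twice per step: once to kill the cut contribution of $\Psi_-^{-1}(w)$, so that each loop integral reduces to the residue at the pole $w=x$ lying on the cut (and since the scalar integrand has no jump there, the pole is effectively interior, so the ``indentation'' is in fact a clean residue), and once to remove the $\mathds 1$ part of $\mathsf a_{+-}$ and the nilpotent part of $\Psi_-^{-1}(x)$. Both routes rest on the same analytic inputs --- the rank-one jumps \eqref{Psiinv_jump}, the disjointness $I\cap J=\varnothing$, and the bound on $\Re\big(\sigma^{(i)}-\sigma^{(j)}\big)$ controlling the endpoints --- but the paper's operator formulation avoids on-cut poles altogether and plugs directly into the determinant statement of Theorem~\ref{theorem2F1}, whereas your kernel computation is more elementary and makes explicit exactly where the nilpotency and the spectral hypothesis enter, at the price of justifying $n$ separate contour deformations and the accompanying Fubini interchanges.
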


\begin{proof} Let $g\in L^2\big(\mathcal B,\mathbb{C}^N\big)$. The action of $\mathsf a_{-+}$ on $f=g\oplus g\in\mathcal W_s$ reads
\begin{gather*}
(\mathsf a_{-+} f)(z)= \frac{1}{2\pi{\rm i}} \int_0^t \frac{\Psi_-(z)\big[ \Psi_-^{-1}(y-{\rm i}0)-\Psi_-^{-1}(y+{\rm i}0)\big]}{z-y} g(y) {\rm d}y \\
\hphantom{(\mathsf a_{-+} f)(z)}{} = \frac{1}{2\pi{\rm i}} \int_0^t \frac{ y^{-\mf{S}} u^T \otimes v y^{\mf{S}} }{z-y} g(y) {\rm d}y ,\qquad z\not\in \mathcal{B}\,,
\end{gather*}
where at the second step we used the relation \eqref{Psiinv_jump} and nilpotency. The next step is to compute the projection~$\Pi_a$ of this expression on~$\mathcal W_a$. Write the result as $\Pi_a\circ \mathsf a_{-+} f=h\oplus(-h)$, with some $h\in L^2\big(\mathcal B,\mathbb{C}^N\big)$. For $x\in \mathcal B$, one has
\begin{gather}
 h(x)=\frac12\left[(\mathsf a_{-+} f)(x+{\rm i}0)-(\mathsf a_{-+} f)(x-{\rm i}0)\right] \nonumber\\
\hphantom{h(x)}{} =-\frac{1}{4\pi i} \int_{0}^{t} y^{-\mf{S}} u^T \otimes v y^{\mf{S}} \left(\frac{1}{y-x-{\rm i}0}-\frac{1}{y-x+{\rm i}0} \right) g(y) {\rm d}y \nonumber\\
\label{aauu} \hphantom{h(x)}{}=-\frac{1}{2} x^{-\mf{S}} u^T \otimes v x^{\mf{S}} g(x) .
\end{gather}
At last, write $\mathsf a_{+-}\circ\Pi_a\circ \mathsf a_{-+} f$ as $\tilde g\oplus \tilde g\in\mathcal W_s$. The expression~\eqref{aauu} for $h(x)$ implies that
\begin{gather*}
\tilde g(x)=\frac{1}{2\pi{\rm i}} \int_0^t \mathsf a_{+-}(x,y) y^{-\mf{S}} u^T \otimes v y^{\mf{S}} g(y) {\rm d}y,\qquad x\in\mathcal B.
\end{gather*}
The last formula describes the action of $\mathsf a_{+-}\circ\Pi_a\circ \mathsf a_{-+}$ on $\mathcal W_s$. To obtain \eqref{Kn}--\eqref{kernelKF}, it now suffices to compute the $n$th power of this operator, use the cyclic property of the trace and the explicit form~\eqref{aops} of the integral kernel $\mathsf a_{+-}(x,y)$.
\end{proof}

We now summarize the developments of this section:
\begin{Theorem}\label{theorem2F1} Let $I$, $J$ be two non-intersecting subsets of $\{1,\ldots,N\}$ and let
\begin{gather*}
\bigl\{\theta_{\infty}^{(k)}\bigr\}_{k=\overline{1,N}},\qquad \bigl\{\sigma^{(k)}\bigr\}_{k=\overline{1,N}},\qquad \{s_i \}_{i\in I\cup J},
\end{gather*}
be complex parameters such that
\begin{itemize}\itemsep=0pt
\item $s_i\ne 0$ for $i\in I\cup J$ and $\sigma^{(i)}\ne \sigma^{(j)}$ for $i\ne j$;
\item $\max_{i,j\in\{1,\ldots,N\}}\big|\Re \big(\sigma^{(i)}-\sigma^{(j)}\big)\big|<1$.
\end{itemize}
Let $K$ be the integral operator on $L^2([0,t])$ with the integrable kernel \eqref{kernelKF}, where
\begin{subequations}
\label{ac618}
\begin{gather}
(\Psi_+(x)) _{jm} = C_{jm} x^{1-\delta_{jm}}{}_{N}F_{N-1} \left(\left.\genfrac{}{}{0pt}{}{\bigl\{1-\delta_{jm}+\theta^{(k)}_{\infty}+\sigma^{(j)}\bigr\}_{k=\overline{1,N}}}{
\bigl\{1+\sigma^{(j)}-\sigma^{(k)}+\delta_{mk}-\delta_{jm}\bigr\}_{k=\overline{1,N},k\neq j}}\right|x\right) ,\\
\label{invhyp}
\bigl(\Psi_+( y)^{-1}\bigr)_{mi}= C'_{mi} y^{1-\delta_{im}} {}_{N}F_{N-1} \left(\left.\genfrac{}{}{0pt}{}{\bigl\{1-\delta_{im}-\theta^{(k)}_{\infty}-\sigma^{(i)}\bigr\}_{k=\overline{1,N}}}{
\bigl\{1-\sigma^{(i)}+\sigma^{(k)}+\delta_{mk}-\delta_{im}\bigr\}_{k=\overline{1,N},k\neq i}}\right|y\right) ,\\
C_{jm}=\frac{\prod\limits_{k=1}^N\big(\sigma^{(j)}+\theta^{(k)}_{\infty}\big)}{\big(\sigma^{(j)}-\sigma^{(m)}+1\big)\prod\limits_{k\neq m}\big(\sigma^{(m)}-\sigma^{(k)}\big)}=-C'_{mj}\qquad\text{for } j\neq m,\nonumber\\ C_{mm}=C'_{mm}=1.
\end{gather}
\end{subequations}
Then the scalar Fredholm determinant $\tau(t)=\det (\mathds 1 - K_{[0,t]})$ is a tau function of the FST$_N$ system with $\Lambda_t=0$.
\end{Theorem}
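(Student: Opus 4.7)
The plan is to combine the Fredholm determinant representation of Theorem \ref{theofstfr} with Lemma \ref{lemmaF21}, which collapses the $2\times 2$ block operator on the circle to a scalar operator on the interval $[0,t]$. The construction proceeds in three steps corresponding to the two auxiliary 3-point problems and the operator contraction.

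First, the outer 3-point problem with singularities at $0,1,\infty$ and monodromy $(M_0 M_t, M_1, M_\infty)$ is of standard generic semi-degenerate type, so I would apply Lemma \ref{lemmahyp} with $(\Theta_0,\Theta_\infty)\mapsto(\mathfrak{S},\Theta_\infty)$. This identifies $\Psi_+(z)=\Psi_{\mathfrak{S},\Theta_\infty}(z)$ with the hypergeometric expression in \eqref{ac618}, and yields $\Psi_+(y)^{-1}$ via the duality \eqref{invps}. The matching of the constants $C_{jm}$, $C'_{mi}$ in \eqref{ac618} with $N^{(\infty)}_{jm}$ from Lemma \ref{lemmahyp} is a direct renaming.

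Second, the inner 3-point problem with singularities at $0,t,\infty$ has nilpotent $M_t$, since $v\cdot u^T=0$ by \eqref{IJ}, so it lies outside the genericity hypotheses of Theorem \ref{theofstfr}. The explicit construction in equations \eqref{FscF}--\eqref{Psim_series} resolves this: $\Psi_-(z)$ is produced in terms of the elementary ${}_2F_1$ function $l_\sigma$ of \eqref{lfunc}, with computable boundary jumps \eqref{Psiinv_jump} on $\mathcal B=[0,t]$. Crucially, the operator-theoretic definition \eqref{aops} of $\mathsf a_{\pm\mp}$ and the topological decomposition \eqref{topodec} require only explicit $\Psi_\pm$ and not diagonalizability of $M_t$, so the block Fredholm determinant representation of $\tau_{\mathrm{FST}}(t)$ extends to the present nilpotent setting.

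Third, I would invoke Lemma \ref{lemmaF21}: the analyticity of $\Psi_+(z)$ across $\mathcal B$ and the rank-one jump of $\Psi_-(z)$ imply, after decomposing $L^2(\mathcal C,\mathbb C^N)=\mathcal W_s\oplus \mathcal W_a$ and shrinking $\mathcal C$ to $\mathcal B$, that only $(\mathsf a_{-+})_{as}$ contributes. The lemma yields the trace identity $\operatorname{Tr}(\mathsf a_{+-}\mathsf a_{-+})^n=\operatorname{Tr} K^n$ for all $n\geq 1$, with $K$ the scalar ${}_N F_{N-1}$-integrable operator on $L^2([0,t])$ with kernel \eqref{kernelKF}. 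The spectral hypothesis $\max_{i,j}|\Re(\sigma^{(i)}-\sigma^{(j)})|<1$ makes $K$ trace-class and ensures convergence of $\log\det(\mathds 1-A)=-\sum_{n\geq 1} n^{-1}\operatorname{Tr} A^n$ on both sides, giving $\det(\mathds 1-\mathsf a_{+-}\mathsf a_{-+})=\det(\mathds 1-K_{[0,t]})$. Combining this with \eqref{taufstfr}, the scalar determinant $\tau(t)=\det(\mathds 1-K_{[0,t]})$ coincides with $\tau_{\mathrm{FST}}(t)$ up to the elementary prefactor $t^{\frac12\operatorname{Tr}(\mathfrak S^2-\Theta_0^2)}$ (noting $\operatorname{Tr}\Theta_t^2=0$ since $\Lambda_t=0$) and is therefore an FST$_N$ tau function with $\Lambda_t=0$ in the standard sense in which tau functions are defined up to elementary prefactors and multiplicative constants.

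The principal obstacle is conceptual rather than computational, namely the extension of Theorem \ref{theofstfr} past its genericity assumptions to the nilpotent case. The resolution, outlined above, is that the underlying topological decomposition \eqref{topodec} rests only on the solvability of the inverse monodromy problem for the two 3-point subsystems, and the explicit construction \eqref{FscF}--\eqref{Psim_series} provides an explicit $\Psi_-(z)$ with the required jump structure. Once this is granted, Lemma \ref{lemmaF21} does all the heavy lifting and the theorem follows.
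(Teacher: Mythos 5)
Your proposal is correct and follows essentially the same route as the paper: the explicit construction of $\Psi_-$ for the nilpotent monodromy via \eqref{FscF}--\eqref{Psim_series}, the identification $\Psi_+=\Psi_{\mathfrak S,\Theta_\infty}$ through Lemma~\ref{lemmahyp} and \eqref{invps}, and the contraction of the block determinant to the scalar kernel on $[0,t]$ by the $\mathcal W_s\oplus\mathcal W_a$ splitting and Lemma~\ref{lemmaF21}. The only loosely stated point is your appeal to ``tau functions defined up to elementary prefactors'' to discard $t^{\frac12\operatorname{Tr}(\mathfrak S^2-\Theta_0^2)}$; the sharper reason, used implicitly in the paper, is that for this monodromy the spectra of $M_0$ and $M_0M_t$ coincide, so one may take $\Theta_0=\mathfrak S$ and the prefactor is identically~$1$.
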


\begin{Remark} Note that since $\Psi_+(x)\Psi_+(x)^{-1}=\mathds 1$, the kernel $K(x,y)$ is not singular along the diagonal $x=y$. The formulas~\eqref{ac618} follow from the identification $\Psi_+(x)=\Psi_{\mathfrak{S} ,\Theta_\infty}(x)$, cf.\ equation~\eqref{hyperg} of Lemma~\ref{lemmahyp}; in particular, \eqref{invhyp} is consistent with the inversion formula~\eqref{invps}. The semi-degenerate spectral type of the Fuchsian singularity at $z=1$ was only necessary to obtain explicit expression of the auxiliary 3-point solution $\Psi_+(z)$. The rest of the argument remains valid even if the singularity at $z=1$ is generic.
\end{Remark}

\section{Multivariate extension}
This section is devoted to a multivariate generalization of the FST$_N$ system obtained by adding extra semi-degenerate singularities. The matrix~$A(z)$ in~\eqref{lis} is replaced by
\begin{gather}\label{ffs}
A(z)=\sum_{k=0}^{n-2}\frac{A_k}{z-z_k},\qquad z_0=0,\qquad z_{n-2}=1,
\end{gather}
where $A_1,\ldots,A_{n-2}$ have rank 1. We set $z_{n-1}=\infty$, $A_{n-1}=-\sum\limits_{k=0}^{n-2}A_k$ and assume radial ordering: $0<|z_1|<\cdots <|z_{n-3}|<1$. The fundamental solution $\Phi(z)$ has monodromy $M_{k}\in\mathrm{GL}(N,\Cb)$ upon analytic continuation around $z_{k}$ ($k=0,\ldots,n-1$), see Fig.~\ref{figg1} for the $n=4$ case. These monodromy matrices satisfy the cyclic relation $M_0\cdots M_{n-1}=\mathds 1$.
It will be convenient for us to consider the products $M_{0\to k}:=M_0\cdots M_k$ and introduce a notation for their spectrum via $[M_{0\to k}]=[{\rm e}^{2\pi{\rm i}\mathfrak S_k}]$, $k=0,\ldots, n-2$, where the eigenvalues of diagonal matrices $\mathfrak S_k$ are assumed to be pairwise distinct $\operatorname{mod} \Zb$. It may also be assumed that $\operatorname{Tr}\mathfrak S_k=\sum\limits_{j=0}^k \operatorname{Tr}\Theta_{j}$. For notational purposes, it is convenient to identify $\mathfrak S_0=\Theta_0$, $\mathfrak S_{n-2}=-\Theta_{n-1}$.

\begin{Remark}The solution of the multivariate extension of the FST$_N$ system corresponding to reducible monodromy (i.e., Riccati-type solution) was constructed in terms of a multivariate hypergeometric series in~\cite{ManoTsuda,Tsuda2}. Our aim in this section is to study the solutions corresponding to generic semi-degenerate monodromy. One of their potential applications is the theory of Frobenius manifolds, where isomonodromic deformations of the Fuchsian systems with degenerate local monodromies naturally arise, see for example \cite[Section~3]{Dubrovin} and \cite[Section~8]{ChM}.
\end{Remark}

The following straightforward generalization of Lemma~\ref{lem31} provides a parameterization of semi-degenerate monodromy \cite{GIL18}.
\begin{Proposition}\label{propparam}
Let $M_k\in \mathrm{GL}(N,\mathbb{C})$ with $k=0,\ldots,n-1$ be the monodromy matrices of the semi-degenerate Fuchsian system satisfying the above genericity condition. Given their spectra $ [M_k ]=[\exp{2\pi\Theta_k}]$, they can be parameterized uniquely (up to an overall conjugation) by means of diagonal matrices $\{\mathfrak{S}_k,D_k\}$ with ${k=1,\ldots,n-3}$, where $D_k=\operatorname{diag}\bigl\{{\rm e}^{\beta^{(1)}_k},\ldots,{\rm e}^{\beta^{(N)}_k}\bigr\}$, \mbox{$\beta^{(l)}_k\in \mathbb{C}$}. The parametrization of $M_k=({M_{0\to {k-1}}})^{-1} M_{0\to k}$ follows from
\begin{gather*}
M_{0\to k}=S_k {\rm e}^{2\pi{\rm i}\mathfrak S_k}S_k^{-1}, \qquad k=0,\ldots, n-2,\\ 
S_{k}^{-1} = D_{k} W_{k+1} D_{k+1} \cdots D_{n-3} W_{n-2} D_{n-2} ,
\end{gather*}
where
\begin{gather*} 
(W_m)_{kl}=\frac{ \prod\limits_{s\ne k} \sin\pi\big(\sigma_{m-1}^{(s)}-\sigma_{m}^{(l)}\big)} { \prod\limits_{s\ne l} \sin\pi\big(\sigma_{m}^{(l)}-\sigma_{m}^{(s)}\big)} .
\end{gather*}
\end{Proposition}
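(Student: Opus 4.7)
The plan is to iterate the rank-one rigidity statement of Lemma~\ref{lemrig} along the chain of composite monodromies $\{M_{0\to k}\}_{k=0}^{n-2}$. Consecutive composites $M_{0\to k-1}$ and $M_{0\to k}$ differ precisely by the rank-one factor $M_k$ (for $1\le k\le n-2$), which is exactly the structural input exploited by the Lemma.

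First I would fix the overall conjugation freedom by choosing a basis in which $M_{0\to n-2}={\rm e}^{2\pi{\rm i}\mathfrak{S}_{n-2}}$ is diagonal; this initialises the recursion with $S_{n-2}^{-1}=D_{n-2}$, the diagonal matrix absorbing the residual commuting gauge. The endpoint identifications $\mathfrak{S}_{n-2}=-\Theta_{n-1}$ and $\mathfrak{S}_0=\Theta_0$ are forced by the external spectra and do not contribute free parameters, consistent with the statement that only $\{\mathfrak{S}_k\}_{k=1}^{n-3}$ and the matching $\{D_k\}_{k=1}^{n-3}$ are independent. For the inductive step, suppose $M_{0\to k+1}=S_{k+1}{\rm e}^{2\pi{\rm i}\mathfrak{S}_{k+1}}S_{k+1}^{-1}$ has already been realised. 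The relation $M_{0\to k}=M_{0\to k+1}M_{k+1}^{-1}$ shows that in the $S_{k+1}$-frame, $M_{0\to k}$ equals ${\rm e}^{2\pi{\rm i}\mathfrak{S}_{k+1}}$ modified by a rank-one perturbation; by genericity, its eigenvalues ${\rm e}^{2\pi{\rm i}\mathfrak{S}_k}$ are simple and disjoint from those of ${\rm e}^{2\pi{\rm i}\mathfrak{S}_{k+1}}$. Lemma~\ref{lemrig} then produces an explicit rigid intertwiner of the form~\eqref{Wdef}, whose unique residual ambiguity is a diagonal rescaling that I would absorb into $D_k$. This yields the recursion $S_k^{-1}=D_k W_{k+1}S_{k+1}^{-1}$, which unfolds into the telescoping product displayed in the statement.

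The sine presentation of $W_m$ is obtained by substituting $a^{(j)}={\rm e}^{2\pi{\rm i}\sigma_{m-1}^{(j)}}$ and $b^{(i)}={\rm e}^{2\pi{\rm i}\sigma_m^{(i)}}$ into~\eqref{Wdef} and applying ${\rm e}^{2\pi{\rm i}\alpha}-{\rm e}^{2\pi{\rm i}\beta}=2{\rm i}\,{\rm e}^{{\rm i}\pi(\alpha+\beta)}\sin\pi(\alpha-\beta)$; the $2{\rm i}$-factors cancel between numerator and denominator, while the remaining exponential prefactors can be absorbed into the neighbouring $D_k$'s. The main delicate point of the proof is exactly this bookkeeping: one must verify that the exponentials pulled out of each $W_m$-step absorb consistently at every level of the recursion, and that no further hidden redundancy remains among the family $\{(\mathfrak{S}_k,D_k)\}_{k=1}^{n-3}$. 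Both issues reduce to standard gauge tracking once a normalisation convention is fixed for the rank-one vectors of each $M_k$ (e.g.\ setting a chosen component of $\bar u_k$ equal to~$1$, as was done for the $n=4$ case in Section~2).
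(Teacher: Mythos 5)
Your proposal is correct and takes essentially the approach the paper intends: Proposition~\ref{propparam} is stated there without proof as a ``straightforward generalization of Lemma~\ref{lem31}'' (equivalently, of the rank-one rigidity in Lemma~\ref{lemrig} and the Remark following it), and your induction along the chain $M_{0\to k}$, using that consecutive composites differ by the rank-one factor $M_k$, absorbing the residual diagonal ambiguity into $D_k$, and passing from \eqref{Wdef} to the sine form, is precisely that argument. The bookkeeping you flag does go through, since the prefactors ${\rm e}^{{\rm i}\pi(N-1)\big(\sigma_{m-1}^{(j)}-\sigma_m^{(i)}\big)}$ arising from ${\rm e}^{2\pi{\rm i}\alpha}-{\rm e}^{2\pi{\rm i}\beta}=2{\rm i}\,{\rm e}^{{\rm i}\pi(\alpha+\beta)}\sin\pi(\alpha-\beta)$ factor into row- and column-diagonal matrices absorbed by the adjacent $D$'s, with the rightmost one removable by the overall conjugation.
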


For $n>4$, the circle $\mathcal C$ of the 4-point system is replaced with $n-3$ non-intersecting simple closed curves decomposing $\Cb\Pb^1\ \backslash \{z_0,\ldots, z_{n-1} \}$ into $n-2$ spheres with 3 punctures. A general method to construct Fredholm determinant representation of the isomonodromic tau function in the multi-curve setup is outlined in~\cite{CGL}. Among different topologically inequivalent systems of cutting curves, we are going to use the simplest one, given by a set of concentric cirles
\begin{gather*}
\mathcal C_k=\big\{z\in\Cb\colon |z|=R_k,\, |z_k|<R_k<|z_{k+1}|\big\},\qquad k=1,\ldots, n-3,
\end{gather*}
and corresponding to a linear pants decomposition, cf.~\cite[Section~3.3]{CGL}. Note that the diagonal matrices $\{\mathfrak{S}_k,D_k\}$ may be thought of as associated with the circle $\mathcal C_k$. The analog of~\eqref{topodec} is given by
\begin{gather}
\tau_{\mathrm{JMU}}\left(
\vcenter{\hbox{
\includegraphics[height=8.4ex]{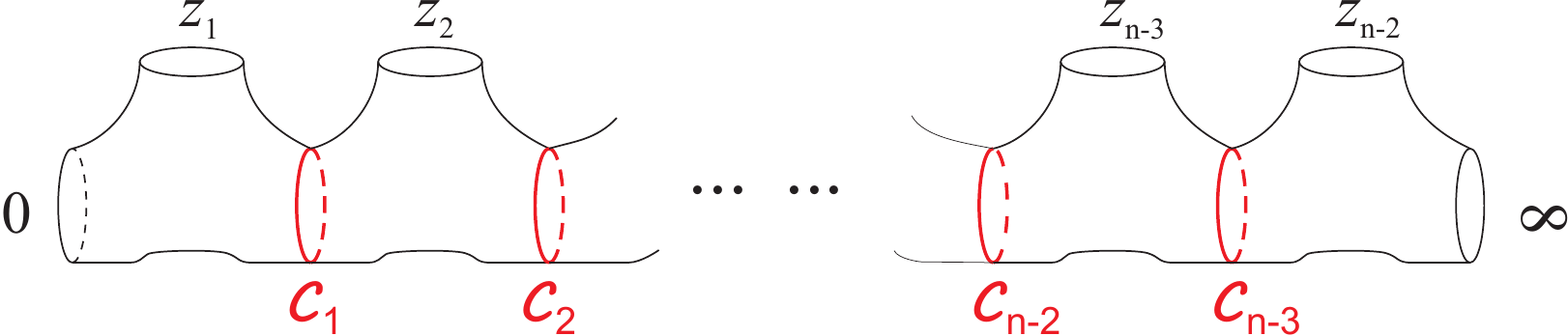}}}\right)
\nonumber\\
\qquad{} =\prod_{k=1}^{n-2}\tau_{\mathrm{JMU}}\left(
\vcenter{\hbox{
\includegraphics[height=8.4ex]{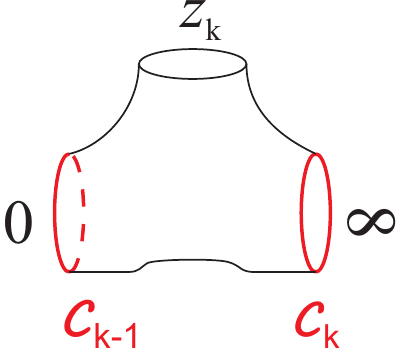}}}\right)
 \det \left(\begin{matrix}
\mathds 1 & \mathsf A_{+-}\\
\mathsf A_{-+} & \mathds 1
\end{matrix}\right),
\label{topodec2}
\end{gather}
where
\begin{gather}\label{Apmblock}
\mathsf A_{+-}=
\left( \begin{array}{@{}cc:cc:cc:cc@{}}
0 & \cdot & \cdot & \cdot & \cdot & 0 & \mathsf a_{\mathcal C_{n-3},+;\mathcal C_{n-3},-} \\
\hdashline
\cdot & \cdot & 0 & 0 & & \iddots & 0\\
\cdot & \cdot & 0 & 0 & \iddots & & \cdot
\\ \hdashline
0 & 0 & \mathsf a_{\mathcal C_4,+;\mathcal C_3,-} &
\mathsf a_{\mathcal C_4,+;\mathcal C_4,-} & 0 & 0 & \cdot \\
0 & 0 & \mathsf a_{\mathcal C_3,+;\mathcal C_3,-} &
\mathsf a_{\mathcal C_3,+;\mathcal C_4,-} & 0 & 0 & \cdot\\
\hdashline
\mathsf a_{\mathcal C_2,+;\mathcal C_1,-} &
\mathsf a_{\mathcal C_2,+;\mathcal C_2,-} & 0 & 0 & \cdot & \cdot & \cdot \\
\mathsf a_{\mathcal C_1,+;\mathcal C_1,-} &
\mathsf a_{\mathcal C_1,+;\mathcal C_2,-} & 0 & 0 & \cdot & \cdot & 0
\end{array} \right), \\
\mathsf A_{-+}=
\left( \begin{array}{@{}cc:cc:cc:cc@{}}
0 & \cdot & \cdot & \cdot & 0 & 0 & \mathsf a_{\mathcal C_{1},-;\mathcal C_{1},+} \\
\hdashline
\cdot & \cdot & 0 & 0 & \mathsf a_{\mathcal C_{2},-;\mathcal C_{3},+} & \mathsf a_{\mathcal C_{2},-;\mathcal C_{2},+} & 0\\
\cdot & \cdot & 0 & 0 & \mathsf a_{\mathcal C_{3},-;\mathcal C_{3},+} & \mathsf a_{\mathcal C_{3},-;\mathcal C_{2},+} & 0
\\ \hdashline
\cdot & \cdot & &
\iddots & 0 & 0 & \cdot \\
0 & 0 & \iddots &
& 0 & 0 & \cdot\\
\hdashline
\mathsf a_{\mathcal C_{n-4},-;\mathcal C_{n-3},+} &
\mathsf a_{\mathcal C_{n-4},-;\mathcal C_{n-4},+} & 0 & \cdot & \cdot & \cdot & \cdot \\
\mathsf a_{\mathcal C_{n-3},-;\mathcal C_{n-3},+} &
\mathsf a_{\mathcal C_{n-3},-;\mathcal C_{n-4},+} & 0 & \cdot & \cdot & \cdot & 0
\end{array} \right).\nonumber
\end{gather}

A few explanations are in order:
\begin{itemize}\itemsep=0pt
\item The tau functions of auxiliary 3-point Fuchsian systems have elementary expressions in terms of monodromy of the initial $n$-point problem:
\begin{gather}
\tau_{\mathrm{JMU}}\left(
\vcenter{\hbox{
\includegraphics[height=8.4ex]{FSTFigmany3}}}\right)= z_k^{\frac12\operatorname{Tr}\big(\mathfrak S_k^2-\mathfrak S_{k-1}^2-\Theta_k^2\big)}.
\end{gather}
Recall that $\mathfrak S_k$ and $\Theta_k$ are diagonal matrices such that $[M_k]=[{\rm e}^{2\pi{\rm i} \Theta_k}]$ and $[M_0\cdots M_k]=\big[{\rm e}^{2\pi{\rm i} \mathfrak S_k}\big]$, where $M_k$ denotes the anti-clockwise monodromy around $z_k$.
\item The structure of the operators $\mathsf A_{\pm\mp}$ is determined by the choice of the pants decomposition and its 2-coloring. We fix them in the same way as in \cite[Section~3.3]{CGL}. The pants with boundary components $\mathcal C_{k-1}$ and $\mathcal C_{k}$ will be denoted by $\mathcal T^{[k]}$. The odd- and even-numbered pairs of pants will have color~``$+$'' and~``$-$''. The orientation of the circles is fixed accordingly: $\mathcal C_{2k-1}$ and $\mathcal C_{2k}$ are oriented anti-clockwise and clockwise, respectively.
\item The rows and columns of $\mathsf A_{\pm\mp}$ are labeled by the curves $\mathcal C_1,\ldots,\mathcal C_{n-3}$. The element of $\mathsf A_{\pm\mp}$ corresponding to a pair $\mathcal C_k$, $\mathcal C_l$ is non-zero only if the two curves are boundary components of the same pair of pants of color ``$\mp$''. The formulas~\eqref{Apmblock} correspond thus to even~$n$. For odd~$n$ the block in the upper-right corner of $\mathsf A_{+-}$ would be $2\times 2$, whereas the block in the bottom-left corner of $\mathsf A_{-+}$ would be $1\times 1$.
\end{itemize}
Denote $H_{\mathcal C}=L^2\big(\mathcal C,\Cb^N\big)$ and let $H_{\mathcal C,\pm}$ be the space of boundary values of functions that continue analytically to the interior/exterior of $\mathcal C$ with respect to its orientation. The determinant in~\eqref{topodec2} is computed on $H=\bigoplus_{k=1}^{n-3} H_{\mathcal C_k}$. The operators $\mathsf a_{\mathcal C,\pm;\mathcal C',\mp}$ act from $H_{\mathcal C',\mp}$ to $H_{\mathcal C,\pm}$ as follows:
\begin{gather}\label{amm1}
( \mathsf a_{\mathcal C,\pm;\mathcal C',\mp} g)(z)=\frac1{2\pi{\rm i}}\oint_{\mathcal C'}\mathsf a_{\mathcal C,\pm;\mathcal C',\mp}(z,z') g(z') {\rm d}z',\qquad z\in\mathcal C,
\end{gather}
with the integral kernel
\begin{gather}\label{amm2}
\mathsf a_{\mathcal C,\pm;\mathcal C',\mp}(z,z')= \pm \frac{\Psi_{\mathcal C,\pm}(z) {\Psi_{\mathcal C',\pm}(z')}^{-1}-\mathds 1 \delta_{\mathcal C,\mathcal C'}}{z-z'}.
\end{gather}
The functions $\Psi_{\mathcal C,\pm}(z)$, $\Psi_{\mathcal C',\pm}(z)$ are determined by the fundamental solutions of the auxiliary 3-point Fuchsian systems associated to appropriate pairs of pants, see for instance \cite[equations~(3.9)]{CGL}. Note that each $2\times2$ block of $\mathsf A_{\pm\mp}$ involves only one such 3-point solution; the blocks could thus be labeled by~$\mathcal T^{[k]}$.

Denote by $\Phi^{[k]}(z)$ ($k=1,\ldots,n-2$) the solution of 3-point Fuchsian system associated with the pants $\mathcal T^{[k]}$ which has regular singularities at $0$, $z_k$ and $\infty$ characterized by monodromies $M_{0\to k-1}$, $M_{k}$ and $M_{0\to k}^{-1}$. For consistency with Proposition~\ref{propparam}, the local behavior of this solution near the singular points is required to be given by
\begin{gather*}
\Phi^{[k]}(z) = \begin{cases}
S_{k-1} (-z)^{\mathfrak S_{k-1}} G_0^{[k]}(z), & z\to0, \\
 S_{k} (-z)^{\mathfrak S_{k}} G_{\infty}^{[k]}(z), & z\to \infty,
\end{cases}
\end{gather*}
where $G^{[k]}_{0}(z)$, $G^{[k]}_{\infty}(z)$ are holomorphic and invertible in the respective neighborhoods of~$0$ and~$\infty$. Using Lemma~\ref{lemmahyp}, we may fix
\begin{gather*}
\Phi^{[k]}(z) = S_{k-1} \Phi^{(0)}\big(\tfrac{z}{z_k} \big)=S_{k} \Phi^{(\infty)}\big(\tfrac{z}{z_k}\big),
\end{gather*}
where the parameters of the lemma are identified as $\Theta_0\mapsto \mathfrak{S}_{k-1}$, $\Theta_\infty\mapsto -\mathfrak{S}_{k}$. Furthermore, using the relation between $\Phi^{(0)}\big(\frac{z}{z_k}\big)$ and $\Phi^{(\infty)}\big(\frac{z}{z_k}\big)$ from the same lemma together with the parametrization of Proposition~\ref{propparam}, we obtain
\begin{gather*}
S_{k-1}^{-1}S_k = D_{k-1}W_k=\bar R^{[k]} S_{\mathfrak{S}_{k-1},-\mathfrak{S}_{k}} \big(R^{[k]}\big)^{-1}.
\end{gather*}
The latter relation connects the diagonal matrices $R^{[k]}$ and $\bar R^{[k]}$ which appear in Lemma~\ref{lemmahyp} to the parameters of~$D_k$ in Proposition~\ref{propparam}:
\begin{gather}\label{RbRres}
\bar r^{[k+1]}_m={\rm e}^{\beta_k^{(m)}}\frac{\prod\limits_l \Gamma\big(\sigma^{(m)}_k -\sigma^{(l)}_{k+1} \big)}{ \prod\limits_{l\ne m}\Gamma\big(1+\sigma^{(m)}_k-\sigma^{(l)}_k\big)},\qquad
r^{[k]}_j=\frac {\prod\limits_l \Gamma\big(\sigma^{(l)}_{k-1} - \sigma^{(j)}_k \big)} {\prod\limits_{l \ne j}\Gamma\big(1- \sigma^{(j)}_k +\sigma^{(l)}_k \big)} .
\end{gather}
The final formulas for $\Psi_{\mathcal C,\pm}(z)$ are then given by
\begin{gather*}
\Psi_{\mathcal {C}_{k,(-)^k}}(z)=R^{[k]} z_k^{-\mathfrak{S}_k}\Psi_{-\mathfrak{S}_k,\mathfrak{S}_{k-1}}\big(\tfrac{z_k}{z} \big) \big(R^{[k]}\big)^{-1},\\
\Psi_{\mathcal {C}_{k-1,(-)^{k}}}(z)= \bar R^{[k]} z_{k}^{-\mathfrak{S}_{k-1}}\Psi_{\mathfrak{S}_{k-1},-\mathfrak{S}_{k}}\big(\tfrac{z}{z_{k}}\big) G_{-\mathfrak{S}_{k},\mathfrak{S}_{k-1}}\big(R^{[k]}\big)^{-1},
\end{gather*}
where $\Psi_{\mathfrak S,\mathfrak S'}(z)$, $G_{\mathfrak S,\mathfrak S'}$, $R^{[k]}$ and $\bar{R}^{[k]}$ are defined by (\ref{hyperg}), (\ref{exprG0}) and (\ref{RbRres}). In combination with \eqref{Apmblock}--\eqref{amm2}, these formulas make the Fredholm determinant representation~\eqref{topodec2} completely explicit for the tau functions of semi-degenerate Fuchsian systems.

The series representation for $\tau(\boldsymbol z)$ is obtained similarly to the $4$-point case: it suffices to rewrite the operators $\mathsf A_{\pm\mp}$ in the Fourier basis and expand the Fredholm determinant into a~sum of the principal minors. The structure of \eqref{Apmblock} implies that these minors factorize into products of smaller ones associated to different $\mathcal T^{[k]}$, which is a~consequence of the linear pants decomposition that we are using. The minors are labeled by $n-3$ $N$-tuples of Maya diagrams $(-\boldsymbol{\mathsf q}_k,\boldsymbol{\mathsf p}_k)$ associated to each circle $\mathcal C_k$ ($k=1,\ldots,n-3$). Furthermore, they are non-zero only if each of the $N$-tuples satisfies the condition of global balance $|\boldsymbol{\mathsf p}_k|=|\boldsymbol{\mathsf q}_k|$. Therefore, the determinant in~\eqref{topodec2} can be represented as
\begin{gather}
\det \left(\begin{matrix}
\mathds 1 & \mathsf A_{+-}\\
\mathsf A_{-+} & \mathds 1
\end{matrix}\right)=\big[Z\big(\mathcal T^{[n-2]}\big)\cdots Z\big(\mathcal T^{[2]}\big) Z\big(\mathcal T^{[1]}\big)\big]_{\varnothing,\varnothing
}^{\varnothing,\varnothing}\nonumber\\
\qquad{} =\sum_{\substack{\boldsymbol{\mathsf p}_1,\boldsymbol{\mathsf q}_1,\ldots, \boldsymbol{\mathsf p}_{n-3},\boldsymbol{\mathsf q}_{n-3}\\ |\boldsymbol{\mathsf p}_1|=|\boldsymbol{\mathsf q}_1|,\ldots, |\boldsymbol{\mathsf p}_{n-3}|=|\boldsymbol{\mathsf q}_{n-3}|}}
Z_{\quad\varnothing,\varnothing}^{\boldsymbol{\mathsf p}_{n-3},\boldsymbol{\mathsf q}_{n-3}}\big(\mathcal T^{[n-2]}\big)
\cdots
Z^{\boldsymbol{\mathsf p}_{1},\boldsymbol{\mathsf q}_{1}}_{\boldsymbol{\mathsf p}_{2},\boldsymbol{\mathsf q}_{2}}\big(\mathcal T^{[2]}\big)
Z^{\varnothing,\varnothing}_{\boldsymbol{\mathsf p}_{1},\boldsymbol{\mathsf q}_{1}}\big(\mathcal T^{[1]}\big),\label{detAZZZ}
\end{gather}
where, e.g., for pants $\mathcal T^{[l]}$ with $l=2k+1$ we may set
\begin{gather}\label{eld}
Z^{\boldsymbol{\mathsf p}_{l-1},\boldsymbol{\mathsf q}_{l-1}}_{\boldsymbol{\mathsf p}_{l},\boldsymbol{\mathsf q}_{l}}\big(\mathcal T^{[l]}\big) =\det
\begin{blockarray}{ccl}
{\text{\footnotesize$\boldsymbol{\mathsf p}_{l}$}} & {\text{\footnotesize$-\boldsymbol{\mathsf q}_{l-1}$}} \\
\begin{block}{(cc)l}
\mathsf a_{\mathcal C_{l-1},-;\mathcal C_l,+} &
\mathsf a_{\mathcal C_{l-1},-;\mathcal C_{l-1},+} & {\text{\footnotesize$\boldsymbol{\mathsf p}_{l-1}$}} \\
\mathsf a_{\mathcal C_{l},-;\mathcal C_l,+} &
\mathsf a_{\mathcal C_{l},-;\mathcal C_{l-1},+} & {\text{\footnotesize$-\boldsymbol{\mathsf q}_{l}$}} \\
\end{block}
\end{blockarray}.
\end{gather}
The above notation means that the determinant is composed of matrix elements from columns $\boldsymbol{\mathsf p}_{l}$, $-\boldsymbol{\mathsf q}_{l-1}$ and rows $\boldsymbol{\mathsf p}_{l-1}$, $-\boldsymbol{\mathsf q}_{l}$ of $\mathsf a$'s written in the Fourier basis:
\begin{subequations}
\begin{alignat}{3}
\label{maA}
& \mathsf a_{\mathcal C_{l},-;\mathcal C_{l},+}(z,z')=\sum_{p,q\in\mathbb Z'_+}
\mathsf a_{\mathcal C_{l},-q;\mathcal C_{l},p}z^{-q-\frac12}z'^{-p-\frac12},\qquad && z,z'\in\mathcal C_l,&\\
\label{maB} &\mathsf a_{\mathcal C_{l},-;\mathcal C_{l-1},+}(z,z')=\sum_{p,q\in\mathbb Z'_+}
\mathsf a_{\mathcal C_{l},-q;\mathcal C_{l-1},-p}z^{-q-\frac12}z'^{p-\frac12},\qquad && z\in\mathcal C_{l},\quad z'\in\mathcal C_{l-1},&\\
\label{maC}&\mathsf a_{\mathcal C_{l-1},-;\mathcal C_{l},+}(z,z')=\sum_{p,q\in\mathbb Z'_+}
\mathsf a_{\mathcal C_{l-1},p;\mathcal C_{l},q}z^{p-\frac12}z'^{-q-\frac12},\qquad &&z\in\mathcal C_{l-1},\quad z'\in\mathcal C_l, &\\
\label{maD}&\mathsf a_{\mathcal C_{l-1},-;\mathcal C_{l-1},+}(z,z')=\sum_{p,q\in\mathbb Z'_+}
\mathsf a_{\mathcal C_{l-1},p;\mathcal C_{l-1},-q}z^{p-\frac12}z'^{q-\frac12},\qquad &&z,z'\in\mathcal C_{l-1}.&
\end{alignat}
\end{subequations}
Equivalents of the matrix elements appearing in the decompositions \eqref{maA} and \eqref{maD} were effectively calculated in the 4-point case:
\begin{gather}
\mathsf a_{\mathcal C_{l},-q,j;\mathcal C_{l},p,k}= -\frac{r^{[l]}_j}{r^{[l]}_k}\frac{\varphi_{-q}^{(j)}(\mathfrak S_{l-1},-\mathfrak S_l)
\bar\varphi_{p}^{(k)}(\mathfrak S_{l-1},-\mathfrak S_l) z_l^{q-\sigma_l^{(j)}+p+\sigma_l^{(k)}}}{q-\sigma_l^{(j)}+p+\sigma_l^{(k)}},\nonumber\\
\mathsf a_{\mathcal C_{l-1},p,j;\mathcal C_{l-1},-q,k}= -\frac{\bar r^{[l]}_j}{\bar r^{[l]}_k}
\frac{\varphi_{-p}^{(j)}(-\mathfrak S_l,\mathfrak S_{l-1}) \bar\varphi_{q}^{(k)}(-\mathfrak S_l,\mathfrak S_{l-1})
z_l^{-p-\sigma_{l-1}^{(j)}-q+\sigma_{l-1}^{(k)}}}{p+\sigma_{l-1}^{(j)}+q-\sigma_{l-1}^{(k)}} .\label{cael1}
\end{gather}
Here $\varphi_{-q}^{(j)}(\mathfrak S,-\mathfrak S')$, $\bar\varphi_{p}^{(k)}(\mathfrak S,-\mathfrak S')$ are defined by \eqref{lemapq2}. The matrix elements in~\eqref{maB} and \eqref{maC} may be obtained by a similar procedure:
\begin{gather}
\mathsf a_{\mathcal C_{l-1},p,j;\mathcal C_{l},q,k}= \frac{\bar r^{[l]}_j}{ r^{[l]}_k} \frac{\varphi_{-p}^{(j)}(-\mathfrak S_l,\mathfrak S_{l-1})
\bar\varphi_{q}^{(k)}(\mathfrak S_{l-1},-\mathfrak S_l) z_l^{-p-\sigma_{l-1}^{(j)}+q+\sigma_{l}^{(k)}}}{p+\sigma_{l-1}^{(j)}-q-\sigma_{l}^{(k)}}, \nonumber\\
\mathsf a_{\mathcal C_{l},-q,j;\mathcal C_{l-1},-p,k}= \frac{ r^{[l]}_j}{\bar r^{[l]}_k}\frac{\varphi_{-q}^{(j)}(\mathfrak S_{l-1},-\mathfrak S_l) \bar\varphi_{p}^{(k)} (-\mathfrak S_l,\mathfrak S_{l-1}) z_l^{q-\sigma_{l}^{(j)}-p+\sigma_{l-1}^{(k)}}}{q-\sigma_{l}^{(j)}-p+\sigma_{l-1}^{(k)}} .\label{cael2}
\end{gather}
Thus, up to an overall sign, the contributions $ Z^{\boldsymbol{\mathsf p}_{l-1},\boldsymbol{\mathsf q}_{l-1}}_{\boldsymbol{\mathsf p}_{l},\boldsymbol{\mathsf q}_{l}}\big(\mathcal T^{[l]}\big) $ of individual pairs of pants in~\eqref{eld} are given by Cauchy-type determinants:
\begin{gather}
Z^{\boldsymbol{\mathsf p}_{l-1},\boldsymbol{\mathsf q}_{l-1}}_{\boldsymbol{\mathsf p}_{l},\boldsymbol{\mathsf q}_{l}}\big(\mathcal T^{[l]}\big) =
\pm z_l^{\boldsymbol{\sigma}_l\cdot\boldsymbol{w}_l-\boldsymbol{\sigma}_{l-1}\cdot\boldsymbol{w}_{l-1}+\sum\limits_{q\in\boldsymbol{\mathsf q}_{l}} q+\sum\limits_{p\in\boldsymbol{\mathsf p}_{l}} p-\sum\limits_{q\in\boldsymbol{\mathsf q}_{l-1}}q-\sum\limits_{p\in\boldsymbol{\mathsf p}_{l-1}}p }
\nonumber\\
\hphantom{Z^{\boldsymbol{\mathsf p}_{l-1},\boldsymbol{\mathsf q}_{l-1}}_{\boldsymbol{\mathsf p}_{l},\boldsymbol{\mathsf q}_{l}}\big(\mathcal T^{[l]}\big)}{} \times
\prod_{k=1}^N\big(\bar r_k^{[l]}\big)^{w^{(k)}_{l-1}}\big( r_k^{[l]}\big)^{-w^{(k)}_{l}} Z^{\boldsymbol{\mathsf p}_{l-1},\boldsymbol{\mathsf q}_{l-1}}_{\boldsymbol{\mathsf p}_{l},\boldsymbol{\mathsf q}_{l}} (\mathfrak S_{l-1},\mathfrak S_l ),\label{detzpq}
\end{gather}
where
\begin{gather}
Z^{\boldsymbol{\mathsf p}',\boldsymbol{\mathsf q}'}_{\boldsymbol{\mathsf p},\boldsymbol{\mathsf q}}(\mathfrak S', \mathfrak S)
= \prod_{(p,j)\in\boldsymbol{\mathsf p}'} \varphi_{-p}^{(j)}(-\mathfrak S,\mathfrak S')
\prod_{(q,k)\in\boldsymbol{\mathsf q}'}\bar\varphi_{q}^{(k)}(-\mathfrak S,\mathfrak S')\nonumber\\
\hphantom{Z^{\boldsymbol{\mathsf p}',\boldsymbol{\mathsf q}'}_{\boldsymbol{\mathsf p},\boldsymbol{\mathsf q}}(\mathfrak S', \mathfrak S)=}{}
\times \prod_{(q,j)\in\boldsymbol{\mathsf q}} \varphi_{-q}^{(j)}(\mathfrak S',-\mathfrak S)
\prod_{(p,k)\in\boldsymbol{\mathsf p}} \bar\varphi_{p}^{(k)}(\mathfrak S',-\mathfrak S) \nonumber\\
\hphantom{Z^{\boldsymbol{\mathsf p}',\boldsymbol{\mathsf q}'}_{\boldsymbol{\mathsf p},\boldsymbol{\mathsf q}}(\mathfrak S', \mathfrak S)=}{} \times
\det
\begin{blockarray}{ccl}
{\text{\footnotesize$\boldsymbol{\mathsf p}$}} & {\text{\footnotesize$-\boldsymbol{\mathsf q}'$}} \vspace{0.2cm} \\
\begin{block}{(cc)l}
\displaystyle \frac{1}{p'+{\sigma'}^{(j)}-p-\sigma^{(k)}} &
\displaystyle \frac1{p'+{\sigma'}^{(j)}+q'-{\sigma'}^{(k)}} & {\text{\footnotesize$\boldsymbol{\mathsf p}'$}} \vspace{0.1cm} \\
\displaystyle \frac1{-q+\sigma^{(j)}-p-\sigma^{(k)}} &
\displaystyle \frac{1}{-q+\sigma^{(j)}+q'-{\sigma}^{(k)}} & {\text{\footnotesize$-\boldsymbol{\mathsf q}$}} \\
\end{block}
\end{blockarray},\label{detzpqnorm}
\end{gather}
and $\boldsymbol{w}_l\in\mathfrak{Q}_{N-1}$ denotes the charge vector of the $N$-tuple of Maya diagrams assigned to the circle $\mathcal C_l$. Substituting the expressions \eqref{detzpq} into the principal minor expansion, we obtain a~multivariate analog of the series representation \eqref{sertaufst} of the FST$_N$ tau function. The result is expressed in terms of $\boldsymbol{\sigma}_1,\ldots, \boldsymbol{\sigma}_{n-3}$ and $\boldsymbol{\beta}_1,\ldots, \boldsymbol{\beta}_{n-3}$, the latter parameters entering only via~\eqref{RbRres}.

\appendix

\section{AGT-W relation from isomonodromy}\label{appendixA}
We start with a brief review of semi-degenerate vertex operators of $W_N$-algebras and the corresponding conformal blocks (CBs), trying to highlight the general structures without entering into the details. The $W_N$-algebras are infinite-dimensional associative algebras generalizing Virasoro algebra which is $W_2$-algebra. The Verma modules of $W_N=W(\mathfrak{sl}_N)$ are labeled by the weights $\bs \sigma=\big(\sigma^{(1)},\ldots,\sigma^{(N)}\big)\in \mathbb{C}^N$, $\sum\limits_{\alpha=1}^N \sigma^{(\alpha)}=0$, of the Lie algebra~$\mathfrak{sl}_N$. These Verma modules are irreducible for generic $\bs\sigma$. We denote by $\mathcal{V}_{\bs \sigma}$ the irreducible Verma modules or irreducible quotients with the same highest weight vectors $ |\bs\sigma \ra$ of reducible Verma modules. Important examples of irreducible quotients are semi-degenerate $W_N$-modules $\mathcal{V}_{\bs \sigma}$ with $\bs\sigma=\Lambda \bs h_1$, where $\Lambda\in \mathbb{C}$ and $\bs h_1=\big(\frac{N-1}{N},-\frac1N,\ldots,-\frac1N\big)$ is the 1st fundamental weight of~$\mathfrak{sl}_N$.

In order to define CBs of the $W_N$-algebra on $\mathbb{CP}^1$, one has to fix~$n$ points $\bs z:=\{z_0\equiv0,z_1,\ldots,$ $z_{n-2},z_{n-1}\equiv\infty\}$ and attach a $W_N$-module $\mathcal{V}_{\bs \theta_k}$ to each of them. The $n$-point CBs are linear forms on ${\mathcal V}_{\bs \theta_0}\otimes \cdots \otimes {\mathcal V}_{\bs \theta_{n-1}}$ satisfying the so-called conformal Ward identities. In general, the space of such CBs is infinite-dimensional. For its description, it is useful to decompose $\mathbb{CP}^1\backslash{\bs z}$ into pairs of pants as above. With each pair of pants $\mathcal T^{[k]}$ ($k=1,\ldots,n-2$) is associated a space of $3$-point CBs if we fix $W_N$-modules $\mathcal{V}_{\bs \sigma_l}$ on cutting circles~$\mathcal{C}_l$ ($l=1,\ldots,n-3$). Such $3$-point CBs are linear forms on $\mathcal{V}_{\bs \sigma_k}\otimes \mathcal{V}_{\bs \theta_k} \otimes \mathcal{V}_{\bs \sigma_{k-1}}$, where $\bs\sigma_0=\bs\theta_0$ and $\bs\sigma_{n-2}=-\bs\theta_{n-1}$. For generic~${\bs \sigma_k}$, ${\bs \theta_k}$, ${\bs \sigma_{k-1}}$, this space of $3$-point CBs is also infinite-dimensional. However, in certain special cases, e.g., for semi-degenerate~$\mathcal{V}_{\bs \theta_k}$ (with $\bs \theta_k=\Lambda_k \bs h_1$), it becomes one-dimensional, similarly to the space of generic $3$-point Virasoro CBs.

Using the Shapovalov non-degenerate bilinear form on irreducible modules $\mathcal{V}_{\bs \sigma}$, one can rewrite linear forms on $\mathcal{V}_{\bs \sigma_k}\otimes \left|{\bs \theta_k}\right\ra \otimes \mathcal{V}_{\bs \sigma_{k-1}}$, where $|{\bs \theta_k}\ra $ is highest weight vector of $\mathcal{V}_{\bs \theta_k}$, as linear operators $V_{\bs \theta_k} (z_k)\colon \mathcal{V}_{\bs \sigma_{k-1}} \to \mathcal{V}_{\bs \sigma_{k}}$, which are called vertex operators. In the case of semi-degenerate $\bs \theta_k=\Lambda_k \bs h_1$, the space of such operators is one-dimensional, and they are uniquely fixed by their matrix elements between the highest weight vectors. E.g., for the central charge $c=N-1$, they read
\begin{gather}\label{VOnorm}
\big\la \bs \sigma_{k}\,|\, V_{\Lambda_k}(z_k)\,|\, \bs \sigma_{k-1}\big\ra=\mathcal{N}(\bs \sigma_{k}, \Lambda_k \bs h_1, \bs \sigma_{k-1})
z_k^{\frac12\big(\bs \sigma_{k}^2-\Lambda_k^2 \bs h_1^2-\bs \sigma_{k-1}^2\big)},
\end{gather}
where we used a shorthand notation $V_{\Lambda_k}(z_k)$ for the corresponding semi-degenerate vertex ope\-rator.

In \cite{AGT}, a relation between instanton partition functions and Virasoro CBs was conjectured. Later in~\cite{Wyl}, this proposal was extended to semi-degenerate CBs of $W_N$-algebras. The conjectured relation is as follows~\cite{FL3}: there exist bases of $W_N$-modules $\mathcal{V}_{\bs \sigma}$ labeled by $N$-tuples of Young diagrams $\bs Y=\big(Y^{(1)},\ldots,Y^{(N)}\big)\in\mathbb Y^N$ such that the matrix elements of semi-degenerate vertex operators are given by (for simplicity, we again restrict ourselves to $c=N-1$ which is the only case relevant to our purposes)
\begin{gather}\label{GTF}
F_{\bs Y,\bs Y'}(\bs\sigma,\Lambda,\bs\sigma'):=
\frac{ \la \bs\sigma, \bs Y\,|\,V_\Lambda(1)\,|\,\bs\sigma', \bs Y' \ra}
{ \la \bs\sigma\,|\,V_\Lambda(1)\,|\,\bs\sigma' \ra}=\prod_{\alpha,\beta=1}^N Z_{\mathsf{bif}}
\big({\sigma'}^{(\alpha)}- \sigma^{(\beta)} -\tfrac{\Lambda}{N}\,\big|\, {Y'}^{(\alpha)},Y^{(\beta)}\big),\!\!\!
\end{gather}
where
\begin{gather}\label{Zbif}
Z_{\mathsf{bif}}(\nu|Y',Y):=\prod_{\square\in Y'}\bigl(\nu+1+a_{Y'}(\square)+l_{Y}(\square)\bigr)
\prod_{\square\in Y}\bigl(\nu-1-a_{Y}(\square)-l_{Y'}(\square)\bigr),
\end{gather}
and $a_{Y}(\square)$, $l_{Y}(\square)$ denote the arm-length and leg-length of the box $\square$ with respect to $Y$. The basis $\{|\bs\sigma, \bs Y\ra\}_{\bs Y\in\mathbb Y^N}$ is orthogonal with respect to the Shapovalov form:
\begin{gather*}
 \la \bs\sigma, \bs Y\,|\,\bs\sigma', \bs Y' \ra= \delta_{\bs\sigma,\bs\sigma'} \delta_{\bs Y,\bs Y'}F_{\bs Y,\bs Y}(\bs\sigma,0,\bs\sigma),
\end{gather*}
corresponding to the matrix elements of the identity vertex operator $V_{\bs 0}(1)$. Therefore, the projection $\mathcal{P}_{\bs \sigma}$ to the $W_N$-module $\mathcal{V}_{\bs \sigma}$ is
\begin{gather*}
\mathcal{P}_{\bs \sigma}=\sum_{\bs Y}\frac{|\bs\sigma, \bs Y\ra \la \bs\sigma, \bs Y|}{\la \bs\sigma, \bs Y\,|\,\bs\sigma, \bs Y\ra}.
\end{gather*}
Combining these ingredients together, one comes to combinatorial expressions for semi-degene\-ra\-te CBs. For instance, the 4-point CBs are given by
\begin{gather*}
\frac{\la -\bs\theta_\infty\,|\,V_{\Lambda_3}(1)\mathcal{P}_{\bs\sigma} V_{\Lambda_2}(z)\,|\,\bs\theta_0 \ra}
{\la -\bs\theta_\infty\,|\,V_{\Lambda_3}(1) \,|\,{\bs\sigma} \ra \la{\bs\sigma}\,|\,V_{\Lambda_2}(1)|\bs\theta_0 \ra}
 = (1-z)^{-\frac{\Lambda_3 \Lambda_2}{N}} z^{\frac12\big(\bs\sigma^2-\Lambda_2^2 \bs h_1^2-\bs\theta_0^2\big)}
\nonumber \\
\hphantom{\frac{\la -\bs\theta_\infty\,|\,V_{\Lambda_3}(1)\mathcal{P}_{\bs\sigma} V_{\Lambda_2}(z)\,|\,\bs\theta_0 \ra}
{\la -\bs\theta_\infty\,|\,V_{\Lambda_3}(1) \,|\,{\bs\sigma} \ra \la{\bs\sigma}\,|\,V_{\Lambda_2}(1)|\bs\theta_0 \ra}=}{} \times\sum_{\bs Y\in\mathbb Y^N}
\frac {F_{\bs\varnothing, \bs Y}(-\bs\theta_\infty,\Lambda_3,\bs\sigma) F_{ \bs Y, \bs \varnothing}(\bs\sigma,\Lambda_2,\bs \theta_0)}{F_{\bs Y, \bs Y}(\bs\sigma,0,\bs\sigma)} z^{|\bs Y|}.
\end{gather*}

As was shown in our previous work \cite{GIL18} (generalizing the connection of Painlev\'e VI and Garnier system to Liouville CFT \cite{BSh, GIL12,ILTe}), the tau function $\tau(\boldsymbol{z})$ of the $n$-point semi-degenerate Fuchsian system can be represented as a Fourier transform of $n$-point semi-degenerate $W_N$ conformal blocks.
\begin{Theorem} The tau function $\tau(\boldsymbol{z})$ can be expressed in terms of semi-degenerate $W_N$ conformal blocks with central charge $c=N-1$ as
\begin{gather}\label{FFF}
\tau({\bs z})=
\sum_{\boldsymbol{w}_1,\ldots,\boldsymbol{w}_{n-3}\in \mathfrak Q_{N-1}}
{\rm e}^{\boldsymbol{\beta}_1\cdot\boldsymbol{w}_1+\dots+\boldsymbol{\beta}_{n-3}\cdot\boldsymbol{w}_{n-3}}
\mathcal{F} (\bs\sigma_1+\bs w_1, \ldots, \bs\sigma_{n-3}+\bs w_{n-3}), \\
\mathcal{F} (\bs\sigma_1, \ldots, \bs\sigma_{n-3})= \big\la {-}\bs \theta_{n-1}\,|\,
V_{\Lambda_{n-2}}(z_{n-2}) \mathcal{P}_{\boldsymbol{\sigma}_{n-3}}
V_{\Lambda_{n-3}}(z_{n-3})\mathcal{P}_{\boldsymbol{\sigma}_{n-4}} \cdots \mathcal{P}_{\boldsymbol{\sigma}_1}
V_{\Lambda_{1}}(z_{1})\,|\,\boldsymbol{\theta}_0\big\rangle,\nonumber
\end{gather}
where $(\bs\sigma_1,\ldots,\bs\sigma_{n-3})$ and $(\bs\beta_1,\ldots,\bs\beta_{n-3})$ are parameters fixing the initial conditions, directly related to monodromy parameterization in Proposition~{\rm \ref{propparam}}:
\begin{gather*}
\mathfrak{S}_l\to\operatorname{diag} \{\bs\sigma_l \}+\frac{\sum\limits_{j=1}^l\Lambda_j}{N} \mathds 1,\qquad
D_l=\operatorname{diag}\big\{{\rm e}^{\,\beta_l^{(1)}},\ldots,{\rm e}^{\,\beta_l^{(N)}}\big\}.
\end{gather*}
The vertex operators $V_\Lambda(z)$ are normalized by \eqref{VOnorm} with
\begin{gather}\label{Nnormcoef}
\cN(\bs\sigma,\Lambda\bs h_1,\bs\sigma')=
\frac{\prod\limits_{l,j} G\big(1+\sigma'^{(l)}-\sigma^{(j)}-\tfrac{\Lambda}{N}\big)}
{\prod\limits_{k<m}G\big(1+\sigma'^{(k)}-\sigma'^{(m)}\big)G\big(1-\sigma^{(k)}+\sigma^{(m)}\big)},
\end{gather}
where $G(x)$ denotes the Barnes $G$-function.
\end{Theorem}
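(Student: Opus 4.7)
The plan is to verify \eqref{FFF} by matching the series representation of $\tau(\bs z)$ obtained at the end of Section~6 with the AGT-W combinatorial expansion of the right-hand side, term-by-term in the Fourier variables $\bs\beta_1,\ldots,\bs\beta_{n-3}$. On the CFT side, I would insert the resolution of identity $\mathcal P_{\bs\sigma_l+\bs w_l}=\sum_{\bs Y}|\bs\sigma_l+\bs w_l,\bs Y\ra\la\bs\sigma_l+\bs w_l,\bs Y|/F_{\bs Y,\bs Y}$ in the AGT-W basis between every pair of adjacent vertex operators in the definition of $\mathcal F$. By \eqref{VOnorm}--\eqref{Zbif}, each three-point matrix element factorizes as a Nekrasov bifundamental product $\prod_{\alpha,\beta}Z_{\mathsf{bif}}$ times the normalization $\cN$ and the scalar factor $z_l^{\frac12(\bs\sigma_l^2-\Lambda_l^2\bs h_1^2-\bs\sigma_{l-1}^2)}$. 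The result is an iterated sum over $N$-tuples of Young diagrams $\bs Y_l$ attached to the channels, with the pants-by-pants multiplicative structure of \eqref{detAZZZ}.

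On the isomonodromic side, I would apply the standard bijection between $N$-tuples of balanced Maya diagrams $(\bs p_l,\bs q_l)$ of charge vector $\bs w_l\in\mathfrak Q_{N-1}$ and $N$-tuples of charged Young diagrams to rewrite the sum in \eqref{detAZZZ} as an outer sum over $(\bs w_1,\ldots,\bs w_{n-3})$ and inner sums over $(\bs Y_1,\ldots,\bs Y_{n-3})$. Via \eqref{RbRres}, the ratios $\prod_k\big(\bar r_k^{[l+1]}/r_k^{[l+1]}\big)^{w^{(k)}_l}$ extracted from \eqref{detzpq} supply the Fourier factors $e^{\bs\beta_l\cdot\bs w_l}$ together with Gamma-function residues that will be absorbed into the CFT normalization.

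The key combinatorial identity to establish is
\begin{gather*}
Z^{\bs p',\bs q'}_{\bs p,\bs q}(\mathfrak S',\mathfrak S)=\mathcal R_{\bs Y',\bs Y}(\mathfrak S',\mathfrak S)\prod_{\alpha,\beta=1}^{N}Z_{\mathsf{bif}}\big(\sigma'^{(\alpha)}-\sigma^{(\beta)}\,\big|\,Y'^{(\alpha)},Y^{(\beta)}\big),
\end{gather*}
where $\mathcal R$ is an explicit product over hook/arm/leg-length data that, once collected across channels and combined with the $\varphi_{-q}^{(j)},\bar\varphi_p^{(k)}$ prefactors in \eqref{detzpqnorm} and the Gamma-function residues from \eqref{RbRres}, yields precisely the Shapovalov denominators $F_{\bs Y_l,\bs Y_l}$ together with the Barnes $G$-function normalization \eqref{Nnormcoef} through the functional equation $G(x+1)=\Gamma(x)G(x)$. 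For $N=2$ this reduces to the identity already contained in \cite{GL16,CGL}; in higher rank it can be approached by expanding the Cauchy-type determinant of \eqref{detzpqnorm}, applying the classical Cauchy determinant formula, and regrouping the resulting products of linear factors via hook-length identities translating between the Maya and Young languages. This is the main obstacle of the proof.

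Matching the $z_l$-exponents between \eqref{detzpq} and \eqref{VOnorm} is then immediate from the constraints $\operatorname{Tr}\mathfrak S_k=\sum_{j=0}^k\operatorname{Tr}\Theta_j$ and the definition of the semi-degenerate vertex operators. Reassembling the pants contributions yields \eqref{FFF}. In combination with the CFT-side analysis of \cite{GIL18}, this chain of identities provides a direct proof of the AGT-W correspondence at $c=N-1$.
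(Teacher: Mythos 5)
There is a genuine problem with the logical direction of your argument. The paper does not prove this theorem by the term-by-term matching you describe: the statement is imported from the earlier work \cite{GIL18}, where it is established by CFT/vertex-operator arguments at $c=N-1$ (construction of the solution of the semi-degenerate Riemann--Hilbert problem from $W_N$ vertex operators and their monodromy), completely independently of any Nekrasov-type combinatorial formula. The computation in Appendix~\ref{appendixA} that resembles your plan runs in the opposite direction: it takes the theorem as given, matches the expansion \eqref{FFF} against the Fredholm-minor expansion \eqref{detAZZZ}--\eqref{detzpq}, and thereby \emph{derives} the combinatorial formula \eqref{GTF} for the matrix elements of semi-degenerate vertex operators, i.e.\ the AGT-W relation at $c=N-1$. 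Your very first step --- inserting $\mathcal P_{\bs\sigma_l+\bs w_l}$ in the AGT-W basis and invoking \eqref{VOnorm}--\eqref{Zbif}, \eqref{GTF} to factorize every three-point matrix element into a product of $Z_{\mathsf{bif}}$'s --- assumes precisely this conjectural input (the paper explicitly introduces \eqref{GTF} as ``the conjectured relation''), while the conformal blocks $\mathcal F$ in the statement are defined representation-theoretically, not combinatorially. So as a proof of the theorem your argument is circular within the paper's framework, unless you supply an independent proof of AGT-W for semi-degenerate $W_N$ blocks at $c=N-1$; your closing sentence, which simultaneously wants to use \cite{GIL18} to conclude AGT-W and to use AGT-W to prove the theorem, makes the circularity explicit.

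On the computational side, the skeleton you outline is essentially the appendix's calculation, so it is worth noting what the paper actually does with the step you call ``the main obstacle''. The identity expressing the Cauchy-type minors $Z^{\boldsymbol{\mathsf p}',\boldsymbol{\mathsf q}'}_{\boldsymbol{\mathsf p},\boldsymbol{\mathsf q}}(\mathfrak S',\mathfrak S)$ of \eqref{detzpqnorm} through products of $Z_{\mathsf{bif}}$'s is not re-derived by expanding the determinant and regrouping hook lengths; it is obtained by comparing with the charged-diagram expression $\tilde Z_{\mathsf{bif}}$ and invoking the rank-independent Theorems A.1 and A.5 of \cite{GL16} (the reduction \eqref{chY} to uncharged diagrams), after which the leftover momentum-independent factors $\delta\varphi^{(k)}$, $\delta\Gamma^{(k)}$ combine with the ratios of $r^{[l]}_k$, $\bar r^{[l]}_k$ from \eqref{RbRres} to produce exactly ${\rm e}^{\boldsymbol{\beta}_l\cdot\boldsymbol{w}_l}$, the Barnes factors \eqref{Nnormcoef}, and the $z_l$-exponents, with the $\frac{\Lambda}{N}$-shifts accounted for by the change from the $\operatorname{rk}A_k=1$ gauge to the traceless ``CFT gauge''. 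If you carried out this matching with $\mathcal F$ defined intrinsically (not via \eqref{GTF}), you would have reproduced the paper's actual content --- a proof of \eqref{GTF} assuming \eqref{FFF} --- but not a proof of the theorem itself.
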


We would like to compare this expression for $\tau(\boldsymbol{z})$ with the one coming from (\ref{topodec2})~-- namely, with the combinatorial expansion~(\ref{detAZZZ}) in terms of Cauchy determinants (\ref{detzpq}). It will be shown that the two expansions coincide termwise. Besides the check of consistency of the two approaches, this yields as a byproduct a direct proof of Nekrasov/AGT-W formulas for semi-degenerate conformal blocks of the $W_N$ algebra with $c=N-1$. We are going to proceed along the lines of \cite[Appendix~A]{GL16}, which deals with the $N=2$ case but contains several useful results independent of~$N$.

We will need an extension of the formula (\ref{Zbif}) for charged Young diagrams $Y,Y'\in\mathbb Y$ with charges $w,w'\in\mathbb{Z}$:
\begin{gather*}
\tilde Z_{\mathsf{bif}}(\nu\,|\,w',Y';w,Y)= \prod_i(-\nu)_{q_i'+\frac12} \prod_i
(\nu+1)_{q_i-\frac12} \prod_i(-\nu)_{p_i+\frac12} \prod_i(\nu+1)_{p_i'-\frac12} \\
\hphantom{\tilde Z_{\mathsf{bif}}(\nu\,|\,w',Y';w,Y)=}{} \times \frac{\prod\limits_{i,j}(\nu-q_i'-p_j)\prod\limits_{i,j}(\nu+p_i'+q_j)}{
\prod\limits_{i,j}(\nu-q_i'+q_j)\prod\limits_{i,j}(\nu+p_i'-p_j)},
\end{gather*}
where we used the Frobenius coordinates $({\mathsf p},{\mathsf q})$ of charged Young diagrams $(w,Y)$. It follows from Theorems~A.1 and~A.5 of~\cite{GL16} that $ \tilde Z_{\mathsf{bif}}(\nu\,|\,w',Y';w,Y)$ for charged Young diagrams can be reduced to an expression for non-charged ones:
\begin{gather}\label{chY}
\tilde Z_{\mathsf{bif}}(\nu\,|\,w',Y';w,Y)={ \pm }C(\nu\,|\,w',w) Z_{\mathsf{bif}}(\nu+w'-w\,|\,Y',Y).
\end{gather}
Here
\begin{gather*}
C(\nu\,|\,w',w)\equiv C(\nu\,|\,w'-w)=\frac{G(1+\nu+w'-w)}{G(1+\nu)\Gamma(1+\nu)^{w'-w}}.
\end{gather*}
The only property of $G(x)$ relevant for us is the recurrence relation $G(x+1)=\Gamma(x) G(x)$. In order to lighten the formulas, the calculations below will be carried out only up to an overall sign which can be controlled by considering a suitable limit of parameters, e.g., $\nu\to +{\rm i}\infty$ in~\eqref{chY}.

Our starting point is the expression (\ref{detzpqnorm}). Rewrite the Cauchy determinant therein in the factorized form and compare $Z^{\boldsymbol{\mathsf p}',\boldsymbol{\mathsf q}'}_{\boldsymbol{\mathsf p},\boldsymbol{\mathsf q}} (\mathfrak S', \mathfrak S)$ with the following Cauchy-like factorized expression:
\begin{gather*}
{\tilde{Z}}^{\bs Y', \bs w'}_{\bs Y, \bs w} (\mathfrak S', \mathfrak S)=
\prod_{j=1}^N \bigl|\tilde Z_{\mathsf{bif}}\big(0\,|\,w^{(j)},Y^{(j)};w^{(j)},Y^{(j)}\big)\bigr|^{-\frac12}
\bigl|\tilde Z_{\mathsf{bif}}\big(0\,|\,{w'}^{(j)},{Y'}^{(j)};{w'}^{(j)},{Y'}^{(j)}\big)\bigr|^{-\frac12}\\
{}\times
\frac{\prod\limits_{j,k=1}^N \tilde Z_{\mathsf{bif}}\big({\sigma'}^{(j)}-\sigma^{(k)} \,|\,{w'}^{(j)},{Y'}^{(j)};w^{(k)},Y^{(k)}\big)}{
\!\!\prod\limits_{1\le j<k \le N}\!\!\tilde Z_{\mathsf{bif}}\bigl({\sigma'}^{(j)}-{\sigma'}^{(k)}\,|\,{w'}^{(j)},{Y'}^{(j)};{w'}^{(k)},{Y'}^{(k)}\bigr)
{ \tilde Z_{\mathsf{bif}}\big({\sigma}^{(k)}-{\sigma}^{(j)}\,|\,w^{(k)},Y^{(k)};w^{(j)},Y^{(j)}\big)}}.
\end{gather*}
The result of comparison is
\begin{gather*}\label{zpqzyw}
Z^{\boldsymbol{\mathsf p}',\boldsymbol{\mathsf q}'}_{\boldsymbol{\mathsf p},\boldsymbol{\mathsf q}}
(\mathfrak S', \mathfrak S) =
{\tilde{Z}}^{\bs Y', \bs w'}_{\bs Y, \bs w} (\mathfrak S', \mathfrak S) \prod_{(p,j)\in\boldsymbol{\mathsf p}'}
\delta \varphi^{(j)}(-\mathfrak S,\mathfrak S')\\
\hphantom{Z^{\boldsymbol{\mathsf p}',\boldsymbol{\mathsf q}'}_{\boldsymbol{\mathsf p},\boldsymbol{\mathsf q}}
(\mathfrak S', \mathfrak S) =}{}\times \prod_{(q,k)\in\boldsymbol{\mathsf q}'}
\delta \bar\varphi^{(k)}(-\mathfrak S,\mathfrak S')
\prod_{(q,j)\in\boldsymbol{\mathsf q}} \delta \varphi^{(j)}(\mathfrak S',-\mathfrak S)
\prod_{(p,k)\in\boldsymbol{\mathsf p}} \delta\bar\varphi^{(k)}(\mathfrak S',-\mathfrak S),
\end{gather*}
where
\begin{gather}\label{dphi}
\delta \varphi^{(k)}(-\mathfrak S,\mathfrak S')=\big(\delta \bar\varphi^{(k)}(-\mathfrak S,\mathfrak S') \big)^{-1}=
\prod_{j=1}^{k-1} \big({\sigma'}^{(k)}-{\sigma'}^{(j)} \big)\prod_{j=1}^N \big(\sigma^{(j)}-{\sigma'}^{(k)} \big).
\end{gather}
Since these expressions are independent of momentum~$p$, it follows that
\begin{gather}\label{zpqzywred}
Z^{\boldsymbol{\mathsf p}',\boldsymbol{\mathsf q}'}_{\boldsymbol{\mathsf p},\boldsymbol{\mathsf q}}
(\mathfrak S', \mathfrak S) = {\tilde{Z}}^{\bs Y', \bs w'}_{\bs Y, \bs w} (\mathfrak S', \mathfrak S)\cdot
\prod_{k=1}^N
\big(\delta \varphi^{(k)}(-\mathfrak S,\mathfrak S')\big) ^{{w'}^{(k)}}
\big(\delta \varphi^{(k)}(\mathfrak S',-\mathfrak S) \big) ^{-{w}^{(k)}}.
\end{gather}

Using (\ref{chY}) we can rewrite
${\tilde{Z}}^{\bs Y', \bs w'}_{\bs Y, \bs w} (\mathfrak S', \mathfrak S)$ in terms of non-charged Young diagrams:
\begin{gather}
{\tilde{Z}}^{\bs Y', \bs w'}_{\bs Y, \bs w} (\mathfrak S', \mathfrak S) =
{\tilde{Z}}^{\bs Y'}_{\bs Y} (\bs \sigma'+ \bs w', \bs\sigma+ \bs w )
\frac{\cN(\bs\sigma+\bs w, 0,\bs\sigma'+\bs w')}{\cN(\bs\sigma,0,\bs\sigma')}\nonumber \\
\hphantom{{\tilde{Z}}^{\bs Y', \bs w'}_{\bs Y, \bs w} (\mathfrak S', \mathfrak S) =}{} \times \prod_{k=1}^N
\big(\delta \Gamma^{(k)}(-\mathfrak S,\mathfrak S')\big) ^{{w'}^{(k)}}
\big(\delta \Gamma^{(k)}(\mathfrak S',-\mathfrak S)\big)^{-{w}^{(k)}}.\label{Zhhat}
\end{gather}
Here the structure constants $\cN(\cdot)$ are defined in \eqref{Nnormcoef}, ${\tilde{Z}}^{\bs Y'}_{\bs Y} (\bs \sigma', \bs\sigma)$ is given by
\begin{gather}
{\tilde{Z}}^{\bs Y'}_{\bs Y} (\bs \sigma', \bs\sigma )=
\prod_{j=1}^N \bigl|Z_{\mathsf{bif}}\big(0\,|\,Y^{(j)},Y^{(j)}\big)\bigr|^{-\frac12}
\bigl|Z_{\mathsf{bif}}\big(0\,|\,{Y'}^{(j)},{Y'}^{(j)}\big)\bigr|^{-\frac12} \nonumber\\
\hphantom{{\tilde{Z}}^{\bs Y'}_{\bs Y} (\bs \sigma', \bs\sigma )=}{}
\times \frac{\prod\limits_{j,k=1}^N Z_{\mathsf{bif}}\big({\sigma'}^{(j)}-\sigma^{(k)}\,|\,{Y'}^{(j)},Y^{(k)}\big)}
{ \prod\limits_{j<k} Z_{\mathsf{bif}}\bigl({\sigma'}^{(j)}-{\sigma'}^{(k)} \,|\,
{Y'}^{(j)},{Y'}^{(k)}\bigr) { Z_{\mathsf{bif}}\big({\sigma}^{(k)}-{\sigma}^{(j)}\,|\, Y^{(k)},Y^{(j)}\big)}},
\end{gather}
and
\begin{gather}\label{dGamma}
\delta \Gamma^{(k)}(-\mathfrak S,\mathfrak S') =
\frac{\prod\limits_{j>k} \Gamma \big(1+{\sigma'}^{(k)}-{\sigma'}^{(j)} \big)}
{\prod\limits_{j<k} \Gamma \big(1+{\sigma'}^{(j)}-{\sigma'}^{(k)}\big) \prod\limits_j\Gamma \big( 1+{\sigma'}^{(k)}-{\sigma}^{(j)}\big)}\,.
\end{gather}

Observe that the products appearing in the right side of \eqref{zpqzywred} and \eqref{Zhhat} have the same structure as the product of $\bar r_k^{[l]}$ and $r_k^{[l]}$ in~\eqref{detzpq}. Combining these products together produces the Fourier exponents in~\eqref{FFF}. Indeed, collect the contributions of the form $\big(B_l^{(k)}\big)^{w_l^{(k)}}$ associated with the circle $\mathcal{C}_l$ coming from the different factors in~(\ref{detAZZZ}):
\begin{gather*} 
Z^{\boldsymbol{\mathsf p}_{l},\boldsymbol{\mathsf q}_{l}}_{\boldsymbol{\mathsf p}_{l+1},\boldsymbol{\mathsf q}_{l+1}}\big(\mathcal T^{[l+1]}\big)
Z^{\boldsymbol{\mathsf p}_{l-1},\boldsymbol{\mathsf q}_{l-1}}_{\boldsymbol{\mathsf p}_{l},\boldsymbol{\mathsf q}_{l}}\big(\mathcal T^{[l]}\big) \quad \Longrightarrow \quad \prod_{k=1}^N \big(B_l^{(k)}\big)^{w_l^{(k)}}.
\end{gather*}
Using (\ref{detzpq}), (\ref{dphi})--(\ref{dGamma}) and then (\ref{RbRres}), we obtain
\begin{gather*}
B_l^{(k)}=\frac{\bar r_k^{[l+1]}}{r_k^{[l]}}\cdot
\frac{\delta \varphi^{(k)}(-\mathfrak S_{l+1}, \mathfrak S_{l})}
{\delta \varphi^{(k)}(\mathfrak S_{l-1}, - \mathfrak S_{l})}\cdot
\frac{\delta \Gamma^{(k)}(-\mathfrak S_{l+1}, \mathfrak S_{l})}
{\delta \Gamma^{(k)}(\mathfrak S_{l-1}, - \mathfrak S_{l})}={\rm e}^{\beta_l^{(k)}}.
\end{gather*}

The dependence of (\ref{detzpq}) on the positions of semi-degenerate poles is concentrated in the factors
\begin{gather*}
z_l^{\boldsymbol{\sigma}_l\cdot\boldsymbol{w}_l-\boldsymbol{\sigma}_{l-1}\cdot\boldsymbol{w}_{l-1}+\sum\limits_{q\in\boldsymbol{\mathsf q}_{l}} q+\sum\limits_{p\in\boldsymbol{\mathsf p}_{l}} p-\sum\limits_{q\in\boldsymbol{\mathsf q}_{l-1}}q-\sum\limits_{p\in\boldsymbol{\mathsf p}_{l-1}}p }\\
\qquad{}=
z_l^{\frac12\big(({\bs \sigma}_{l}+{\bs w}_{l})^2-{\bs \sigma}_{l}^2-({\bs \sigma}_{l-1}+{\bs w}_{l-1})^2+{\bs \sigma}_{l-1}^2\big)-|\bs Y_{l-1}|+|\bs Y_{l}|},
\end{gather*}
where $|\bs Y|$ denotes the total number of cells in the $N$-tuple $\mathbf Y \in \mathbb Y^N$, cf., e.g., \cite[Fig.~13]{GL16}. Taking into account the product of 3-point tau functions in~(\ref{topodec2}), each summand of~(\ref{detAZZZ}) produces the following $\bs z$-dependent contribution to the expansion of~$\tau(\bs z)$:
\begin{gather*}
\prod_{l=1}^{n-2}
z_l^{\frac12\big(({\bs \sigma}_{l}+{\bs w}_{l})^2-({\bs \sigma}_{l-1}+{\bs w}_{l-1})^2+\Lambda_{l}^2\big)-|\bs Y_{l-1}|+|\bs Y_{l}| }.
\end{gather*}
The remaining factors ${\tilde{Z}}^{\bs Y'}_{\bs Y} (\bs \sigma', \bs\sigma)$, coming to $Z^{\boldsymbol{\mathsf p}',\boldsymbol{\mathsf q}'}_{\boldsymbol{\mathsf p},\boldsymbol{\mathsf q}}(\mathfrak S', \mathfrak S)$ from ${\tilde{Z}}^{\bs Y', \bs w'}_{\bs Y, \bs w} (\mathfrak S', \mathfrak S)$, reproduce the gauge theory prediction \eqref{GTF} for the matrix elements of semi-degenerate vertex operators. To identify the corresponding formulas, one should take into account that Cauchy matrices \eqref{cael1}--\eqref{cael2} arise when $\operatorname{rk}A_1=\dots=\operatorname{rk}A_{n-2} =1$ in~\eqref{ffs}. The transformation to the ``CFT gauge'' (where all $\operatorname{Tr}A_k=0$) shifts $\mathfrak S_k$'s by appropriate scalars, which effectively leads to $\frac \Lambda N$-translations in the formulas such as~\eqref{GTF} and~\eqref{Nnormcoef}.

\subsection*{Acknowledgements}
The work of P.G.\ was partially supported by the Russian Academic Excellence Project \smash{`5-100'} and by the RSF grant No.~16-11-10160. In particular, the results of Section~\ref{sec_fred} were obtained using support of Russian Science Foundation. P.G.\ is a Young Russian Mathematics award winner and would like to thank its sponsors and jury. N.I.\ was partially supported by the National Academy of Sciences of Ukraine (project No.~0117U000238), by the Program of Fundamental Research of the Department of Physics and Astronomy of the NAS of Ukraine (project No.~0117U000240), and by the ICTP-SEENET-MTP project NT-03: Cosmology -- Classical and Quantum Challenges.

\pdfbookmark[1]{References}{ref}
\LastPageEnding

\end{document}